\newtheorem{lem}{Lemma}
\newtheorem{thm}{Theorem}
\def\fa{\mathfrak{a}}
\def\<{\leqslant}           
\def\>{\geqslant}           
\def\div{\mathrm{div}}         
\def\ad{\mathrm{ad}}           
\def\d{\partial}
\def\wt{\widetilde}
\def\Re{\mathrm{Re\, }}   
\def\Im{\mathrm{Im\, }}   
\def\cH{\mathcal{H}}   
\def\mA{\mathbb{A}}    
\def\mZ{\mathbb{Z}}    
\def\mR{\mathbb{R}}    
\def\mC{\mathbb{C}}    
\def\Tr{\mathrm{Tr}}       
\def\rT{\mathrm{T}}        
\def\rF{\mathrm{F}}        
\def\sinc{\mathrm{sinc}}        
\def\bE{\mathbf{E}}    
\def\[[[{[\![\![}
\def\]]]{]\!]\!]}
\def\bra{{\langle}}
\def\ket{{\rangle}}
\def\Bra{\left\langle}
\def\Ket{\right\rangle}
\def\dbra{\langle\!\!\langle}
\def\dket{\rangle\!\!\rangle}
\def\re{\mathrm{e}}        
\def\rd{\mathrm{d}}        
\def\cL{\mathcal{L}}
\def\bR{\mathbf{R}}
\def\bJ{\mathbf{J}}
\def\x{\times}
\def\ox{\otimes}
\def\fA{\mathfrak{A}}
\def\fB{\mathfrak{B}}
\def\fF{\mathfrak{F}}
\def\fG{\mathfrak{G}}
\def\fH{\mathfrak{H}}
\def\cF{\mathcal{F}}
\def\cW{\mathcal{W}}
\def\cD{\mathcal{D}}
\def\cG{\mathcal{G}}
\def\cI{\mathcal{I}}
\def\cA{\mathcal{A}}
\def\cB{\mathcal{B}}
\def\cE{\mathcal{E}}
\def\cov{\mathbf{cov}}
\def\cS{\mathcal{S}}
\def\mH{\mathbb{H}}
\def\mS{\mathbb{S}}
\def\mZ{\mathbb{Z}}
\def\eps{\epsilon}
\def\Ups{\Upsilon}
\def\ups{\upsilon}
\def\rprod{\mathop{\overrightarrow{\prod}}}
\def\sn{|\!|\!|}
\begin{document}
\title[Quasi-characteristic functions
in quantum stochastic systems
]{Evolution of 
quasi-characteristic functions
in 
quantum stochastic systems with Weyl quantization of 
energy operators}


\author{Igor G. Vladimirov}
\address{UNSW Canberra, ACT 2600, Australia}
\email{igor.g.vladimirov@gmail.com}
\thanks{This work is supported by the Australian Research Council}

\subjclass[2010]{Primary:
81S22, 
81S25, 
81S30, 
81P16, 
81S05; 
secondary:
81Q15, 
35Q40, 
37M25. 
}

\keywords{Open quantum systems, canonical commutation relations, quantum stochastic differential equations, Weyl quantization, Wigner-Moyal approach, quasi-characteric functions, quasi-probability density functions, integro-differential equations.
}


\begin{abstract}
This paper considers open quantum systems whose dynamic variables  satisfy  ca\-nonical commutation relations and are governed by Markovian  Hudson-Par\-tha\-sarathy  quantum stochastic differential equations driven by external bosonic fields. The dependence of the Hamiltonian and the  system-field coupling operators on the system variables is represented using the Weyl functional calculus. This leads to an integro-differential equation (IDE) for the evolution of the quasi-characteristic function (QCF) which  encodes the dynamics of mixed moments of the system variables. Unlike quantum master equations  for reduced density operators,   this IDE involves only complex-valued functions on finite-dimensional Euclidean spaces and extends the Wigner-Moyal phase-space approach for quantum stochastic systems. The dynamics of the QCF and the related Wigner quasi-probability density function (QPDF) are discussed in more detail for the case when the coupling operators depend linearly on the system variables and the Hamiltonian has a nonquadratic part represented in the Weyl quantization form.
For this class of quantum stochastic systems, we also consider an approximate computation of invariant states and discuss the deviation from Gaussian quantum states in terms of the $\chi^2$-divergence (or the second-order Renyi relative entropy) applied to the QPDF.
The results of the paper may find applications to investigating different aspects of the moment stability, relaxation dynamics and invariant states in open quantum systems.
\end{abstract}

\maketitle

\thispagestyle{empty}




\section{Introduction}\label{sec:intro}

A wide class of open quantum systems, whose dynamics are affected by interaction with the environment and are described in terms of noncommutative operators on a Hilbert space evolving according to the laws of quantum mechanics, can be modelled by using the Hudson-Parthasarathy quantum stochastic calculus \cite{HP_1984,P_1992}; see also \cite{B_2010,H_1991}. This approach represents the external bosonic fields by annihilation and creation processes (which constitute a quantum mechanical counterpart to the classical Wiener process \cite{KS_1991}) and gauge processes associated with photon exchange between the fields. The continuous tensor product structure of the symmetric Fock space \cite{PS_1972}, which serves as a domain for the field operators,  and the role of the quantum Wiener process as an innovation process are important ingredients of a Markovian model of the system dynamics.
This model follows the Heisenberg picture of quantum dynamics
\cite{M_1998} in the form of quantum stochastic differential equations (QSDEs) for the system variables, which are driven by the field operator processes according to the energetics of the system-field interaction. This interaction is specified by the Hamiltonian, coupling and scattering operators which are (in general, nonlinear) functions of the system variables.

The fact, that the structure of QSDEs reflects the joint unitary evolution of the system and fields and  is dictated by the energy operators, underlies the interconnection rules for open quantum systems in quantum feedback networks \cite{GJ_2009} and is responsible for physical realizability constraints \cite{JNP_2008,SP_2012} in coherent quantum control and filtering problems \cite{MJ_2012,NJP_2009,VP_2013a,VP_2013b}. These problems are measurement-free versions of  the measurement-based control and filtering problems for quantum systems \cite{B_1983,DDJW_2006,EB_2005,GBS_2005,J_2005,YB_2009} and aim to achieve desired properties for (or extracting quantum information from) a given quantum system through its interconnection with another quantum system, which plays the role of a controller or observer and replaces the classical observation-actuation loop. Similarly to their measurement-based counterparts, the coherent quantum control and filtering problems employ performance criteria associated with the averaged behaviour of the resulting fully quantum systems. Such performance functionals are organised as quantum expectations of nonlinear (for example, positive definite quadratic or quadratic-exponential) functions of system variables which are subject to  minimization, thus reflecting a preference towards dissipativity of the quantum system with respect to external disturbances \cite{JG_2010,PUJ_2012,VP_2012b} in the spirit of the  Lyapunov stability and Willems dissipativity theories \cite{W_1972}.

Therefore, the above mentioned control and filtering problems employ generalized moments  which may involve nonlinear (but not necessarily polynomial) functions of the system variables.
These moments are completely specified by the mean vector and the quantum covariance matrix of the system variables in the case of Gaussian quantum states \cite{P_2010},  the class of which is invariant with respect to linear quantum dynamics \cite{JK_1998} of open quantum harmonic oscillators \cite{EB_2005,GZ_2004}. An example of tractable non-Gaussian moment dynamics is provided by quasi-linear quantum stochastic systems \cite{VP_2012c}. The generation of specific classes of Gaussian and non-Gaussian states in appropriately engineered  quantum systems (in particular, using quantum-optical components) and criteria for the existence and stability of invariant states are a subject of research \cite{MWPY_2014,PAMGUJ_2014,Y_2009,ZJ_2011}.

In general, the moments of the system variables are encoded in their quasi-characteristic function (QCF) \cite{CH_1971}, and it is the QCF evolution that is the main theme of the present paper. More precisely, we are concerned with open quantum systems whose dynamic variables satisfy canonical commutation relations (CCRs), similar to those of the position and momentum operators, and are governed by Markovian QSDEs with the identity scattering matrix. Furthermore, the dependence of the Hamiltonian and the system-field coupling operators on the system variables is represented by  using the Weyl functional calculus \cite{F_1989}. The Weyl quantization form of the energy operators allows a linear integro-differential equation (IDE) to be obtained for the evolution of the QCF, which  encodes the moment dynamics of the system. The resulting   IDE  is a quantum analogue of the corresponding equation for the characteristic functions of Markov diffusion processes, obtained in the classical case through the Fourier transform of the Fokker-Planck-Kolmogorov equation (FPKE) \cite{KS_1991,S_2008} into the spatial frequency domain.

Although this approach
to open quantum dynamics pertains to the Wigner-Moyal phase-space method \cite{H_2010,M_1949} of quasi-probability density functions  (QPDFs) \cite{GRS_2014,KS_2008} which are Fourier transforms of the QCFs, the contribution of the present study is in systematically combining, for this purpose, the structure of the Hudson-Parthasarathy QSDEs with the Weyl quantization model of the Hamiltonian and the coupling operators. The IDE, which  governs the evolution of the QCF, can be regarded as a spatial frequency domain representation of the master equations for reduced density operators  \cite{GZ_2004}, known for particular classes of quantum systems such as the open quantum harmonic oscillators mentioned above. Unlike the quantum master equations and similarly to the Moyal equations \cite{M_1949}, the IDE for the QCF involves only complex-valued functions on finite-dimensional Euclidean spaces and its analysis can be more convenient from the viewpoint of classical partial differential equations (PDEs) \cite{E_1998,V_1971}. We also mention recent extensions of the Moyal equations to different classes of open quantum systems in \cite{GRS_2015,MD_2015}.

As an illustration of the phase-space analysis in the quantum stochastic framework, we discuss a class \cite{SVP_2014} of open quantum systems with linear coupling to the external bosonic fields, in which case, the nonlinearity in the governing QSDE is caused by a nonquadratic part of the system Hamiltonian represented in the Weyl quantization form. For such a system, the QPDF satisfies an IDE consisting of an FPKE part (which corresponds to a linear SDE leading to Gaussian dynamics) and an integral operator term. This integral operator does not correspond to the ``jump'' part of a classical jump-diffusion process and can lead to negative values of the QPDF, 
which makes it qualitatively different  from usual PDFs and is considered to be a resource provided by quantum systems in comparison with their classical  counterparts \cite{VFGE_2012}. 
In the case of linear system-field coupling, we also discuss a dissipation relation for a weighted $L^2$-norm of the QCF, which is organised similarly to the norm in the Bessel potential space \cite{S_2008} and can be applied to obtaining upper bounds for the QPDF and its derivatives. Furthermore, we consider a perturbative  computation of the invariant state in phase space as a steady-state solution  of the IDEs for the QCF and QPDF  through the operator splitting \cite{Mar_1988,S_1968}. Also, for the case of linear system-field coupling, we discuss a dissipation relation  for the deviation of the system from Gaussian quantum states in terms of the $\chi^2$-divergence (or the second-order Renyi relative entropy \cite{R_1961}) of the QPDFs.

In addition to these examples, the results of the paper can be used for investigating the moment stability and the rate of convergence to invariant states in open quantum stochastic systems, as well as other aspects of the relaxation dynamics.
We omit some analytic details (such as regularity issues),  so that the present exposition  is fairly intuitive and maintains a ``physical'',  rather than ``mathematical'', level of rigour.

The paper is organised as follows. Section~\ref{sec:not} provides principal notation for convenience of reading. Section~\ref{sec:system} describes the class of open quantum systems being considered.
Section~\ref{sec:weyl} specifies the Weyl quantization of the Hamiltonian and the coupling operators and represents the governing QSDE in a similar form.
Section~\ref{sec:classlim} discusses the classical limit of this equation which corresponds to a commutative Markov diffusion process governed by a Hamiltonian  SDE with a canonical flow in the sense of \cite{G_1999,G_2015}.
Section~\ref{sec:quasi} revisits the generalized moments of system variables in terms of the QCFs and QPDFs.
Section~\ref{sec:evol} obtains the IDE which governs the evolution of QCF.
Section~\ref{sec:Hlin} discusses this equation together with a related IDE for the QPDF and a dissipation relation for the QCF for the class of systems with  linear system-field coupling.
Section~\ref{sec:inv} considers an approximate phase-space computation of the invariant state for such systems with a nonquadratic potential.
 Section~\ref{sec:chi2} applies the above results to the deviation of the system from Gaussian states in terms of the $\chi^2$-divergence of QPDFs. 
Section~\ref{sec:conc} provides concluding remarks.

\section{Notation}\label{sec:not}

The commutator of linear operators $A$ and $B$ is denoted by $[A,B]:= AB-BA$, with $\ad_A(\cdot):= [A,\cdot]$ being a linear superoperator associated with a fixed operator $A$.
This extends to the commutator $(n\x m)$-matrix  $
    [X,Y^{\rT}]\!
    :=\!
    XY^{\rT} - (YX^{\rT})^{\rT} \!=\! ([X_j,Y_k])_{1\< j\< n,1\< k\< m}
$ for a vector $X$ of operators $X_1, \ldots, X_n$ and a vector $Y$ of operators $Y_1, \ldots, Y_m$.  Vectors are organised as columns unless indicated otherwise,  and the transpose $(\cdot)^{\rT}$ acts on matrices of operators as if their entries were scalars.
In application to such matrices, $(\cdot)^{\dagger}:= ((\cdot)^{\#})^{\rT}$ denotes the transpose of the entrywise operator adjoint $(\cdot)^{\#}$, with  $(\cdot)^{\dagger}$ reducing to the usual complex conjugate transpose  $(\cdot)^*:= (\overline{(\cdot)})^{\rT}$ for complex matrices.
The subspaces of real symmetric, real antisymmetric and complex Hermitian matrices of order $n$ are denoted by $\mS_n$, $\mA_n$
 and
$
    \mH_n
    :=
    \mS_n + i \mA_n
$, respectively,  where $i:= \sqrt{-1}$ is the imaginary unit. The symmetrizer of a square matrix $M$ is defined by $\cS(M):= \frac{1}{2}(M+M^{\rT})$.
The real and imaginary parts of a complex matrix extend to matrices $M$ with operator-valued entries as $\Re M = \frac{1}{2}(M+M^{\#})$ and $\Im M = \frac{1}{2i}(M-M^{\#})$ which consist of self-adjoint operators. Positive (semi-) definiteness of matrices and the corresponding partial ordering  are denoted by ($\succcurlyeq$) $\succ$.  Also, $\mS_n^+$ and $\mH_n^+$ denote the sets of positive semi-definite real symmetric and complex Hermitian matrices of order $n$, respectively.
The tensor product of spaces or operators (in particular, the Kronecker product of matrices) is denoted by $\ox$. The identity matrix of order $n$ is denoted by $I_n$, while the identity operator on a linear space $H$ is denoted by $\cI_H$. 
Also, $\|v\|_K:= \sqrt{v^{\rT}Kv}$ denotes the (semi-) norm of a real vector $v$ associated with a real positive (semi-) definite symmetric matrix $K$. The Frobenius inner product of real or complex matrices is denoted by
$
    \bra M,N\ket_{\rF}
    :=
    \Tr(M^*N)
$ and generates the Frobenius norm $\|M\|_{\rF}:= \sqrt{\bra M,M\ket_{\rF}}$ which reduces to the standard Euclidean norm $|\cdot|$ for vectors.
At the same time, $\|\varphi\|_2:= \sqrt{\bra \varphi, \varphi\ket}$ denotes the norm in the Hilbert space $L^2(\mR^n)$ of square integrable complex-valued functions on $\mR^n$ with the inner product $\bra \varphi, \psi\ket:= \int_{\mR^n} \overline{\varphi(x)}\psi(x)\rd x$. The expectation $\bE \xi := \Tr(\rho \xi)$  of a quantum variable $\xi$ over a density operator $\rho$ extends entrywise to matrices of such variables. For vectors $X$ and $Y$ of quantum variables, $\cov(X,Y):= \bE (X Y^{\rT}) - \bE X \bE Y^{\rT}$ and $\cov(X):= \cov(X,X)$ denote the corresponding quantum covariance matrices. The ``rightwards'' ordered product of noncommutative variables $\xi_1, \ldots, \xi_n$ is denoted by $ \rprod_{k=1}^n \xi_k := \xi_1\x \ldots  \x \xi_n $. For a vector $X$ with entries $X_1,\ldots, X_n$ and an $n$-index $\alpha:=(\alpha_k)_{1\< k\< n}\in \mZ_+^n$ (where $\mZ_+$ denotes the set of nonnegative integers), use is made of the multiindex notation $X^{\alpha}:= \rprod_{k=1}^n X_k^{\alpha_k}$, $|\alpha|:= \alpha_1 + \ldots +\alpha_n$, $\alpha!:= \alpha_1! \x \ldots \x \alpha_n!$,  and $\d_u^{\alpha}:= \d_{u_1}^{\alpha_1}\ldots\d_{u_n}^{\alpha_n}$, where $\d_{u_1}, \ldots, \d_{u_n}$ are the partial derivatives with respect to independent real variables $u_1, \ldots, u_n$ comprising a vector $u:= (u_k)_{1\< k\< n}\in \mR^n$.
Use is also made the function $
    \sinc(z) := 
    \left\{{\small\begin{matrix}1 & {\rm if}\ z= 0\\
    \frac{\sin z}{z} & {\rm otherwise}\end{matrix}}\right.
$ (which is an entire even function of a complex variable). The divergence operator $\div(\cdot)$, when it is applied to an $\mR^{m\x n}$-valued function $M := (M_{jk})_{1\< j\< m, 1\< k\< n}$  on $\mR^n$ (with $m>1$), acts  in a row-wise fashion, with $\div M := \big(\sum_{k=1}^n \nabla_k M_{jk}\big)_{1\< j\< m}$ being an $\mR^m$-valued function, where $\nabla_k$ denotes the partial derivative with respect to the $k$th Cartesian coordinate.

\section{Open quantum stochastic systems}\label{sec:system}

We will be concerned with an open quantum stochastic system endowed with a vector $X:= (X_k)_{1\< k\< n}$ of dynamic variables $X_1,\ldots, X_n$. The  system variables are self-adjoint operators on an underlying complex separable Hilbert space $\cH$ which satisfy the Weyl CCRs
\begin{equation}
\label{CCR}
    \cW_{u+v} = \re^{i u^{\rT}\Theta v} \cW_u \cW_v
    =
     \re^{-i u^{\rT}\Theta v} \cW_v \cW_u
\end{equation}
for all $u,v\in \mR^n$, and hence,
$    [\cW_u, \cW_v]
    =
    -2i\sin(u^{\rT}\Theta v)\cW_{u+v}
$. Here, $\Theta:= (\theta_{jk})_{1\< j,k\< n}\in \mA_n$,
and use is made of the unitary Weyl operator
\begin{equation}
\label{cW}
  \cW_u := \re^{iu^{\rT} X}
\end{equation}
defined in terms of the self-adjoint operator $u^{\rT} X = \sum_{k=1}^n u_k X_k$ which is a linear combination of the system variables with real coefficients comprising the vector $u:=(u_k)_{1\< k\< n}$.  The Heisenberg infinitesimal form of the CCRs (\ref{CCR}) is
\begin{equation}
\label{Theta}
    [X, X^{\rT}]
     =
     2i \Theta \ox \cI_{\cH}
\end{equation}
on a dense domain in $\cH$. In what follows, the matrix $\Theta\ox \cI_{\cH}$ will be identified with $\Theta$.
Also, the dimension $n$ is assumed to be even, and the CCR matrix is given by
\begin{equation}
\label{ThetaJ}
    \Theta := \frac{1}{2}\bJ\ox I_{n/2},
    \qquad
    \bJ:=
    \begin{bmatrix}
        0 & 1\\
        -1 & 0
    \end{bmatrix}.
\end{equation}
This corresponds to the case when the vector $X$ is formed from $\frac{n}{2}$ conjugate pairs of the quantum mechanical position and momentum operators (with the units chosen so that the reduced Planck constant is $\hslash = 1$). However, the explicit form (\ref{ThetaJ}) of the CCR matrix  $\Theta$ will not be important, though the nonsingularity  $\det \Theta \ne 0$ will sometimes be used.
The vector $X$ of system variables evolves in time $t\> 0$ according to a particular yet important class of Markovian Hudson-Parthasarathy QSDEs \cite{HP_1984,P_1992} with the identity scattering matrix (which effectively eliminates from consideration the gauge processes associated with the photon exchange between the fields \cite{P_1992}):
\begin{equation}
\label{dX}
    \rd X =
    f\rd t + g\rd W,
\end{equation}
where the time arguments are omitted for brevity. The $n$-dimensional
drift vector $f$ and the dispersion $(n\x m)$-matrix $g$ of the QSDE (\ref{dX}) are given by
\begin{equation}
\label{fg}
    f := \cL(X) = i[h_0, X] + \cD(X),
    \qquad
    g := - i[X,h^{\rT}],
    \qquad
    h:=
    \begin{bmatrix}
        h_1\\
        \vdots\\
        h_m
    \end{bmatrix}.
\end{equation}
Here, $h_0$ is the system Hamiltonian and $h_1, \ldots, h_m$ are the system-field coupling operators. These are self-adjoint operators on the space $\cH$ which specify the energetics of the system and its interaction with the environment.
Furthermore, $\cL$ is the Gorini-Kossakowski-Sudar\-shan-Lindblad (GKSL) generator \cite{GKS_1976,L_1976}, which acts on a system operator $\xi$ as
\begin{equation}
\label{cL}
   \cL(\xi):= i[h_0,\xi] +\cD(\xi)
\end{equation}
and is evaluated entrywise at the vector $X$ in (\ref{fg}),
and $\cD$ is the decoherence superoperator given by
\begin{equation}
\label{cD}
    \cD(\xi)
    :=
    \frac{1}{2}
    \sum_{j,k=1}^m
    \omega_{jk}
    \big( [h_j,\xi]h_k + h_j[\xi,h_k]\big)
    =
    \frac{1}{2}
    \big(
        [h^{\rT},\xi]\Omega h  + h^{\rT}\Omega [\xi,h]
    \big).
\end{equation}
The QSDE (\ref{dX}) is driven by a vector $W:= (W_k)_{1\< k \< m}$ of quantum Wiener processes $W_1, \ldots, W_m$ which  are self-adjoint operators on a boson Fock space \cite{H_1991,P_1992}, modelling the external fields. Denoted by $\Omega := (\omega_{jk})_{1\< j,k\< m} \in \mH_m^+$ is the quantum Ito matrix of $W$:
\begin{equation}
\label{Omega}
    \rd W\rd W^{\rT} = \Omega \rd t.
\end{equation}
The dimension $m$ is also assumed to be even, and the entries of $W$ are linear combinations of the field annihilation $\fa_1, \ldots, \fa_{m/2}$ and creation $\fa_1^{\dagger}, \ldots, \fa_{m/2}^{\dagger}$ operator processes \cite{HP_1984,P_1992}:
$$
    W
    :=
    2
    \begin{bmatrix}
        \Re \fa\\
        \Im \fa
    \end{bmatrix}
    =
    \left(
    \begin{bmatrix}1 & 1\\ -i & i\end{bmatrix}
    \ox I_{m/2}
    \right)
    \begin{bmatrix}\fa \\ \fa^{\#}\end{bmatrix},
    \qquad
    \fa:=
    \begin{bmatrix}
        \fa_1\\
        \vdots\\
        \fa_{m/2}
    \end{bmatrix},
    \qquad
    \fa^{\#}:=
    \begin{bmatrix}
        \fa_1^{\dagger}\\
        \vdots\\
        \fa_{m/2}^{\dagger}
    \end{bmatrix},
$$
with the quantum Ito table
$$
    \rd \begin{bmatrix}\fa \\ \fa^{\#}\end{bmatrix}
    \rd \begin{bmatrix}\fa^{\dagger} & \fa^{\rT}\end{bmatrix}
    :=
    \begin{bmatrix}
        \rd\fa \rd \fa^{\dagger} & \rd \fa \rd \fa^{\rT}\\
        \rd\fa^{\#} \rd \fa^{\dagger} & \rd \fa^{\#} \rd \fa^{\rT}
    \end{bmatrix}
    =
    \left(
    \begin{bmatrix}
        1 & 0\\
        0 & 0
    \end{bmatrix}
    \ox I_{m/2}
    \right)
    \rd t.
$$
Accordingly, the Ito matrix $\Omega$ in (\ref{Omega}) is described by
\begin{equation}
\label{OmegaJ}
    \Omega = \left(\begin{bmatrix}1 & 1\\ -i & i\end{bmatrix} \begin{bmatrix}1 & 0\\ 0 & 0\end{bmatrix}
    \begin{bmatrix}1 & 1\\ -i & i\end{bmatrix}^*\right)\ox I_{m/2}=I_m + iJ,
    \qquad
        J :=
    \bJ \ox I_{m/2}.
\end{equation}
Similarly to the CCR matrix $\Theta$ in (\ref{Theta}) and (\ref{ThetaJ}), the matrix $J:= \Im \Omega \in \mA_m$ specifies the cross-commutations between the forward increments of the quantum Wiener processes $W_1, \ldots, W_m$ in the sense that
$
    [\rd W, \rd W^{\rT}] = 2iJ\rd t
$.
In accordance with the evolution (\ref{dX}), the system variables $X_1(t), \ldots, X_n(t)$ at any given time $t\> 0$  act effectively  on a tensor product Hilbert space $\cH_0\ox \cF_t$, where $\cH_0$ is the initial  complex separable Hilbert space of the system (for the action of the operators $X_1(0), \ldots, X_n(0)$), and $\cF_t$ is the Fock filtration.
The structure of the QSDE (\ref{dX}), specified by (\ref{fg})--(\ref{cD}), comes from the Heisenberg unitary evolution on the system-field  composite space $\cH:=\cH_0 \ox \cF$ described by the quantum stochastic flow
\begin{equation}
\label{uni}
    X(t)
    =
    U(t)^{\dagger} (X(0)\ox \cI_{\cF}) U(t),
\end{equation}
where 
the unitary operator $U(t)$ satisfies the initial condition $U(0) = \cI_{\cH}$ and is governed by a stochastic Schr\"{o}dinger equation
$$
    \rd U(t) = -\Big(i\big(h_0(0)\rd t + h(0)^{\rT} \rd W(t)\big) + \frac{1}{2}h(0)^{\rT}\Omega h(0)\rd t\Big)U(t).
$$
  The output field, which results from the interaction of the system with the input field, can be  represented in a similar form as
\begin{equation}
\label{Y}
    Y(t) = U(t)^{\dagger}(\cI_{\cH_0}\ox W(t))U(t)
\end{equation}
except that, with $U(t)$ depending on the past history of the system-field interaction, the right-hand side of (\ref{Y}) involves the current input field variables $W(t)$, which reflects the innovation nature of the quantum Wiener process supported by the continuous tensor product structure of the Fock space \cite{PS_1972}.
The unitary evolution in (\ref{uni}) and (\ref{Y}) preserves the CCRs (\ref{Theta}) and the commutativity between the system variables and output field variables in time $t\> 0$:
\begin{align*}
    [X(t),X(t)^{\rT}]
    & =
    U(t)^{\dagger}
    ([X(0), X(0)^{\rT}] \ox \cI_{\cF})
    U(t)
     =
    2i\Theta
        U(t)^{\dagger}\cI_{\cH_0 \ox \cF}
        U(t)=
    2i\Theta,\\
        [X(t),Y(t)^{\rT}] & = U(t)^{\dagger}[X(0) \ox \cI_{\cF},\cI_{\cH_0}\ox W(t)^{\rT}]U(t)
     =
    0,
\end{align*}
where the entries of $X(0)$ commute with those of $W(t)$ as operators on different spaces.
More general adapted processes $\xi$, which are functions of the system variables, are governed by QSDEs with the same structure as (\ref{dX})--(\ref{fg}):
\begin{equation}
\label{dzeta}
    \rd \xi= \cL(\xi)\rd t - i[\xi,h^{\rT} ]\rd W.
\end{equation}
This property is closely related to the Ito corrected version of the Leibniz product rule for the superoperator $\cD$ in (\ref{cD}) acting on quantum adapted processes $\xi$ and $\eta$:
\begin{align*}
    (\cD(\xi\eta) -\cD(\xi)\eta + \xi \cD(\eta))\rd t
    & =
    - \sum_{j,k=1}^m \omega_{jk}[\xi,h_j][\eta,h_k] \rd t
    =
    -[\xi,h^{\rT}]\Omega [\eta,h] \rd t  = \rd \xi \rd \eta.
\end{align*}

\section{Weyl quantization of the Hamiltonian and coupling operators}\label{sec:weyl}

For what follows, we assume that the system Hamiltonian $h_0$ and the system-field coupling operators $h_1, \ldots, h_m$ in (\ref{fg}) (as functions of the system variables $X_1, \ldots, X_n$) are  obtained from real-valued functions on $\mR^n$ through the  Weyl quantization \cite{F_1989}:
\begin{equation}
\label{hk}
    h_k := \int_{\mR^n} H_k(u)\cW_u \rd u,
    \qquad
    k = 0,1,\ldots, m,
\end{equation}
where $\cW_u$ is the Weyl operator from (\ref{cW}). The Fourier transforms $H_k: \mR^n \to \mC$ of the original classical functions are Hermitian (that is, $H_k(-u) = \overline{H_k(u)}$ for all $u \in \mR^n$), thus ensuring self-adjointness of  the operators $h_k$ in (\ref{hk}) since $\cW_u= \cW_{-u}^{\dagger}$. We assemble the functions $H_1, \ldots, H_m$ into a vector-valued map $H: \mR^n \to \mC^m$, in terms of which the vector $h$ of coupling operators in (\ref{fg}) is  expressed as
\begin{equation}\label{h}
    h = \int_{\mR^n} H(u)\cW_u\rd u,
    \qquad
    H:=
    \begin{bmatrix}
        H_1\\
        \vdots\\
        H_m
    \end{bmatrix}.
\end{equation}
Due to the unitarity of the Weyl operator $\cW_u$ for any $u \in \mR^n$, the integral in (\ref{hk}) can be understood as a Bochner integral  \cite{Y_1980} in the case when the function $H_k$ is absolutely integrable: $\int_{\mR^n} |H_k(u)|\rd u<+\infty$. However, the Fourier transforms $H_k$ can, in principle, be  generalized functions \cite{V_2002}, in which case, the integration in (\ref{hk}) is endowed with an appropriate  distributional meaning. This includes (but is not limited to) the class of polynomials $h_k$. For example,    suppose the Hamiltonian  is a quadratic function and the coupling operators are linear functions of the system variables:
\begin{align}
\label{bR}
    h_0 & := b^{\rT} X + \frac{1}{2} X^{\rT} R X,\\
\label{N}
    h & := NX,
\end{align}
where $b:= (b_j)_{1\< j\< n} \in \mR^n$, $R:= (r_{jk})_{1\< j,k\< n} \in \mS_n$ and $N \in \mR^{m\x n}$. These energy operators can be  represented in the form (\ref{hk}), (\ref{h}) with
\begin{align}
\label{H0lin}
    H_0(u)
    & =
    \sum_{j=1}^n
    \Big(
        ib_j \d_{u_j}\delta(u) - \frac{1}{2}\sum_{k=1}^n r_{jk}\d_{u_j}\d_{u_k}\delta(u)
    \Big)
     =
    ib^{\rT} \delta'(u)-\frac{1}{2}\Tr (R\delta''(u)),\\
\label{Hlin}
    H(u) & = iN\delta'(u),
\end{align}
where $\delta(\cdot)$ is the $n$-dimensional Dirac delta function   with the gradient $\delta'$ and the Hessian matrix $\delta''$. In this case, the system being considered is an $n$-dimensional  open  quantum  harmonic oscillator \cite{EB_2005,GZ_2004} governed by a linear QSDE
\begin{equation}
\label{dXlin}
    \rd X = (A X + 2\Theta b)\rd t + B\rd W,
\end{equation}
where the matrices of coefficients $A\in \mR^{n\x n}$ and $B\in \mR^{n\x m}$ are computed in terms of the matrices $R$ and $N$ from (\ref{bR}) and (\ref{N}) as
\begin{equation}
\label{AB}
    A:= 2\Theta (R + N^{\rT}JN) = 2\Theta R - \frac{1}{2}BJB^{\rT}\Theta^{-1},
    \qquad
    B:= 2\Theta N^{\rT},
\end{equation}
with the second representation of $A$ being valid if $\det \Theta \ne 0$.
The following lemma employs the Weyl quantization (\ref{hk}) in order to represent the drift vector $f$ and the dispersion matrix $g$ of the general QSDE (\ref{dX}) in a similar form.

\begin{lem}
\label{lem:WeylQSDE}
Suppose the Hamiltonian $h_0$ and the coupling operators $h_1, \ldots, h_m$ are given by (\ref{hk}). Then
the drift vector $f$ and the dispersion matrix $g$ of the QSDE (\ref{dX}) in (\ref{fg}) can also be represented in the Weyl quantization form:
\begin{equation}
\label{FG}
  f = \int_{\mR^n} F(u)\cW_u\rd u,
  \qquad
  g = \int_{\mR^n} G(u)\cW_u \rd u.
\end{equation}
Here, $F: \mR^n\to \mC^n$ and $G:\mR^n\to \mC^{n\x m}$ are Hermitian functions which are computed in terms of the Fourier transforms $H_0$ and $H$ from (\ref{hk}) and  (\ref{h}) and an auxiliary function $K: \mR^n \x \mR^n \to \mR^{m\x m}$ as
\begin{align}
\label{F}
    F(u) & := 2i \Theta
    \Big(
        H_0(u)u
        +
      \int_{\mR^n}
         v
         H(v)^{\rT}
         K(u,v)
         H(u-v)  \rd v
    \Big),\\
\label{K}
    K(u,v)
    & :=
    \Im
    \big(
    \re^{iu^{\rT}\Theta v}\Omega
    \big)
    =
    \sin(u^{\rT}\Theta v) I_m + \cos(u^{\rT}\Theta v) J,\\
\label{G}
    G(u) & := 2i\Theta uH(u)^{\rT},
\end{align}
where $\Theta$ is the CCR matrix of the system variables in (\ref{Theta}), and $\Omega$ is the Ito matrix of the quantum Wiener process from (\ref{OmegaJ}). \hfill$\square$
\end{lem}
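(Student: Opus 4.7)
The proof hinges on the single identity $[X, \cW_u] = -2\Theta u\, \cW_u$, which follows from the infinitesimal CCR (\ref{Theta}): since $[X_j, u^{\rT}X] = 2i(\Theta u)_j$ is a scalar and hence central, the standard identity $[A, \re^B] = [A,B]\re^B$ (valid when $[A,B]$ commutes with $B$) gives $[X_j, \cW_u] = i[X_j, u^{\rT}X]\cW_u = -2(\Theta u)_j\cW_u$. Substituting the Weyl expansion (\ref{h}) of $h$ into $g = -i[X, h^{\rT}]$ immediately yields $g = 2i\Theta\int_{\mR^n} u H(u)^{\rT}\cW_u\,\rd u$, establishing (\ref{G}). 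The same identity applied to $i[h_0,X]$ with (\ref{hk}) produces $2i\Theta\int_{\mR^n}u H_0(u)\cW_u\,\rd u$, which accounts for the Hamiltonian term $2i\Theta u H_0(u)$ in (\ref{F}).

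The remaining task is the decoherence contribution $\cD(X)$. Writing (\ref{cD}) componentwise, inserting the Weyl expansions of the $h_k$, and using $[X_j,\cW_p] = -2(\Theta p)_j\cW_p$ together with the composition rule $\cW_p\cW_q = \re^{-ip^{\rT}\Theta q}\cW_{p+q}$ from (\ref{CCR}), I can collapse $\cD(X)$ into a double integral of $\cW_{p+q}$ against a scalar kernel, with the $\omega_{jk}$ assembled into the quadratic form $H(p)^{\rT}\Omega H(q)$. The change of variables $u:=p+q$, $v:=p$ (with unit Jacobian), followed by the simplifications $v^{\rT}\Theta v = 0$ and $v^{\rT}\Theta u = -u^{\rT}\Theta v$, recasts this as $\int F^{(D)}(u)\cW_u\,\rd u$ with
\begin{equation*}
    F^{(D)}(u)
    =
    \Theta
    \int_{\mR^n}
    (2v-u)
    H(v)^{\rT}\Omega H(u-v)\,
    \re^{iu^{\rT}\Theta v}\,
    \rd v.
\end{equation*}

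The final step is a symmetrization under the involution $v\mapsto u-v$. Because $H(u-v)^{\rT}\Omega H(v) = H(v)^{\rT}\Omega^{\rT}H(u-v)$ as a scalar, while $\Theta(2v-u)$ changes sign, averaging the integrand with its image under this involution replaces $\Omega\re^{iu^{\rT}\Theta v}$ by $\tfrac{1}{2i}(\Omega\re^{iu^{\rT}\Theta v}-\Omega^{\rT}\re^{-iu^{\rT}\Theta v}) = \Im(\re^{iu^{\rT}\Theta v}\Omega)$; decomposing $\Omega = I_m + iJ$ from (\ref{OmegaJ}) gives precisely the $K(u,v)$ of (\ref{K}). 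The same involution, applied to the scalar $\int H(v)^{\rT}K(u,v)H(u-v)\,\rd v$, shows that it equals its own negative, since a direct computation gives $K(u,u-v)^{\rT} = -K(u,v)$; consequently the term proportional to $\Theta u$ drops out and one arrives at (\ref{F}). Hermiticity of $F$ and $G$ is then automatic from that of $H_0$ and $H$, the realness of $\Theta$, and $\Omega\in\mH_m^+$.

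The main obstacle I anticipate is bookkeeping: tracking the sign of the phase $\re^{\pm iu^{\rT}\Theta v}$ across the swap $v\leftrightarrow u-v$ is exactly what promotes $\Omega$ to its Hermitian imaginary part $K(u,v)$ and, simultaneously, forces the cancellation of the spurious $\Theta u$ contribution, so the symmetrization step must be carried out carefully enough that both features of (\ref{F})--(\ref{K}) emerge from the same antisymmetry.
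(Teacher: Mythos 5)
Your proof is correct, and for the dispersion matrix and the Hamiltonian term it is identical to the paper's: both rest on $[X,\cW_u]=-2\Theta u\,\cW_u$ (the paper derives this via the superoperator $\cE_u(\xi)=\cW_u\xi\cW_{-u}$ and $\cE_u(X)=X+2\Theta u$, you via $[A,\re^B]=[A,B]\re^B$ for central $[A,B]$ --- the same fact). The one genuine divergence is in the decoherence term. The paper computes only the one-sided quantity $-[X,h^{\rT}]\Omega h=2\Theta\int_{\mR^n}Q(u)\cW_u\rd u$ with $Q(u)=\int_{\mR^n}vH(v)^{\rT}\Omega H(u-v)\re^{iu^{\rT}\Theta v}\rd v$, then obtains $\cD(X)$ as its operator real part, so that $K$ emerges from $Q(u)+\overline{Q(-u)}$ via the Hermitian property $H(-v)=\overline{H(v)}$ and the reflection $u\mapsto-u$; in that grouping only the factor $v$ ever appears and no spurious term arises. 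You instead expand both summands of $\cD(X)$ at once, which produces the factor $(2v-u)$, and then symmetrize under the involution $v\mapsto u-v$ inside the convolution; this yields $K(u,v)$ without invoking Hermiticity of $H$, but obliges you to check separately that the $\Theta u$ contribution dies, which you do correctly via $K(u,u-v)^{\rT}=-K(u,v)$ (I verified that $\int_{\mR^n}H(v)^{\rT}K(u,v)H(u-v)\rd v$ is sent to its own negative by an involution that preserves it, hence vanishes). The two symmetrizations are equivalent and of comparable length; yours trades one use of the Hermitian hypothesis on $H$ for one extra cancellation lemma. The only soft spot is the closing claim that Hermiticity of $F$ is ``automatic'': it does hold, but it requires the substitution $v\mapsto-v$ together with $K(-u,-v)=K(u,v)$ and the realness of $K$, so a line of verification would not be out of place.
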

\begin{proof}
Associated with the Weyl operator $\cW_u$ in (\ref{cW}) is a unitary similarity transformation $\cE_u$ which acts on an operator $\xi$ on the Hilbert space $\cH$ as
\begin{equation}
\label{cE}
  \cE_u(\xi) := \cW_u \xi\cW_{-u} = \re^{i\ad_{u^{\rT} X}}(\xi),
\end{equation}
where 
use is made of a well-known identity for 
operator exponentials  \cite{M_1998,W_1967}. 
The commutator with the Weyl operator can be represented in terms of $\cE_u$ as
\begin{equation}
\label{cExi}
    [\cW_u, \xi]
    =
    (\cE_u(\xi) - \xi) \cW_u.
\end{equation}
Since the CCRs (\ref{Theta})  imply that
$
    [u^{\rT}X,X] = -[X,u^{\rT}X] =-[X,X^{\rT}]u = -2i\Theta u
$,
the entrywise application of the superoperator $\cE_u$ in (\ref{cE}) to the vector $X$ of system variables leads to
\begin{equation}
\label{cEuX}
    \cE_u(X)
    =
    X + 2\Theta u,
\end{equation}
and hence, in view of (\ref{cExi}),
\begin{equation}
\label{XX}
    [\cW_u, X]
    =
    (\cE_u(X)    - X)\cW_u = 2\Theta u \cW_u.
\end{equation}
The identity (\ref{cEuX}) is closely related to the property that $\cW_v$ is an eigenoperator of the superoperator $\cE_u$ with the eigenvalue $\re^{-2iu^{\rT}\Theta v}$ for any $u,v \in \mR^n$:
\begin{equation}
\label{eigop}
    \re^{iv^{\rT}\cE_u(X)}=  \cE_u(\cW_v) = \re^{2iv^{\rT}\Theta u}\cW_v,
\end{equation}
where  the first equality follows from $\cE_u$ being a similarity transformation, while the second equality is obtained from the Weyl CCRs in (\ref{CCR}) and the antisymmetry of the matrix $\Theta$. By combining (\ref{h}) and the bilinearity of the commutator with (\ref{XX}), it follows that the dispersion matrix $g$ in (\ref{fg}) takes the form
\begin{equation}
\label{g}
    g
    =
    -i
    \int_{\mR^n}
    [X,\cW_u]
    H(u)^{\rT}
    \rd u\\
    =
    2i
    \Theta
    \int_{\mR^n}
    u
        H(u)^{\rT}
            \cW_u
    \rd u,
\end{equation}
which establishes the second representation in (\ref{FG}), where $G$ is given by  (\ref{G}). The function $G$ inherits  the Hermitian property ($G(-u) = \overline{G(u)}$  for all $u \in \mR^n$) from $H$. We will now obtain the first equality in (\ref{FG}).
To this end, the term $i[h_0, X]$ of the drift vector $f$ in (\ref{fg}), associated with the internal dynamics (which the system would have in isolation from its environment), can be  represented as
\begin{equation}
\label{h0X}
    i[h_0, X]
    =
    i
    \int_{\mR^n}
    H_0(u)
    [\cW_u, X]
    \rd u
    =
    2i
    \Theta
    \int_{\mR^n}
        H_0(u)
        u
            \cW_u
    \rd u.
\end{equation}
In order to compute the GKSL decoherence term $\cD(X)$ of the drift vector $f$ according to (\ref{cD}), a combination of (\ref{h}) with (\ref{g}) leads to
\begin{align}
\nonumber
    \sum_{j,k=1}^m
    \omega_{jk}
    [h_j, X]h_k
    & =
    -[X, h^{\rT}]\Omega h
=
    -ig\Omega h\\
\nonumber
    & =
    2
    \Theta
    \int_{\mR^n}
    u
        H(u)^{\rT}
            \cW_u
    \rd u
    \Omega
    \int_{\mR^n}
    H(v)
    \cW_v
    \rd v
     =
    2
    \Theta
    \int_{\mR^n\x \mR^n}
    u
        H(u)^{\rT}
        \Omega
        H(v)
            \cW_u
            \cW_v
    \rd u\rd v\\
\label{hXh1}
    & =
    2
    \Theta
    \int_{\mR^n\x \mR^n}
    u
        H(u)^{\rT}
        \Omega
        H(v)
        \re^{-iu^{\rT}\Theta v}
            \cW_{u+v}
    \rd u\rd v
     =
    2
    \Theta
    \int_{\mR^n}
    Q(u)
    \cW_u
    \rd u,
\end{align}
where  $Q: \mR^n\to \mC^n$ is an auxiliary function defined by
\begin{equation}
  \label{Q}
    Q(u)  :=
    \int_{\mR^n}
    v H(v)^{\rT}\Omega H(u-v) \re^{iu^{\rT}\Theta v}\rd v.
\end{equation}
In (\ref{hXh1}), the Weyl CCRs (\ref{CCR}) are combined with a standard measure-preserving transformation of integration variables $(u,v)\mapsto (w,u)$ in convolutions, with $w:= u+v$, along with the relations $u^{\rT}\Theta v = u^{\rT}\Theta (w-u) =  u^{\rT}\Theta w= - w^{\rT}\Theta u$ following from the antisymmetry of the CCR matrix $\Theta$. Due to self-adjointness of the system variables, the superoperator $\cD$ in (\ref{cD}) can be evaluated at the vector $X$ by taking the operator real part of (\ref{hXh1}) as
\begin{equation}
\label{cDX}
    \cD(X)
     =
    -\Re\big( [X,h^{\rT}]\Omega h\big)
      =
     \Theta
     \Big(
    \int_{\mR^n}
    Q(u)
    \cW_u
    \rd u
    +
    \Big(
    \int_{\mR^n}
    Q(u)
    \cW_u
    \rd u
    \Big)^{\#}
 \Big)
  =
 \Theta
    \int_{\mR^n}
    \big(
        Q(u) + \overline{Q(-u)}
    \big)
    \cW_u
    \rd u.
\end{equation}
The Hermitian property of the function $H$ implies that the function $Q$ in (\ref{Q}) satisfies  the identity
\begin{align*}
    \overline{Q(-u)}
    & =
    \int_{\mR^n}
    v H(-v)^{\rT}\overline{\Omega} H(u+v) \re^{iu^{\rT}\Theta v}\rd v
     =
    -
    \int_{\mR^n}
    v H(v)^{\rT}\overline{\Omega} H(u-v) \re^{-iu^{\rT}\Theta v}\rd v,
\end{align*}
and hence,
\begin{equation}
\label{QQ}
    Q(u) + \overline{Q(-u)}
     =
    \int_{\mR^n}
    v H(v)^{\rT}
    \big(
    \re^{iu^{\rT}\Theta v}\Omega
    -
    \re^{-iu^{\rT}\Theta v}
    \overline{\Omega}
    \big)H(u-v) \rd v
     =
    2i
    \int_{\mR^n}
    v H(v)^{\rT}
    K(u,v)
    H(u-v) \rd v,
\end{equation}
where the matrix-valued function $K$ is given by (\ref{K}). 
It now remains to substitute (\ref{QQ}) into (\ref{cDX}) and assemble the resulting decoherence  term   and the internal dynamics term from (\ref{h0X})  into the drift vector $f$ in (\ref{fg}):
$$
    f
     =
    \Theta
    \int_{\mR^n}
    \left(2iH_0(u)u + Q(u) + \overline{Q(-u)}\right)
    \cW_u
    \rd u
     =
    2i
    \Theta
    \int_{\mR^n}
    \Big(
        H_0(u)u
        +
    \int_{\mR^n}
       v
       H(v)^{\rT}
       K(u,v)
       H(u-v)  \rd v
    \Big)
    \cW_u
    \rd u.
$$
 This establishes the first of the equalities in (\ref{FG}), where the function $F$ is given by (\ref{F}) and inherits the Hermitian property from $H_0$ and $H$ in view of the relation $K(-u,v)^{\rT} = -K(u,v)$, thus completing the proof of the lemma.
\end{proof}

Lemma~\ref{lem:WeylQSDE} allows the right-hand side of the QSDE (\ref{dX}) to be decomposed into a ``linear combination'' of  the Weyl operators $\cW_u$, which depend on time through $X$ and play the role of spatial harmonics with different ``wavevectors'' $u \in \mR^n$:
\begin{equation}
\label{dXWeyl}
    \rd X = \int_{\mR^n}(F(u)\rd t + G(u)\rd W)\cW_u\rd u.
\end{equation}
The coefficients of this combination are driven by the quantum Wiener process $W$. We have also used the commutativity between adapted processes and future-pointing  Ito increments of $W$.

\section{Classical limit of the governing QSDE}\label{sec:classlim}

The QSDE (\ref{dX}), whose drift and dispersion are computed in Lemma~\ref{lem:WeylQSDE}, can be related to its classical counterpart by taking into account the reduced Planck constant $\hslash$ as a small parameter. To this end, let the CCR matrix $\Theta$ in (\ref{Theta}) and the matrix $J$ in (\ref{OmegaJ}) be given by
\begin{equation}
\label{Planck}
    \Theta = \frac{\hslash}{2}\Xi ,
    \qquad
    J = \frac{\hslash}{2} \Ups,
\end{equation}
where $\Xi \in \mA_n$ and $\Ups \in \mA_m$ are fixed symplectic structure matrices (for example, $\Xi  = \bJ\ox I_{n/2}$ in accordance with (\ref{ThetaJ})). The internal dynamics, decoherence and dispersion terms in (\ref{fg}) and (\ref{cD}) are appropriately rescaled:
\begin{equation}
\label{dXh}
    \rd X =
    f_{\hslash}
    \rd t
    +
    g_{\hslash}\rd W,
    \qquad
    f_{\hslash}
    :=
    \frac{1}{\hslash}
    \Big(
        i[h_0, X] + \frac{1}{\hslash}\cD(X)
    \Big),
    \qquad
    g_{\hslash}
    :=
    -
    \frac{i}{\hslash}
    [X,h^{\rT}].
\end{equation}
The scaling of the decoherence superoperator $\cD$ comes from its quadratic dependence on the coupling operators $h_1, \ldots, h_m$.
By letting $\hslash\to 0$ in (\ref{Planck}), the drift vector $f_{\hslash}$ and the dispersion matrix $g_{\hslash}$ of the QSDE  (\ref{dXh}) have formal  classical limits
\begin{equation}
\label{f0g0}
    f_0 := \int_{\mR^n} F_0(u)\cW_u\rd u,
  \qquad
  g_0 := \int_{\mR^n} G_0(u)\cW_u\rd u
\end{equation}
which are the inverse Fourier transforms of the corresponding limits of the appropriately rescaled functions $F$ and $G$ from (\ref{F}) and (\ref{G}):
\begin{equation}
\label{F0_G0}
    F_0(u)
     :=
    i\Xi
    \Big(
        H_0(u)u
        +
        \frac{1}{2}
      \int_{\mR^n}
         v
         H(v)^{\rT}
         \left(
             u^{\rT}\Xi  v I_m + \Ups
         \right)
         H(u-v)\rd v
    \Big),
    \qquad
    G_0(u)  := i\Xi  uH(u)^{\rT},
\end{equation}
with $\cW_u$ in (\ref{f0g0}) being the usual exponential function $\re^{iu^{\rT}X}$.
The functions $F_0$ and $G_0$ are the Fourier transforms of the functions
\begin{equation}
\label{fg0}
    f_0
    =
    \Xi  \Big(h_0' + \frac{1}{2}h'^{\rT}\Ups h -\frac{1}{2}\div (h'^{\rT}h' \Xi)\Big)
    =
    \Xi  \Big(h_0' + \frac{1}{2}h'^{\rT}\Ups h\Big) +\frac{1}{2}\div (g_0g_0^{\rT}),
    \qquad
    g_0=\Xi  h'^{\rT},
\end{equation}
where $h_0': \mR^n \to \mR^n$ is the gradient of the classical Hamiltonian $h_0$, and $h': \mR^n \to \mR^{m\x n}$ is the Jacobian matrix of the vector $h$ of classical coupling functions $h_1, \ldots, h_m$ whose Weyl quantization is used in Section~\ref{sec:weyl}. Therefore, the formal classical limit of the QSDE (\ref{dXh}) is an Ito SDE
\begin{equation}
\label{dX0}
    \rd X = f_0(X)\rd t + g_0(X) \rd W
\end{equation}
for a Markov diffusion process $X$
in $\mR^n$ driven by an $m$-dimensional  standard Wiener process with the identity Ito matrix.
If the noncommutativity of the quantum Wiener process $W$ in (\ref{dXh}) is made vanish faster than that of the system variables in the sense that $J= o(\hslash)$ as $\hslash\to 0$, then $\Ups=0$ and the term $\frac{1}{2}h'^{\rT}\Ups h$ disappears from the drift vector $f_0$ in (\ref{fg0}). In this case,  the limit SDE (\ref{dX0}) takes the form
\begin{equation}
\label{can}
    \rd X = \Xi  \Big(\Big(h_0' -\frac{1}{2}\div (h'^{\rT}h' \Xi)\Big)\rd t + h'^{\rT}\rd W\Big).
\end{equation}
This describes a classical stochastic Hamiltonian system with a canonical flow \cite{G_1999,G_2015} which commutes with the Poisson bracket $\{\varphi,\psi\}:= \varphi'^{\rT} \Xi \psi'$  in the sense that
\begin{equation}
\label{can0}
    \rd \{\varphi,\psi\} = \{\rd \varphi,\psi\} + \{\varphi,\rd \psi\} + \{\rd \varphi,\rd \psi\}
\end{equation}
for any smooth functions $\varphi$ and $\psi$ on $\mR^n$. Here, $\{\varphi,\psi\}$ on the left-hand side and $\varphi$, $\psi$ on the right-hand side are evaluated  at $X$ as a time-varying random function of the initial value  $x:= X_0\in \mR^n$. Also,  the differential operators act over $x$, and $\rd \varphi 
=  \cG(\varphi)\rd t + \varphi'^{\rT}g_0\rd W$, where $\cG$ is the infinitesimal generator of $X$ which maps $\varphi$ 
to  $\cG(\varphi):= f_0^{\rT}\varphi'  + \frac{1}{2}\bra g_0g_0^{\rT}, \varphi''\ket_{\rF}$. Indeed, the columns of the dispersion matrix $g_0$ in (\ref{fg0}) are Hamiltonian (and hence, divergenceless) vector fields $\Xi h_1', \ldots, \Xi h_m'$ in $\mR^n$. Therefore, $\div(g_0g_0^{\rT}) = -\Xi\div (h'^{\rT}h' \Xi)=\Xi \sum_{k=1}^m h_k'' \Xi h_k'$, which implies the canonicity of the SDE (\ref{can}) in view of the results of \cite{G_1999,G_2015}. However, the Weyl quantization framework, employed in the present paper, allows the canonical property to be established directly in the spatial frequency domain. More precisely, it suffices to verify (\ref{can0}) for the exponential functions $\varphi(x):= \re^{iu^{\rT}x}$ and $\psi(x):= \re^{iv^{\rT}x}$ with arbitrary   $u,v\in \mR^n$:
\begin{equation}
\label{WW}
    \rd \{\cW_u, \cW_v\} = \{\rd \cW_u, \cW_v\} + \{\cW_u, \rd \cW_v\} + \{\rd \cW_u, \rd \cW_v\}.
\end{equation}
In accordance with the the Doleans-Dade stochastic exponential \cite{DD_1970,LS_2001}, the Ito differential of $\cW_u$ can be computed
as
\begin{equation}
\label{stoc}
    \rd \cW_u
     =
    u^{\rT}
    \Big(\Big(if_0 - \frac{1}{2}g_0g_0^{\rT}u\Big)\rd t + ig_0\rd W\Big) \cW_u
     =
    u^{\rT}
    \int_{\mR^n}
    \Big(
    \Big(
        iF_0(s)
        -
        \frac{1}{2}
        C(s)
        u
    \Big)
    \rd t
    +
    iG_0(s)
    \rd W
    \Big)
    \cW_{s+u}
    \rd s.
\end{equation}
Here, use is made of the quadratic variation  $(u^{\rT}\rd X)^2 = u^{\rT}g_0g_0^{\rT}u\rd t =
|g_0^{\rT}u|^2\rd t$ of the process $u^{\rT}X$ together with (\ref{f0g0}), and the convolution
\begin{equation}
\label{C}
    C(s) :=         (G_0* G_0^{\rT})(s)
    =
    \int_{\mR^n}
        G_0(r)G_0(s-r)^{\rT}
        \rd r
        =
        \Xi
        \int_{\mR^n}
        rH(r)^{\rT}
        H(s-r)
        (s-r)^{\rT}
        \rd r
        \Xi
\end{equation}
is the Fourier transform of the diffusion matrix map $g_0g_0^{\rT}: \mR^n \to \mS_n^+$ for the SDE (\ref{can}) in view of (\ref{F0_G0}). A combination of the Poisson bracket
$
    \{\re^{iu^{\rT}x}, \re^{iv^{\rT}x}\} = -u^{\rT}\Xi v \re^{i(u+v)^{\rT}x}
$ with  (\ref{stoc}) leads to
\begin{align}
\nonumber
    \rd \{\cW_u, \cW_v\}
    & =
    -u^{\rT}\Xi v
    \rd \cW_{u+v}\\
\label{dWW0}
    & =
    -u^{\rT}\Xi v
            (u+v)^{\rT}
    \int_{\mR^n}
    \Big(
    \Big(
        iF_0(s)
        -
        \frac{1}{2}
        C(s)
        (u+v)
    \Big)
    \rd t
    +
    iG_0(s)
    \rd W
    \Big)
    \cW_{s+u+v}
    \rd s,\\
\nonumber
    \{\rd \cW_u, \cW_v\}
    & =
    u^{\rT}
    \int_{\mR^n}
    \Big(
    \Big(
        iF_0(s)
        -
        \frac{1}{2}
        C(s)
        u
    \Big)
    \rd t
    +
    iG_0(s)
    \rd W
    \Big)
    \{\cW_{s+u}, \cW_v\}
    \rd s\\
\label{dWW1}
    & =
    -u^{\rT}
    \int_{\mR^n}
    \Big(
    \Big(
        iF_0(s)
        -
        \frac{1}{2}
        C(s)
        u
    \Big)
    \rd t
    +
    iG_0(s)
    \rd W
    \Big)
    (s+u)^{\rT}\Xi v
    \cW_{s+u+v}
    \rd s,\\
\nonumber
    \{\cW_u, \rd \cW_v\}
    & =
    v^{\rT}
    \int_{\mR^n}
    \Big(
    \Big(
        iF_0(s)
        -
        \frac{1}{2}
        C(s)
        v
    \Big)
    \rd t
    +
    iG_0(s)
    \rd W
    \Big)
    \{\cW_u,  \cW_{s+v}\}
    \rd s\\
\label{dWW2}
    & =
    -v^{\rT}
    \int_{\mR^n}
    \Big(
    \Big(
        iF_0(s)
        -
        \frac{1}{2}
        C(s)
        v
    \Big)
    \rd t
    +
    iG_0(s)
    \rd W
    \Big)
    u^{\rT}\Xi (s+v)
    \cW_{s+u+v}
    \rd s,\\
\nonumber
    \{\rd \cW_u, \rd \cW_v\}
    & =
    -
    u^{\rT}
    \int_{\mR^n\x \mR^n}
    G_0(r)G_0(s)^{\rT}
    \{
        \cW_{r+u},
        \cW_{s+v}
    \}
    \rd r \rd s\,
    v\rd t\\
\nonumber
    & =
    u^{\rT}
    \int_{\mR^n\x \mR^n}
    G_0(r)G_0(s)^{\rT}
    (r+u)^{\rT}
    \Xi
    (s+v)
    \cW_{r+s+u+v}
    \rd r\rd s\,
    v\rd t\\
\label{dWW3}
    & =
    u^{\rT}
    \int_{\mR^n}
    \Xi
    \Big(
    \int_{\mR^n}
    r
    H(r)^{\rT}
    H(s-r)
    (s-r)^{\rT}
    \Xi
    (r+u)^{\rT}
    \Xi
    (s-r+v)
    \rd r
    \Big)
    \cW_{s+u+v}\rd s\,
    v\rd t.
\end{align}
By assembling the drift and diffusion terms in (\ref{dWW0})--(\ref{dWW3}) and considering the Fourier coefficients, it follows that (\ref{WW}) is equivalent to the fulfillment of the relations
\begin{align}
\label{can1}
    0
     =&
    (s^{\rT} \Xi v u^{\rT}
    + u^{\rT} \Xi s v^{\rT})G_0(s), \\
\nonumber
    0
     = &
    i(s^{\rT} \Xi v u^{\rT}
    + u^{\rT} \Xi s v^{\rT})F_0(s)
     +
    \frac{1}{2}
    \Bra
        uv^{\rT} \Xi su^{\rT} + vs^{\rT} \Xi uv^{\rT}+2uu^{\rT} \Xi vv^{\rT},
        C(s)
    \Ket_{\rF}\\
\label{can2}
    & -
    u^{\rT}
    \Xi
    \int_{\mR^n}
    r
    H(r)^{\rT}H(s-r)
    (s-r)^{\rT}
    \Xi
    v\,
    (r+u)^{\rT}
    \Xi
    (s-r+v)
    \rd r
\end{align}
for all $s,u,v\in \mR^n$, where use is also made of the symmetry of the matrix $C(s)$ in (\ref{C}). Now, the functions $F_0$, $G_0$ in (\ref{F0_G0}) indeed satisfy (\ref{can1}), (\ref{can2}) because $(s^{\rT} \Xi v u^{\rT}
    + u^{\rT} \Xi s v^{\rT})\Xi s = (s^{\rT} \Xi v + v^{\rT} \Xi s)u^{\rT} \Xi s = 0$ in view of the antisymmetry of $\Xi$, and
\begin{align*}
    -(s^{\rT} \Xi v u^{\rT}
    &+ u^{\rT} \Xi s v^{\rT})
        \Xi
      \int_{\mR^n}
         r
         H(r)^{\rT}
         H(s-r)
         s^{\rT}\Xi  r\rd r\\
     & +
    \Bra
        uv^{\rT} \Xi su^{\rT} + vs^{\rT} \Xi uv^{\rT}+2uu^{\rT} \Xi vv^{\rT},\,
                \Xi
        \int_{\mR^n}
        rH(r)^{\rT}
        H(s-r)
        (s-r)^{\rT}
        \rd r
        \Xi
    \Ket_{\rF}\\
    & -
    2u^{\rT}
    \Xi
    \int_{\mR^n}
    r
    H(r)^{\rT}H(s-r)
    (s-r)^{\rT}
    \Xi
    v\,
    (r+u)^{\rT}
    \Xi
    (s-r+v)
    \rd r
    =
    \int_{\mR^n}
    \phi(r,s,u,v)
    H(r)^{\rT}H(s-r)
    \rd r = 0.
\end{align*}
The last integral vanishes since $H(r)^{\rT}H(s-r)$ is invariant, while the following function is antisymmetric, under the transformation $r\mapsto s-r$:
\begin{align*}
    \phi(r,s,u,v)
    := &
    -(s^{\rT}\Xi vu^{\rT}+u^{\rT}\Xi sv^{\rT})\Xi rs^{\rT}\Xi r\\
        & +
        (s-r)^{\rT}\Xi (uv^{\rT}\Xi su^{\rT}+ vs^{\rT}\Xi uv^{\rT}-2vv^{\rT}\Xi uu^{\rT})
        \Xi r\\
    & -
    2u^{\rT}\Xi r(s-r)^{\rT}\Xi v(r+u)^{\rT}\Xi (s-r+v)
    =
    -\phi(s-r,s,u,v).
\end{align*}

\section{Quasi-characteristic function and 
generalized moments}\label{sec:quasi}

We now return to the quantum system. Its averaged behaviour can be described in terms of the quasi-characteristic function (QCF) \cite{CH_1971,M_1949} defined by
\begin{equation}
\label{char}
    \Phi(t,u):= \bE \cW_u(t) = \overline{\Phi(t,-u)},
    \qquad
    t\> 0, \ u \in \mR^n,
\end{equation}
where $\cW_u(t)$ is the Weyl operator (\ref{cW}) at time $t$. We assume that the quantum expectation is over the density operator $\rho := \varpi \ox \ups$, where $\varpi$ is the initial quantum state of the system, and $\ups$ is the vacuum state of the external bosonic fields \cite{P_1992}.
The QCF $\Phi: \mR_+ \x \mR^n \to \mC$ encodes the mixed moments of the system variables:
\begin{equation}
\label{mom}
    \bE(X(t)^{\alpha})
    =
    (-i\d_u)^{\alpha}
    \bE     \rprod_{k=1}^n
    \re^{iu_k X_k(t)}\Big|_{u=0}
    =
    (-i\d_u)^{\alpha}
    \Big(
        \Phi(t,u) \re^{-\frac{i}{2}u^{\rT}\wt{\Theta}u}
    \Big)
    \Big|_{u = 0}
\end{equation}
for any $n$-index  $\alpha:= (\alpha_k)_{1\< k\< n}\in \mZ_+^n$ such that $\Phi$ is $|\alpha|$ times continuously differentiable  with respect to $u:= (u_k)_{1\< k\< n}\in \mR^n$. Here, $\wt{\Theta}:= (\wt{\theta}_{jk})_{1\< j,k\< n} \in \mS_n$ is an auxiliary matrix which is defined by
\begin{equation}
\label{thetat}
    \wt{\theta}_{jk}
    :=
    \left\{
        \begin{matrix}
            0 & {\rm if} & j = k\\
            \theta_{jk} & {\rm if} & j < k\\
            \theta_{kj} & {\rm if} & j > k
        \end{matrix}
    \right.
\end{equation}
and inherits the upper off-diagonal entries of the CCR matrix $\Theta$.
Since $\wt{\Theta}$ has zero main diagonal (and hence, $\Tr \wt{\Theta} = 0$), this matrix is indefinite.
For example, if $\Theta$ is given by (\ref{ThetaJ}), then
\begin{equation}
\label{exThetat}
    \wt{\Theta}
    =
    \frac{1}{2}
    \begin{bmatrix}
        0 & 1\\
        1 & 0
    \end{bmatrix}\ox I_{n/2}
\end{equation}
is an indefinite matrix with the eigenvalues $\pm \frac{1}{2}$ of multiplicity $\frac{n}{2}$.
The relation (\ref{mom}) is obtained by averaging the identity
\begin{equation}
\label{Baker}
    \cW_u
      =
    \rprod_{k=1}^n
    \Big(
    \cW_{u_k e_k}
    \re^{i\sum_{j=1}^{k-1}u_j e_j^{\rT}\Theta e_ku_k }
    \Big)
        =
    \re^{i\sum_{1\< j< k \< n}\theta_{jk} u_ju_k}
    \rprod_{k=1}^n
    \cW_{u_ke_k}
    =
    \re^{\frac{i}{2}u^{\rT}\wt{\Theta}u}
    \rprod_{k=1}^n
    \re^{iu_k X_k}
\end{equation}
which is established by repeatedly using the Weyl CCRs (\ref{CCR}) in combination with the bilinearity of the commutator, where
$e_k$ denotes the $k$th standard basis vector in $\mR^n$, so that $\cW_{u_k e_k} = \re^{iu_k X_k}$.
Alternatively, (\ref{Baker}) can be obtained by repeated application of the Baker-Campbell-Hausdorff formula $\re^{\xi+\eta} = \re^{\xi} \re^{\eta}
\re^{-\frac{1}{2}[\xi,\eta]} $ for operators $\xi$ and $\eta$ which commute with their commutator \cite
{GZ_2004,W_1967}.
In what follows, we will also employ the Wigner quasi-probability density function (QPDF) $\mho: \mR_+\x \mR^n \to \mR$ which is the Fourier transform of the QCF $\Phi$ in (\ref{char}):
\begin{equation}
\label{Wig}
    \mho(t,x):= (2\pi)^{-n}\int_{\mR^n} \Phi(t,u)\re^{-iu^{\rT}x}\rd u,
    \qquad
    t\> 0, \ x \in \mR^n.
\end{equation}
The function $\mho$ is real-valued due to the Hermitian property of $\Phi$ and satisfies the normalization condition
\begin{equation}
\label{one}
    \int_{\mR^n}\mho(t,x)\rd x = \Phi(t,0)=1,
\end{equation}
thus resembling a PDF in $\mR^n$ despite not necessarily being nonnegative everywhere \cite{H_1974}. If $\mho$ is negative on a subset of nonzero  Lebesgue measure in $\mR^n$, the QCF $\Phi$ is not positive definite in view of the Bochner-Khinchin  theorem.
Although the exponential moments in (\ref{char}) are related to the QPDF $\mho$ in (\ref{Wig}) by the inverse Fourier transform
\begin{equation}
\label{Phimho}
    \Phi(t,u)
    =
    \int_{\mR^n} \mho(t,x)\re^{iu^{\rT}x}\rd x,
\end{equation}
similarly to the classical case, the mixed moments of the system variables in (\ref{mom}) are expressed in terms of $\mho$ in a more complicated fashion as
\begin{equation}
\label{momWig}
    \bE(X(t)^{\alpha})
    =
    (-i\d_u)^{\alpha}
    \Big(
        \re^{-\frac{i}{2}u^{\rT}\wt{\Theta}u}
        \int_{\mR^n} \mho(t,x)\re^{iu^{\rT}x}\rd x
    \Big)
    \Big|_{u = 0}
    =
    \int_{\mR^n}
    \Psi_{\alpha}(x)
    \mho(t,x)
    \rd x.
\end{equation}
Here, for any $n$-index $\alpha\in \mZ_+^n$, the function $\Psi_{\alpha}: \mR^n\to \mC$ is a polynomial of degree $|\alpha|$ defined by
\begin{equation}
\label{Psi}
    \Psi_{\alpha}(x)
    :=
(-i\d_u)^{\alpha}
    \re^{i(u^{\rT}x - \frac{1}{2}u^{\rT}\wt{\Theta}u)}\Big|_{u=0}
    =
\d_u^{\alpha}
    \re^{u^{\rT}x + \frac{i}{2}u^{\rT}\wt{\Theta}u}\Big|_{u=0}
    =
    (-1)^{|\alpha|}
    \re^{-\frac{i}{2}z^{\rT}\wt{\Theta}z}
    \d_z^{\alpha}
    \re^{\frac{i}{2}z^{\rT}\wt{\Theta}z}
    \Big|_{z=i\wt{\Theta}^{-1} x},
\end{equation}
provided $\det \wt{\Theta}\ne 0$,
with $\re^{- \frac{i}{2}u^{\rT}\wt{\Theta}u}$
playing the role of a quantum correcting factor in comparison with moments of classical random variables. In the classical limit (when the system variables $X_1(t), \ldots, X_n(t)$ commute with each other, and hence, $\Theta=\wt{\Theta} = 0$ in view of (\ref{thetat}), and the function $\mho(t,\cdot)$ in (\ref{Wig}) becomes their usual joint PDF), the polynomial $\Psi_{\alpha}(x)$ reduces to the monomial  $x^{\alpha}$ in accordance with (\ref{momWig}). In the noncommutative case being considered, the polynomials $\Psi_{\alpha}$ in (\ref{Psi}) have the generating function
\begin{equation}
\label{polgen}
    \sum_{\alpha \in \mZ_+^n}
    \Psi_{\alpha}(x)\frac{u^{\alpha}}{\alpha !}
    =
    \re^{u^{\rT}x + \frac{i}{2}u^{\rT}\wt{\Theta}u}
\end{equation}
and can be regarded as a quantum mechanical modification of the multivariate Hermite polynomials. Although (\ref{polgen}) resembles the generating function representation of the standard Hermite polynomials  \cite{M_1997}, the qualitative difference is that  the matrix $\wt{\Theta}$ is indefinite (see also the example (\ref{exThetat})). The polynomials $\Psi_{\alpha}$ play a role for more general moments of the system variables. More precisely, \begin{align}
\nonumber
    \d_u^{\alpha} \cW_u
    & =
    \d_v^{\alpha} \cW_{u+v}\Big|_{v=0}
    =
    \d_v^{\alpha}
    \big(
        \re^{-iu^{\rT}\Theta v}
        \cW_v
    \big)\Big|_{v=0}\cW_u
=
    \d_v^{\alpha}
    \Big(
        \re^{iv^{\rT}\Theta u + \frac{i}{2}v^{\rT} \wt{\Theta} v}
        \rprod_{k=1}^n
        \re^{iv_k X_k}
    \Big)
    \Big|_{v=0}
    \cW_u\\
\nonumber
    & =
    \sum_{\beta\in \mZ_+^n:\ \beta \< \alpha}
    \begin{pmatrix}
        \alpha\\
        \beta
    \end{pmatrix}
    \d_v^{\alpha-\beta}
    \re^{iv^{\rT}\Theta u + \frac{i}{2}v^{\rT} \wt{\Theta} v}\Big|_{v=0}
    \d_v^{\beta}
            \rprod_{k=1}^n
        \re^{iv_k X_k}\Big|_{v=0}
        \cW_u\\
\label{XXW}
    & =
    \alpha!
    \sum_{\beta\in \mZ_+^n:\ \beta \< \alpha}
    \frac{i^{|\beta|}}{\beta!(\alpha-\beta)!}
    \Psi_{\alpha-\beta}(i\Theta u)
    X^{\beta}
    \cW_u,
\end{align}
which follows from the Weyl CCRs (\ref{CCR}), the factorization (\ref{Baker}) and the Leibniz product  rule combined with (\ref{Psi}), with the multiindex inequality $\beta \< \alpha$ being understood entrywise. The averaging of both parts of (\ref{XXW}) leads to
\begin{equation}
\label{PhiXX}
  \d_u^{\alpha} \Phi
  =
  \alpha!
  \sum_{\beta\in \mZ_+^n:\ \beta \< \alpha}
  \frac{i^{|\beta|}}{\beta!(\alpha-\beta)!}
  \Psi_{\alpha-\beta}(i\Theta u)
  \bE(X^{\beta} \cW_u),
\end{equation}
which is another moment identity for the QCF $\Phi$ involving
quasi-polynomials of the system variables, thus extending (\ref{mom}). In particular, by considering (\ref{XXW}) and (\ref{PhiXX}) for those  $n$ multiindices $\alpha \in \mZ_+^n$ which satisfy $|\alpha|=1$ (and can be represented as $\alpha = e_j$ for $j = 1, \ldots, n$), it follows that
\begin{equation}
\label{XW}
    \d_u \cW_u =
    \d_v \re^{iv^{\rT}(X+\Theta u)}\Big|_{v=0}     \cW_u
    =
    i(X+\Theta u) \cW_u.
\end{equation}
Hence, the gradient $\d_u\Phi$ and the Hessian matrix $\d_u^2 \Phi$ of the QCF $\Phi$ with respect to $u$ satisfy the identities
\begin{equation}
\label{PhiX}
    \d_u \Phi = i(\bE(X\cW_u) + \Phi\Theta u),
    \qquad
    \d_u^2 \Phi = i\Phi \Theta -\bE\big((X+\Theta u)(X+\Theta u)^{\rT}\cW_u\big),
\end{equation}
whereby the mean vector $\mu$ of the system variables and the real part $\Sigma$ of their quantum covariance matrix
\begin{equation}
\label{covX}
    \cov(X)
    =
    \bE(XX^{\rT}) - \mu\mu^{\rT} = \Sigma +i\Theta
\end{equation}
can be represented in terms of the QPDF $\mho$ by the same relations as for classical random variables:
\begin{align}
\label{mu}
    \mu(t) & :=
    \bE X(t)
    =-i\d_u\Phi(t,u)
    =
    \int_{\mR^n}
    x \mho(t,x)\rd x,\\
\label{Sigma}
    \Sigma(t)
    & :=
    \Re \cov(X(t))
    =-\d_u^2\Phi(t,u)-\mu\mu^{\rT}
    =
    \int_{\mR^n}
    xx^{\rT} \mho(t,x)\rd x
    -\mu\mu^{\rT}.
\end{align}
Furthermore, the QCF $\Phi$ in (\ref{char})
can be used for evaluating generalized moments of the system variables involving nonlinear (but not necessarily polynomial) functions in the Weyl quantization form:
\begin{equation}
\label{genmom}
    \bE \int_{\mR^n}\sigma(u) \cW_u(t)\rd u = \int_{\mR^n}\sigma(u) \Phi(t,u)\rd u,
\end{equation}
where $\sigma: \mR^n\to \mC$ is a given function which specifies the moment under consideration (and can be a generalized function as discussed before). For example,  the moments of the form (\ref{genmom}) drive the mean vector of the system variables:
\begin{equation}
\label{mudot}
    \dot{\mu}(t)
    =
    \int_{\mR^n}
    F(u)\Phi(t,u)\rd u,
\end{equation}
which is obtained by averaging the QSDE (\ref{dXWeyl}), with the martingale part $g\rd W$ not contributing to the time derivative of the average.
The ODE (\ref{mudot}) is not algebraically closed since, in general, the mean vector $\mu$ does not specify the QCF $\Phi$ uniquely.

\section{Integro-differential equation for the quasi-characteristic function}\label{sec:evol}

In contrast to the dynamics of the system variables described by the nonlinear QSDE (\ref{dX}),  the time evolution of the QCF $\Phi$ is governed by a linear equation.

\begin{thm}
\label{th:Phidot}
Suppose the system Hamiltonian $h_0$ and the system-field coupling operators $h_1, \ldots, h_m$ are given by (\ref{hk}). Then the QCF $\Phi$ of the system variables, defined by (\ref{char}), satisfies a linear IDE
\begin{equation}
\label{Phidot}
    \d_t\Phi(t,u)
    =
    \int_{\mR^n}
    V(u,s)\Phi(t,u+s)
    \rd s.
\end{equation}
The kernel function $V: \mR^n \x \mR^n\to \mC$  of the integral operator is computed in terms of the functions $F$, $G$ from (\ref{F}), (\ref{G}) as
\begin{align}
\nonumber
  V(u,s)
  :=&
  i\,\sinc(u^{\rT}\Theta s)u^{\rT}F(s)
  -u^{\rT}
    \int_{\mR^n}
    G(v)
    L(u,v,s)
    G(s-v)^{\rT}
    \rd v
    \,
    u\\
\label{V}
    =&
    -2        \sin(u^{\rT}\Theta s)H_0(s)
    -2 \int_{\mR^n}
            \sin(u^{\rT}\Theta v)
            H(v)^{\rT}
            K(u+v,v-s)
        H(s-v) \rd v,
\end{align}
where $H_0$ and $H$  are the Fourier transforms from (\ref{hk}) and  (\ref{h}),
the function $K$ is given by (\ref{K}), and $L: \mR^n\x \mR^n\x \mR^n\to \mR^{m\x m}$ is defined in terms of another auxiliary function $M: \mR^n\x \mR^n\x \mR^n\to  \mC$ by
\begin{align}
\label{Luvs}
    L(u,v,s)
    &:=
        \Re
    \big(
        M(u,v,s-v)
        \re^{is^{\rT}\Theta v}
        \Omega
    \big)
    =
    \Re M(u,v,s-v) \Re(        \re^{is^{\rT}\Theta v}
        \Omega)
        -
    \Im M(u,v,s-v)
    K(s,v),
    \\
\label{Muvw}
    M(u,v,w)
    & :=
    \frac{1}{2}
    \sinc(u^{\rT}\Theta v)\sinc(u^{\rT}\Theta w)
     +
    i\,
     \frac{u^{\rT}\Theta (v-w)(\sinc(u^{\rT}\Theta (v-w))-\sinc(u^{\rT}\Theta (v+w)))}{4u^{\rT}\Theta v\, u^{\rT}\Theta w}.
\end{align}
\hfill$\square$
\end{thm}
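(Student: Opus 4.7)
The approach is to apply the quantum Ito formula for adapted processes to the Weyl operator $\cW_u(t)$ viewed as a nonlinear function of the system variables, take the vacuum expectation over the external bosonic fields, and then reduce the resulting GKSL generator $\cL(\cW_u)$ to Weyl form $\int V(u,s)\cW_{u+s}\rd s$ using the Weyl representations of $h_0$ and $h_1,\ldots,h_m$ from (\ref{hk})--(\ref{h}). Since $\cW_u$ is itself an adapted process, equation (\ref{dzeta}) gives
$\rd \cW_u = \cL(\cW_u)\rd t - i[\cW_u, h^{\rT}]\rd W$,
and averaging over $\rho = \varpi \ox \ups$ with $\ups$ the vacuum state kills the quantum Ito integral (the future-pointing increments $\rd W$ have zero vacuum expectation). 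Thus $\d_t\Phi(t,u) = \bE\cL(\cW_u(t))$, and integrating the Weyl representation of $\cL(\cW_u)$ against $\bE\cW_{u+s}(t) = \Phi(t,u+s)$ will yield (\ref{Phidot}).

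The Hamiltonian contribution is immediate: substituting $h_0 = \int H_0(s)\cW_s\rd s$ and using the Weyl CCR consequence $[\cW_s,\cW_u] = -2i\sin(s^{\rT}\Theta u)\cW_{s+u}$ from (\ref{CCR}) produces $i[h_0,\cW_u] = -2\int \sin(u^{\rT}\Theta s)H_0(s)\cW_{u+s}\rd s$, which matches the Hamiltonian term in the second form of (\ref{V}). The main work is the decoherence term $\cD(\cW_u) = \tfrac{1}{2}([h^{\rT},\cW_u]\Omega h + h^{\rT}\Omega[\cW_u,h])$: substituting the Weyl expansion of $h$ into both factors yields double integrals in auxiliary variables $v$ and $v'$, with triple products $\cW_v\cW_u\cW_{v'}$ (and $\cW_v\cW_{v'+u}$) that collapse to single Weyl operators via $\cW_a\cW_b = \re^{-ia^{\rT}\Theta b}\cW_{a+b}$. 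Introducing $s:=v+v'$ as a new integration variable rewrites everything in the desired form $\int(\cdot)\cW_{u+s}\rd s$; splitting $\Omega=I_m+iJ$ and combining the two symmetric halves of $\cD$ then collapses the complex exponentials into the matrix $K(u+v,v-s)$ of (\ref{K}), giving the second (and most explicit) form of $V$.

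To obtain the alternative representation of $V$ in terms of $F$, $G$, $L$, $M$, I would substitute the Weyl-form expressions (\ref{F})--(\ref{G}) from Lemma~\ref{lem:WeylQSDE} into the candidate kernel and verify the identity by pairing coefficients of the independent bilinear forms $H_0(s)$ and $H(v)^{\rT}H(s-v)$. The factor $\sinc(u^{\rT}\Theta s)u^{\rT}F(s)$ arises through the identity $\sin(u^{\rT}\Theta s) = (u^{\rT}\Theta s)\sinc(u^{\rT}\Theta s)$, while the kernel $M(u,v,w)$ in (\ref{Muvw}) encodes the symmetrized combination of phase factors produced by the two halves of $\cD$ after re-indexing; the $L$-matrix (\ref{Luvs}) is then just the real part of $M$ against $\re^{is^{\rT}\Theta v}\Omega$, exactly as needed to reproduce the real symmetric structure of the diffusion-type contribution $u^{\rT}GLG^{\rT}u$.

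The principal obstacle will be the bookkeeping of the many symplectic phases: each application of the Weyl CCR contributes an exponential factor $\re^{\pm i a^{\rT}\Theta b}$, and it is only after carefully exploiting the antisymmetry of $\Theta$ (in particular $u^{\rT}\Theta u = v^{\rT}\Theta v = 0$) and symmetrizing in the two halves of $\cD$ that the exponentials collapse to the compact sine/cosine combination $K$. A secondary difficulty is verifying the algebraic equivalence of the two forms of $V$: this requires the sinc/sine identities above together with the symmetrization of the integrand in (\ref{V}) under the substitution $v \mapsto s-v$, the anti-symmetry inherited by $K(u+v,v-s)$, and the Hermitian property of $H$ guaranteeing reality of the relevant bilinear forms.
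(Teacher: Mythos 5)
Your proposal is correct in outline and shares the paper's skeleton (average the QSDE (\ref{dzeta}) applied to $\cW_u$ to get $\d_t\Phi=\bE\cL(\cW_u)$, reduce $\cL(\cW_u)$ to the Weyl form $\int V(u,s)\cW_{u+s}\rd s$, then reconcile the two displayed forms of $V$ algebraically), but the central computation is done by a genuinely different route. You compute the GKSL generator \emph{directly} from (\ref{cL})--(\ref{cD}), substituting the Weyl expansions of $h_0$ and $h$ and collapsing products of Weyl operators; the paper explicitly acknowledges this is possible but instead follows the "slightly longer path" through the representation $\cL(\cW_u)=\cW_{u/2}\cA_u\cW_{u/2}$ with $\cA_u$ given by the integral formula (\ref{cA}) (a noncommutative Doleans--Dade exponential borrowed from an earlier lemma). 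The trade-off: the paper's route makes the first form of $V$ --- with the $\sinc$ factors and the auxiliary functions $M$ and $L$ arising from the iterated integral over the triangle $-\tfrac12\<r\<z\<\tfrac12$ --- appear naturally, and it connects transparently to the classical stochastic exponential and the classical limit of Section~\ref{sec:classlim}; your route is more elementary (only the Weyl CCRs are needed) and reaches the explicit second form of $V$ faster, but then the first form, and with it the provenance of $M$ and $L$, must be checked as a free-standing algebraic identity --- which is essentially the paper's final chain of equalities read in reverse, so nothing is lost.

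One subtlety deserves emphasis, since your write-up places it under the "secondary difficulty" of relating the two forms of $V$ when it is in fact needed already to obtain the second form. Combining the two halves of $\cD(\cW_u)$ and setting $s:=v+v'$ yields, between $H(v)^{\rT}$ and $H(s-v)$, the complex matrix $\big(\re^{i(2p-q-r)}-\re^{-ir}\cos q\big)\Omega$ with $p:=u^{\rT}\Theta v$, $q:=u^{\rT}\Theta s$, $r:=v^{\rT}\Theta s$; this is \emph{not} literally equal to $-2\sin(u^{\rT}\Theta v)K(u+v,v-s)$. The two kernels differ by $\sin q\,(\sin r\, I_m-\cos r\, J)$, a term that is annihilated only after exploiting the invariance of $\int_{\mR^n}H(v)^{\rT}(\cdot)H(s-v)\rd v$ under $v\mapsto s-v$ (together with $\Omega^{\rT}=\overline{\Omega}$ and the fact that $H(v)^{\rT}MH(s-v)$ is a scalar). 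In other words, the kernel in (\ref{V}) is one representative of an equivalence class modulo such reindexings, and your derivation produces a different representative; the identity (\ref{Phidot}) still holds, but the step "collapses the complex exponentials into $K(u+v,v-s)$" must be justified via this symmetrization rather than by direct cancellation. With that point made explicit, the argument is sound.
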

\begin{proof}
For any fixed but otherwise arbitrary $u \in \mR^n$,  application of the QSDE (\ref{dzeta}) to the Weyl operator $\cW_u$ leads to
\begin{equation}
\label{de}
    \rd \cW_u
    =
    \cL(\cW_u)
    \rd t
    -i[\cW_u, h^{\rT}] \rd W.
\end{equation}
This QSDE extends (\ref{dX}) in the sense that the terms $f$ and $g$ of the latter QSDE, described by (\ref{fg}), can be recovered from (\ref{de}) as operator-valued coefficients  of the linear parts of the formal power series
$
    \cL(\cW_u) = iu^{\rT}f + o(u)
$ and
$
    -i[\cW_u, h^{\rT}] = iu^{\rT} g + o(u)
$,
where $o(u)$ consists of higher-order monomials of $u$.
The averaging of the QSDE (\ref{de}) (to which the martingale part $-i[\cW_u, h^{\rT}] \rd W$ does not contribute) yields the following IDE for the QCF $\Phi$ in (\ref{char}):
\begin{equation}
\label{PhidotEcL}
  \d_t \Phi(t,u) = \bE \cL(\cW_u(t)).
\end{equation}
Although the drift of the QSDE (\ref{de}) can be computed directly through (\ref{cL}) and (\ref{cD}), we will follow a slightly longer path  based on applying \cite[Lemma 2]{VP_2012b} to the quantum adapted process $iu^{\rT}X$ satisfying the QSDE $\rd (iu^{\rT} X) = iu^{\rT}f\rd t + iu^{\rT} g \rd W$ in view of (\ref{dX}), which leads to the representation
\begin{equation}
\label{cLcA}
    \cL(\cW_u) = \cW_{u/2} \cA_u \cW_{u/2}.
\end{equation}
Here, $\cA_u$ is an adapted quantum process which is defined in terms of the drift vector $f$ and the dispersion matrix $g$ as
\begin{equation}
\label{cA}
    \cA_u
    :=
    iu^{\rT}
    \int_{-\frac{1}{2}}^{\frac{1}{2}}
    \Big(
    \cE_{r u}(f)
    +
    i
    \cE_{ru}(g)
    \Omega
    \int_r^{\frac{1}{2}}
    \cE_{zu}(g)^{\rT}
    \rd z\,  u
    \Big)
    \rd r,
\end{equation}
where $\cE_u$ is the superoperator given by (\ref{cE}), and $\Omega$ is the Ito matrix from (\ref{Omega}).  The representation (\ref{cLcA}) in terms of (\ref{cA})  can also be established by using the general quantum stochastic exponential formulae \cite{H_1996} and relates (\ref{de}) with the following noncommutative analogue of the Doleans-Dade exponential \cite{DD_1970,LS_2001}:
\begin{equation}
\label{DD}
    \rd \cW_u = \cW_{u/2}(\cA_u\rd t + \cB_u \rd W)\cW_{u/2},
    \qquad
    \cB_u
    :=
    iu^{\rT}
    \int_{-\frac{1}{2}}^{\frac{1}{2}}
    \cE_{r u}(g)
    \rd r.
\end{equation}
In addition to its connection with the classical stochastic exponentials,  the representation (\ref{cLcA})--(\ref{cA}) allows the Weyl quantization form of $f$ and $g$ from Lemma~\ref{lem:WeylQSDE} to be combined with the eigenrelation (\ref{eigop}). Indeed, this relation implies that
$$    \int_{-\frac{1}{2}}^{\frac{1}{2}}\cE_{ru}(\cW_v)\rd r
     =
    \int_{-\frac{1}{2}}^{\frac{1}{2}}\re^{-2iru^{\rT}\Theta v}\rd r\, \cW_v
    =
    \sinc(u^{\rT}\Theta v) \cW_v
$$
for any $v \in \mR^n$, and hence, due to the linearity of the superoperator $\cE_u$ and the representation of $f$ in (\ref{FG}), the leftmost integral in (\ref{cA}) can be evaluated as
\begin{equation}
\label{intcEf}
        \int_{-\frac{1}{2}}^{\frac{1}{2}}
    \cE_{ru}(f)
    \rd r
    =
    \int_{\mR^n}
    F(v)
        \int_{-\frac{1}{2}}^{\frac{1}{2}}
    \cE_{ru}(\cW_v)
    \rd r
    \rd v
    =
    \int_{\mR^n}
    \sinc(u^{\rT} \Theta v)
    F(v)
    \cW_v
    \rd v.
\end{equation}
By a similar reasoning, substitution of the Weyl quantization form of $g$ from (\ref{FG}) into (\ref{cA}) leads to
\begin{align}
\nonumber
    \int_{-\frac{1}{2}}^{\frac{1}{2}}
    \cE_{ru}(g)
     \Omega
    \int_r^{\frac{1}{2}}
    \cE_{zu}(g)^{\rT}
    \rd z
    \rd r
    & =
    \int_{-\frac{1}{2}}^{\frac{1}{2}}
    \int_{\mR^n}
    G(v)
    \cE_{ru}(\cW_v)
    \rd v\,
    \Omega
    \int_r^{\frac{1}{2}}
    \int_{\mR^n}
    G(w)^{\rT}
    \cE_{zu}(\cW_w)
    \rd w
    \rd z
    \rd r\\
\nonumber
    & =
    \int_{\mR^n\x \mR^n}
        M(u,v,w)
        \re^{-iv^{\rT}\Theta w}
        G(v)\Omega G(w)^{\rT}
        \cW_{v+w}
    \rd v \rd w\\
\label{LGG}
    & =
    \int_{\mR^n}
    \Big(
    \int_{\mR^n}
        M(u,v,s-v)
        \re^{is^{\rT}\Theta v}
        G(v)\Omega G(s-v)^{\rT}
        \rd v
        \Big)
        \cW_s
    \rd s.
\end{align}
Here,
\begin{equation}
\label{int2}
    \int_{-\frac{1}{2}}^{\frac{1}{2}}\cE_{ru}(\cW_v)
    \int_r^{\frac{1}{2}}
    \cE_{zu}(\cW_w)\rd z\rd r
     =
    \int_{-\frac{1}{2}\< r\< z\< \frac{1}{2}}
    \re^{-2iu^{\rT}\Theta (rv + zw)}
    \rd r\rd z\,
    \cW_v \cW_w
     =
    M(u,v,w)
    \re^{-iv^{\rT}\Theta w}
    \cW_{v+w}
\end{equation}
for all $u,v,w\in \mR^n$, where the function $M$ is given by (\ref{Muvw}) and 
results from computing the rightmost integral in (\ref{int2}) as
\begin{align}
\nonumber
    \int_{-\frac{1}{2}\< r\< z\< \frac{1}{2}}
    \re^{2i(\beta r+ \gamma z)}
    \rd r \rd z
    & =
    \int_{-\frac{1}{2}}^{\frac{1}{2}}
    \re^{2i\beta r}
    \int_{r}^{\frac{1}{2}}
    \re^{2i\gamma z}
    \rd z\,
    \rd r\\
\nonumber
    & =
    \frac{1}{2i\gamma}
    \int_{-\frac{1}{2}}^{\frac{1}{2}}
    \re^{2i\beta r}
    \big(
        \re^{i\gamma} - \re^{2i\gamma r}
    \big)
    \rd r
 =
    \frac{\sinc(\beta)\re^{i\gamma} -\sinc(\beta+\gamma)}{2i\gamma}\\
\label{intsinc}
    & =
    \frac{1}{2}\sinc(\beta)\sinc(\gamma)
    +
    i\,
    \frac{(\beta-\gamma)(\sinc(\beta+\gamma)-\sinc(\beta-\gamma))}{4\beta \gamma},
\end{align}
with $\beta := -u^{\rT}\Theta v$ and $\gamma := -u^{\rT}\Theta w$. If $\beta=0$ or $\gamma=0$,  the right-hand side of (\ref{intsinc}) is defined by continuity. In particular, it is equal to $\frac{1}{2}$ for $\beta = \gamma = 0$ in accordance with the leftmost integral in (\ref{intsinc}) over the triangle of area $\frac{1}{2}$. By assembling (\ref{intcEf}) and (\ref{LGG}) into (\ref{cA}), the process $\cA_u$  takes the form
\begin{equation}
\label{cAweyl}
    \cA_u
    =
    \int_{\mR^n}
    \Big(
    i\,\sinc(u^{\rT}\Theta s)u^{\rT}F(s)
     -
    u^{\rT}
    \cS
    \Big(
    \int_{\mR^n}
            M(u,v,s-v)
        \re^{is^{\rT}\Theta v}
    G(v)
        \Omega
    G(s-v)^{\rT}
    \rd v
    \Big)
    u
    \Big)
        \cW_s
    \rd s,
\end{equation}
where we have used the property $u^{\rT} \Gamma u = u^{\rT} \cS(\Gamma) u$ for any vector $u$ and matrix $\Gamma$. In order to compute the symmetrizer in
(\ref{cAweyl}), we note that
\begin{align}
\nonumber
    \Big(
        \int_{\mR^n}
        M(u,v,s-v)
        \re^{is^{\rT}\Theta v}
        G(v)\Omega G(s-v)^{\rT}
        \rd v
        \Big)^{\rT}
        & =
        \int_{\mR^n}
        M(u,v,s-v)
        \re^{is^{\rT}\Theta v}
        G(s-v)\overline{\Omega} G(v)^{\rT}
        \rd v \\
\nonumber
        & =
        \int_{\mR^n}
        M(u,s-v,v)
        \re^{is^{\rT}\Theta (s-v)}
        G(v)\overline{\Omega} G(s-v)^{\rT}
        \rd v \\
\nonumber
        & =
        \int_{\mR^n}
        \overline{M(u,v,s-v)}
        \re^{-is^{\rT}\Theta v}
        G(v)\overline{\Omega} G(s-v)^{\rT}
        \rd v\\
\nonumber
        & =
        \int_{\mR^n}
        G(v)\,
        \overline{M(u,v,s-v)\re^{is^{\rT}\Theta v} \Omega}\,
        G(s-v)^{\rT}
        \rd v,
\end{align}
where use is made of the Hermitian property $\Omega^{\rT} = \overline{\Omega}$ of the Ito matrix $\Omega$, the identity
$M(u,w,v) = \overline{M(u,v,w)}$ for the function $M$ in (\ref{Muvw}), and the antisymmetry of the CCR matrix $\Theta$. Therefore,
\begin{equation}
\label{symm}
      \cS
    \Big(
    \int_{\mR^n}
            M(u,v,s-v)
        \re^{is^{\rT}\Theta v}
    G(v)
        \Omega
    G(s-v)^{\rT}
    \rd v
    \Big)
    =
        \int_{\mR^n}
        G(v)
        L(u,v,s)
        G(s-v)^{\rT}
        \rd v,
\end{equation}
where the function $L$ is defined in terms of $M$ by (\ref{Luvs}). In view of (\ref{symm}), the integral representation (\ref{cAweyl}) takes the form \begin{equation}
\label{AVW}
  \cA_u =
  \int_{\mR^n}
  V(u,s) \cW_s \rd s.
\end{equation}
Here, the kernel function $V$ is given by (\ref{V}), where the second equality is obtained by using (\ref{F}) and (\ref{G}) as
\begin{align}
\nonumber
  V(u,s)
  :=&
    i\, \sinc(u^{\rT}\Theta s)
    u^{\rT} F(s)
    -u^{\rT}
    \int_{\mR^n}
    G(v)
    L(u,v,s)
    G(s-v)^{\rT}
    \rd v
    \,
    u\\
\nonumber
    =&
    -2
    \sinc(u^{\rT}\Theta s)
    u^{\rT}\Theta
    \Big(
        H_0(s)s
        +
      \int_{\mR^n}
         v
         H(v)^{\rT}
         K(s,v)
         H(s-v)  \rd v
    \Big)\\
\nonumber
    & -
    4u^{\rT}\Theta \int_{\mR^n}vH(v)^{\rT}L(u,v,s)H(s-v)(s-v)^{\rT}\rd v \Theta u\\
\nonumber
    =&
    -2\sin(u^{\rT}\Theta s)H_0(s)
     +
    \int_{\mR^n}
    H(v)^{\rT}
    \big(
        4u^{\rT}\Theta v u^{\rT}\Theta (s-v)L(u,v,s) - 2\sinc(u^{\rT}\Theta s)u^{\rT}\Theta v K(s,v)
    \big)
    H(s-v)
    \rd v
    \\
\nonumber
   = & -2\sin(u^{\rT}\Theta s)H_0(s)
-2
    \int_{\mR^n}
    \sin(u^{\rT}\Theta v)
    H(v)^{\rT}
    K(u+v,v-s)
    H(s-v)
    \rd v,
\end{align}
where we have also employed the identity
$$
\sinc(u^{\rT}\Theta s)u^{\rT}\Theta v K(s,v) -2u^{\rT}\Theta v u^{\rT}\Theta (s-v)L(u,v,s)
=
\sin(u^{\rT}\Theta v)K(u+v,v-s)
$$
which follows from (\ref{K}), (\ref{Luvs}) and (\ref{Muvw}).
By substituting (\ref{AVW}) into (\ref{cLcA}) and using the relation $\cW_{u/2} \cW_s \cW_{u/2} =
\cW_{u+s}$, which holds  for all $u,s\in \mR^n$ in view of the Weyl CCRs (\ref{CCR}), it follows that
\begin{equation}
\label{cLWWW}
    \cL(\cW_u) = \cW_{u/2} \int_{\mR^n} V(u,s)\cW_s\rd s \cW_{u/2} = \int_{\mR^n} V(u,s)\cW_{u+s}\rd s.
\end{equation}
The IDE (\ref{Phidot}) can now be obtained by averaging (\ref{cLWWW}) and using (\ref{PhidotEcL}), which completes the proof of the theorem.
 \end{proof}

If the QSDE (\ref{dX}) were a classical SDE for an $\mR^n$-valued Markov diffusion process $X$ driven by a standard Wiener process $W$, the drift vector $f$ and the dispersion matrix $g$ would be usual functions on $\mR^n$ with the Fourier transforms $F$ and $G$. In the classical case, (\ref{DD}) would reduce to the stochastic exponential 
$$
    \rd \cW_u
    =
    \Big(iu^{\rT}\rd X - \frac{1}{2}|g^{\rT}u|^2\rd t\Big) \cW_u
=
 u^{\rT}\Big(\Big(if - \frac{1}{2}gg^{\rT}u\Big)\rd t + ig\rd W\Big) \cW_u,
$$
which corresponds to the previously discussed classical limit (\ref{stoc}),
and the characteristic function $\Phi$ would satisfy the IDE
$$
    \d_t \Phi(t,u)
    =
    u^{\rT}
    \int_{\mR^n}
        \Big(
                i F(s)
            -
            \frac{1}{2}
            (G*G^{\rT})(s)\,
            u
        \Big)
        \Phi(t,u+s)
        \rd s.
$$
This IDE can also be  obtained through the Fourier transform of the Fokker-Planck-Kolmogorov equation (FPKE) \cite{KS_1991,S_2008} which governs the PDF of the random process $X$ in $\mR^n$, with the convolution $G* G^{\rT}$ 
being the Fourier transform of the diffusion matrix $gg^{\rT}$ for  the classical SDE (cf. (\ref{C})).
In the quantum case, the term $-2        \sin(u^{\rT}\Theta s)H_0(s)$ in (\ref{V}) comes from the internal dynamics of the system and corresponds to the kernel of the Moyal equation \cite[Eq. (7.4)]{M_1949} for the evolution of the QCF for isolated quantum systems. The integral term on the right-hand side of (\ref{V}) describes the contribution from the  system-field interaction, thus making the IDE (\ref{Phidot}) an extension of the Moyal equation to open quantum stochastic systems.

\section{QCF and QPDF dynamics in the case of linear system-field coupling}\label{sec:Hlin}

Consider a particular class \cite{SVP_2014} of the above discussed quantum systems whose coupling operators are linear functions of the system variables described by (\ref{N}) and (\ref{Hlin}). However, unlike (\ref{bR}) and (\ref{H0lin}), the Hamiltonian $h_0$ is not assumed to be a quadratic function and is split into a quadratic part and a nonquadratic term $\wt{h}_0$. The nonquadratic part  $\wt{h}_0$ depends on $d$ system variables comprising a vector $Z X$ and is represented  in the Weyl quantization form with a Hermitian Fourier transform $\wt{H}_0: \mR^d \to \mC$:
\begin{align}
\label{hh}
    h_0  := b^{\rT} X + \frac{1}{2} X^{\rT} R X + \wt{h}_0,
    \qquad
    \wt{h}_0 := \int_{\mR^d} \wt{H}_0(v) \cW_{Z^{\rT}v} \rd v,
\end{align}
where $Z\in \mR^{d\x n}$ is a submatrix of a permutation matrix of order $n$ (so that $d\< n$ and $ZZ^{\rT} = I_d$).
For such a system, the drift vector $f$ and the dispersion matrix $g$ of the QSDE (\ref{dX}) in (\ref{fg}) take the form
\begin{equation}
\label{fglin}
    f = AX+2\Theta b + i[\wt{h}_0, X] = AX+2\Theta b + 2i\Theta Z^{\rT}\int_{\mR^d} \wt{H}_0(v) v\cW_{Z^{\rT}v}\rd v,
    \qquad
    g = B,
\end{equation}
where $A$ and $B$ are the matrices from (\ref{AB}) and Lemma~\ref{lem:WeylQSDE} is used. The nonquadratic part $\wt{h}_0$ of the Hamiltonian $h_0$ in (\ref{hh}) is the only source of the nonlinear dependence of $f$ on the system variables in (\ref{fglin}).

\begin{thm}
\label{th:Hlin}
Suppose the system-field coupling operators $h_1, \ldots, h_m$ are linear functions of the system variables described by (\ref{N}), (\ref{Hlin}), and the system Hamiltonian $h_0$ is decomposed according to (\ref{hh}). Then the IDE (\ref{Phidot}) for the QCF $\Phi$ in (\ref{char}) takes the form
\begin{align}
\nonumber
    \d_t \Phi(t,u)
    = &
    u^{\rT} A \d_u \Phi(t,u)
    +
    \Big(
        2iu^{\rT} \Theta b
        -\frac{1}{2}
        |B^{\rT}u|^2
    \Big)
    \Phi(t,u)
    \\
\label{PhidotHlin}
     &
     -
    2
    \int_{\mR^d}
    \sin(u^{\rT}\Theta Z^{\rT} v)
    \wt{H}_0(v)
    \Phi(t,u+Z^{\rT}v)\rd v    ,
\end{align}
where the matrices $A$ and $B$ are given by (\ref{AB}). The corresponding IDE for the QPDF $\mho$ in (\ref{Wig}) is
\begin{align}
\nonumber
    \d_t\mho(t,x) =&
    -\div(\mho(t,x)(Ax+2\Theta b)) + \frac{1}{2}\div^2(\mho(t,x) BB^{\rT})
\\
  \label{mhoIDE}
  &
  -
    2
    \int_{\mR^d}
    \Pi(x,v)\mho(t,x-\Theta Z^{\rT}v)
    \rd v
    ,
 \end{align}
 where the kernel function $\Pi: \mR^n\x \mR^d \to \mR$ is expressed in terms of the Fourier transform $\wt{H}_0$ from (\ref{hh}) as
 \begin{equation}
 \label{Pi}
    \Pi(x,v)
    :=
    \Im
    \big(
        \wt{H}_0(v)
        \re^{iv^{\rT} Z x }
    \big)
    =
    \Re \wt{H}_0(v) \sin(v^{\rT} Z x)
    +
    \Im \wt{H}_0(v) \cos(v^{\rT} Z x).
\end{equation}
\hfill $\square$
\end{thm}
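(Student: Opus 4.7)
The plan is to split the GKSL generator as $\cL = \cL_0 + i[\wt h_0,\cdot\,]$, where $\cL_0$ corresponds to the open quantum harmonic oscillator with quadratic Hamiltonian $b^{\rT}X + \tfrac{1}{2}X^{\rT}RX$ and linear coupling $h = NX$. Applying (\ref{PhidotEcL}) from the proof of Theorem~\ref{th:Phidot}, the first summand $\bE\cL_0(\cW_u)$ will produce the differential-operator part of (\ref{PhidotHlin}), while $\bE(i[\wt h_0,\cW_u])$ will produce the integral term.

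For $\cL_0(\cW_u)$, I would use the representation (\ref{cLcA})--(\ref{cA}) specialised to affine drift $f = AX + 2\Theta b$ and constant dispersion $g = B$. The translation identity $\cE_{ru}(X) = X + 2r\Theta u$ from (\ref{cEuX}) gives $\cE_{ru}(f) = f + 2rA\Theta u$ and $\cE_{ru}(g) = B$, so the odd-in-$r$ contributions to the integrals in (\ref{cA}) vanish and $\cA_u$ reduces to $iu^{\rT}f - \tfrac{1}{2}u^{\rT}B\Omega B^{\rT}u$. Using $\cW_{u/2}X\cW_{u/2} = (X+\Theta u)\cW_u$ together with the antisymmetry $u^{\rT}BJB^{\rT}u = 0$, so that $u^{\rT}B\Omega B^{\rT}u = |B^{\rT}u|^2$, I obtain $\cL_0(\cW_u) = \big(iu^{\rT}A(X+\Theta u) + 2iu^{\rT}\Theta b - \tfrac{1}{2}|B^{\rT}u|^2\big)\cW_u$. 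Averaging and invoking the moment identity $\bE(X\cW_u) = -i\d_u\Phi - \Phi\Theta u$ from (\ref{PhiX}) cancels the $u^{\rT}A\Theta u\,\Phi$ term and yields the first line of (\ref{PhidotHlin}).

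For the nonquadratic contribution, substituting (\ref{hh}) into the commutator and applying the Weyl CCRs (\ref{CCR}) yields $[\cW_{Z^{\rT}v},\cW_u] = -2i\sin(v^{\rT}Z\Theta u)\cW_{u+Z^{\rT}v} = 2i\sin(u^{\rT}\Theta Z^{\rT}v)\cW_{u+Z^{\rT}v}$, where the second equality uses $\Theta^{\rT} = -\Theta$. Hence $i[\wt h_0,\cW_u] = -2\int_{\mR^d}\wt H_0(v)\sin(u^{\rT}\Theta Z^{\rT}v)\cW_{u+Z^{\rT}v}\rd v$, whose expectation is exactly the second line of (\ref{PhidotHlin}).

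To pass from (\ref{PhidotHlin}) to (\ref{mhoIDE}), I would apply the inverse Fourier transform in $u$ via (\ref{Wig}). The standard dualities $u_j\Phi \mapsto i\d_{x_j}\mho$ and $\d_{u_j}\Phi \mapsto ix_j\mho$ turn $u^{\rT}A\d_u\Phi + 2iu^{\rT}\Theta b\,\Phi -\tfrac{1}{2}|B^{\rT}u|^2\Phi$ into $-\div(\mho(Ax+2\Theta b)) + \tfrac{1}{2}\div^2(\mho BB^{\rT})$. The main obstacle is the integral term: the shift in $\Phi(t,u+Z^{\rT}v)$ dualises to a modulation $\re^{iv^{\rT}Zx}\mho(t,x)$, while $\sin(u^{\rT}\Theta Z^{\rT}v)$ dualises to a half-difference of translations by $\pm\Theta Z^{\rT}v$ in $x$. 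The antisymmetry-driven identity $v^{\rT}Z\Theta Z^{\rT}v = 0$ makes the two translates carry the same modulation phase $\re^{iv^{\rT}Zx}$, and folding the substitution $v\mapsto -v$ in one of them with the Hermitian property $\wt H_0(-v) = \overline{\wt H_0(v)}$ collapses the pair into a single shift with real kernel $\Im(\wt H_0(v)\re^{iv^{\rT}Zx})$, yielding (\ref{mhoIDE}) with $\Pi$ as in (\ref{Pi}).
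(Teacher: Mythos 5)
Your proposal is correct and follows essentially the same route as the paper: the quadratic/linear part is processed through the representation $\cL(\cW_u)=\cW_{u/2}\cA_u\cW_{u/2}$ from (\ref{cLcA})--(\ref{cA}) with $\cE_{ru}(X)=X+2r\Theta u$ killing the odd-in-$r$ term, the identity $\cW_{u/2}X\cW_{u/2}=(X+\Theta u)\cW_u$ together with (\ref{PhiX}) converts the expectation into $-i\d_u\Phi$, and your passage to (\ref{mhoIDE}) reproduces exactly the paper's manipulation in (\ref{intPi}) using $v^{\rT}Z\Theta Z^{\rT}v=0$ and the Hermitian property of $\wt{H}_0$. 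The only (harmless) deviation is that you obtain the nonquadratic contribution directly from the Weyl CCR $[\cW_{Z^{\rT}v},\cW_u]=2i\sin(u^{\rT}\Theta Z^{\rT}v)\cW_{u+Z^{\rT}v}$ after splitting $\cL=\cL_0+i[\wt{h}_0,\cdot\,]$, whereas the paper keeps $i[\wt{h}_0,X]$ inside $\cA_u$ and extracts the same $\sin$ factor via the $\sinc$-producing $\cE_{ru}$-integral; the two computations agree.
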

\begin{proof}
We will use Theorem \ref{th:Phidot} and an intermediate step of its proof.
By substituting (\ref{fglin}) into (\ref{cA}) and using the fact that the system Hamiltonian enters the process $\cA_u$ in a linear fashion, it follows that
\begin{align}
\nonumber
    \cA_u
    & =
    iu^{\rT}
    \int_{-\frac{1}{2}}^{\frac{1}{2}}
    \Big(
    \cE_{r u}(AX + 2\Theta b + i[\wt{h}_0, X])
    +
    i
    \cE_{ru}(B)
    \Omega
    \int_r^{\frac{1}{2}}
    \cE_{su}(B)^{\rT}
    \rd s\,  u
    \Big)
    \rd r\\
\nonumber
    & =
    iu^{\rT}
    \int_{-\frac{1}{2}}^{\frac{1}{2}}
    \Big(
    A\cE_{r u}(X)
    + 2\Theta b
    +
    i
    \cE_{ru}([\wt{h}_0, X])
    +
    i \Big(\frac{1}{2}-r\Big)
    B
    \Omega
    B^{\rT}
    u
    \Big)
    \rd r\\
\nonumber
    & =
    iu^{\rT}
    \int_{-\frac{1}{2}}^{\frac{1}{2}}
    \Big(
    A(X+2r\Theta u)
    + 2\Theta b
    +
    i
    \cE_{ru}([\wt{h}_0, X])
    \Big)
    \rd r
    -
        \frac{1}{2}
        u^{\rT}
    B
    \Omega
    B^{\rT}
    u\\
\label{cAlin}
    & =
    iu^{\rT} (AX + 2\Theta b)
    -
    \frac{1}{2}|B^{\rT}u|^2
    -
    2
    \int_{\mR^d}
    \sin(u^{\rT}\Theta Z^{\rT}v) \wt{H}_0(v)
    \cW_{Z^{\rT}v} \rd v,
\end{align}
where $ u^{\rT}B\Omega B^{\rT}u = u^{\rT}B\cS(\Omega) B^{\rT}u = |B^{\rT}u|^2$ in view of (\ref{OmegaJ}).
By substituting (\ref{cAlin}) into  (\ref{cLcA}) and averaging, (\ref{PhidotEcL}) leads to
\begin{equation}
\label{PhidotHlin1}
    \d_t \Phi(t,u)
    =
    iu^{\rT}A
    \bE(\cW_{u/2}X \cW_{u/2})
    +
    \Big(
        2iu^{\rT} \Theta b
        -\frac{1}{2}
        |B^{\rT}u|^2
    \Big)
    \Phi(t,u)
      -
    2
    \int_{\mR^d}
    \sin(u^{\rT}\Theta Z^{\rT}v)
    \wt{H}_0(v)
    \Phi(t,u+Z^{\rT}v)\rd v.
\end{equation}
Now, (\ref{cE}), (\ref{cEuX}) and (\ref{XW}) imply that
$
    \cW_{u/2}X \cW_{u/2} = \cE_{u/2}(X)\cW_u = (X+ \Theta u) \cW_u = -i\d_u \cW_u
$,
which, in accordance with the first of the quasi-polynomial moment identities
(\ref{PhiX}), allows the expectation in (\ref{PhidotHlin1}) to be represented as $\bE(\cW_{u/2}X \cW_{u/2}) = -i\d_u \Phi(t,u)$, thus
establishing the IDE (\ref{PhidotHlin}). The IDE (\ref{mhoIDE}) can be obtained from  (\ref{PhidotHlin}) via the Fourier transform which relates the QPDF $\mho$ to the QCF $\Phi$. Therefore, the first line of (\ref{PhidotHlin}) leads to that of (\ref{mhoIDE}) in a standard fashion. In view of the inverse Fourier transform  in (\ref{Phimho}), the integral in (\ref{PhidotHlin}) can be represented in terms of $\mho$ as
\begin{align}
\nonumber
    \int_{\mR^d}
    \sin(u^{\rT}\Theta Z^{\rT}v)&
    \wt{H}_0(v)
    \Phi(t,u+Z^{\rT}v)
    \rd v\\
\nonumber
    & =
    \int_{\mR^d}
    \frac{\re^{iu^{\rT}\Theta Z^{\rT}v}-\re^{-iu^{\rT}\Theta Z^{\rT}v}}{2i}
    \wt{H}_0(v)
    \int_{\mR^n}
    \mho(t,y)
    \re^{i(u+Z^{\rT}v)^{\rT}y}
    \rd y
    \rd v\\
\nonumber
    & =
    \frac{1}{2i}
    \int_{\mR^n\x \mR^d}
    \big(\re^{iu^{\rT}(y+\Theta Z^{\rT}v)}-\re^{iu^{\rT}(y-\Theta Z^{\rT}v)}\big)
    \wt{H}_0(v)
    \mho(t,y)
    \re^{iv^{\rT}Z y}
    \rd y
    \rd v\\
\nonumber
    & =
    \frac{1}{2i}
    \int_{\mR^n\x \mR^d}
    \re^{iu^{\rT}x}
    \Big(
        \wt{H}_0(v) \re^{iv^{\rT}Z(x-\Theta Z^{\rT}v)}
        -
        \wt{H}_0(-v) \re^{-iv^{\rT}Z(x-\Theta Z^{\rT}v)}
    \Big)
    \mho(t,x-\Theta Z^{\rT}v)
    \rd x\rd v\\
\nonumber
    & =
    \frac{1}{2i}
    \int_{\mR^n\x \mR^d}
    \re^{iu^{\rT}x}
    \Big(
        \wt{H}_0(v) \re^{iv^{\rT}Zx}
        -
        \overline{\wt{H}_0(v) \re^{iv^{\rT}Zx}}
    \Big)
    \mho(t,x-\Theta Z^{\rT}v)
    \rd x\rd v\\
\label{intPi}
     & =
     \int_{\mR^n}
     \re^{iu^{\rT}x}
     \int_{\mR^d}
    \Pi(x,v)
    \mho(t,x-\Theta Z^{\rT}v)
    \rd v
    \rd x,
\end{align}
where $v^{\rT}Z\Theta Z^{\rT} v = 0$ since $\Theta$ is antisymmetric. Here, linear transformations of the integration variables $y\in \mR^n$, $v\in \mR^d$ are used together with the Hermitian property of the function $\wt{H}_0$ leading to the kernel function $\Pi$ in (\ref{Pi}) which satisfies
\begin{align}
\nonumber
    \Pi(y + \Theta Z^{\rT}v,\, v)
    & =
        \Im
    \big(
        \wt{H}_0(v)\re^{iv^{\rT}Z (y+ \Theta Z^{\rT}v)}
    \big)
    =
        \Im
    \big(
        \wt{H}_0(v)\re^{iv^{\rT}Z y}
    \big)\\
\label{Piyv}
    & =
    \Pi(y, v) = -\Pi(y,-v)
    =
-\Pi(y - \Theta Z^{\rT}v,\, -v).
\end{align}
The Fourier transform of the right-hand side of (\ref{intPi}) yields the integral operator term in (\ref{mhoIDE}), thus completing the proof of the theorem.
\end{proof}

The first line of the IDE (\ref{PhidotHlin}) is recognizable as a PDE which describes the QCF evolution for the system in the case $\wt{h}_0 = 0$:
\begin{equation}
\label{Phidotgauss}
    \d_t \Phi(t,u)
    =
    u^{\rT} A \d_u \Phi(t,u)
    +
    \Big(
        2iu^{\rT} \Theta b
        -\frac{1}{2}
        |B^{\rT}u|^2
    \Big)
    \Phi(t,u)
    =:
    \fA(\Phi(t,\cdot))(u),
\end{equation}
where $\fA$ is a linear first-order differential operator acting on the QCF $\Phi(t,\cdot)$.
 This corresponds to an open quantum harmonic oscillator of Section~\ref{sec:weyl}  governed by the linear QSDE (\ref{dXlin}), and the IDE (\ref{mhoIDE}) reduces to its first line
 \begin{equation}
 \label{mhodotgauss}
    \d_t\mho(t,x) = -\div(\mho(t,x)(Ax+2\Theta b)) + \frac{1}{2}\div^2(\mho(t,x) BB^{\rT}) =: \fF(\mho(t,\cdot))(x)
 \end{equation}
 which coincides with the FPKE for a classical Markov diffusion process with
the infinitesimal generator $\fF^{\dagger}$. The generator acts
on a smooth function $\varphi: \mR^n \to \mR$ (with the gradient $\varphi'$ and the Hessian matrix $\varphi''$) as  $
    \fF^{\dagger}(\varphi)(x):= (Ax+2\Theta b)^{\rT} \varphi'(x) + \frac{1}{2}\bra BB^{\rT}, \varphi''(x)\ket_{\rF}
 $.
  If such an oscillator is initialized at a Gaussian quantum state \cite{P_2010}, the linear dynamics preserve the Gaussian nature of the state in time \cite{JK_1998}. In the quantum optics literature,  this property is obtained for a one-mode oscillator from the corresponding master equation (see for example, \cite{GZ_2004}). In the mutidimensional case, the invariance of the class of Gaussian quantum states with respect to the linear dynamics follows from the PDEs (\ref{Phidotgauss}) and (\ref{mhodotgauss}) by the same reasoning as for classical linear stochastic systems.
 Indeed, the PDE (\ref{Phidotgauss}) can be  represented in terms of the logarithm of the QCF $\Phi$
 as a nonhomogeneous linear PDE
\begin{equation}
\label{logPhidotgauss}
    \d_t \ln \Phi(t,u)
    =
    u^{\rT} A \d_u \ln \Phi(t,u)
    +
        2iu^{\rT} \Theta b
        -\frac{1}{2}
        |B^{\rT}u|^2,
\end{equation}
which preserves the quadratic dependence of $\ln \Phi(t,u)$ on $u\in \mR^n$ over the course of time $t> 0$, provided $\ln \Phi(0,\cdot)$ is a quadratic function. In this case, the open quantum harmonic oscillator remains in the class of  Gaussian quantum states with the QCFs
\begin{equation}
\label{gausschar}
    \Phi_{\mu,\Sigma}(u) = \re^{i\mu^{\rT} u - \frac{1}{2}\|u\|_{\Sigma}^2},
\end{equation}
where $\mu$ is the mean vector of the system variables in (\ref{mu}), and $\Sigma$ is the real part of their quantum covariance matrix in (\ref{Sigma}). The QPDF $\mho$ in (\ref{Wig}), which corresponds to a Gaussian state with the QCF $\Phi_{\mu,\Sigma}$ in (\ref{gausschar}), is given by
\begin{equation}
\label{gaussWig}
    \mho_{\mu,\Sigma}(x) := \frac{(2\pi)^{-n/2}}{\sqrt{\det \Sigma}}\re^{-\frac{1}{2}\|x-\mu\|_{\Sigma^{-1}}^2},
    \qquad
    x\in \mR^n,
\end{equation}
provided $\Sigma\succ 0$. From
(\ref{logPhidotgauss}), it follows  that the parameters $\mu$ and $\Sigma$ of the Gaussian quantum state satisfy the ODEs
\begin{equation}
\label{muSigmadot}
    \dot{\mu} = A \mu + 2\Theta b,
    \qquad
    \dot{\Sigma} = A\Sigma + \Sigma A^{\rT} + BB^{\rT}.
\end{equation}
Here, the second equation is a Lyapunov ODE whose solution satisfies $\Sigma(t)\succ 0$ at all times $t>0$ for any physically meaningful initial data\footnote{with $\Sigma(0)\succcurlyeq -i\Theta$ due to the positive semi-definiteness of the quantum covariance matrix in (\ref{covX}) as a generalized form of the Heisenberg uncertainty principle \cite{H_2001}} if the Krylov subspaces, generated by the matrix $A$ from the columns of $B$, cover the space $\mR^n$. The latter condition is equivalent to the Kalman controllability matrix $\begin{bmatrix}  B & AB & \ldots & A^{n-1}B\end{bmatrix}$ being of full row rank, which is closely related to the H\"{o}rmander condition \cite{H_1967,S_2008} and guarantees that the PDE (\ref{mhodotgauss}) has a smooth fundamental solution. Irrespective of whether the matrices $A$ and $B$ satisfy the controllability condition, the solutions
\begin{equation}
\label{muSigma}
    \mu(t)= 2\int_0^t \re^{sA}\rd s\,  \Theta b,
    \qquad
    \Sigma(t) = \int_0^t \re^{sA}BB^{\rT}\re^{sA^{\rT}}\rd s
\end{equation}
of the ODEs (\ref{muSigmadot}) (this time with zero initial conditions $\mu(0)=0$ and $\Sigma(0) = 0$) parameterize the solutions of the PDE (\ref{Phidotgauss}) for general (not necessarily  Gaussian) initial QCFs $\Phi(0,\cdot)$. More precisely, application of the method of characteristics \cite{E_1998,V_1971} allows $\Phi$ to be expressed in terms of (\ref{gausschar}) as
\begin{equation}
\label{Phisol}
    \Phi(t,u) = \Phi\big(0,\re^{tA^{\rT}}u\big) \Phi_{\mu(t), \Sigma(t)}(u) = \re^{t\fA} (\Phi(0,\cdot))(u),
\end{equation}
where the linear operator $\re^{t\fA}$ 
describes the flow associated with the PDE (\ref{Phidotgauss}). Since any QCF takes values in the closed unit disc of the complex plane,  and the Gaussian QCF satisfies  $|\Phi_{\mu,\Sigma}(u)|^2 = \re^{-\|u\|_{\Sigma}^2}$, the representation (\ref{Phisol}) implies the finiteness of the following weighted $L_2$-norm
\begin{equation}
\label{Phinorm}
    \sn \Phi(t,\cdot)\sn_P
    := \sqrt{\int_{\mR^n}\re^{u^{\rT}Pu}|\Phi(t,u)|^2\rd u}
\end{equation}
for any matrix $P\in \mS_n$ satisfying $P\prec  \Sigma(t)$. The corresponding inner product $\dbra \varphi, \psi\dket_P := \int_{\mR^n} \re^{u^{\rT}Pu} \overline{\varphi(u)} \psi(u)\rd u$ is real-valued for Hermitian functions $\varphi$ and $\psi$.
Since the finiteness of the integral in (\ref{Phinorm}) for any $P\prec  0$ is trivially ensured by the inequality $|\Phi|\< 1$ (see also
\cite[Proposition 6]{CH_1971} on square integrability of QCFs), we will assume for what follows that $P\succcurlyeq 0$. For example, by letting $P:= \Sigma(t)$ in (\ref{Phinorm}) and combining (\ref{gausschar}) with (\ref{Phisol}), it follows that
$$
    \sn\Phi(t,\cdot)\sn_{\Sigma(t)}^2
    =
    \int_{\mR^n}
    \big|\Phi(0,\re^{tA^{\rT}}u)\big|^2
    \rd u
    =
    \|\Phi(0,\cdot)\|_2^2\,
        \re^{-t\Tr A}.
$$
If the pair $(A,B)$ is controllable, then for any $t>0$, the controllability Gramian $\Sigma(t)$ in (\ref{muSigma}) is positive definite and hence, there exist $P\in \mS_n$ satisfying $0\prec P \prec \Sigma(t)$. From the finiteness of the norm $\sn \Phi(t,\cdot)\sn_P$ for such matrices $P$ in (\ref{Phinorm}) and the Plancherel identity, it follows that the QPDF $\mho(t,\cdot)$ is infinitely differentiable, and its partial derivatives are all square integrable over $\mR^n$, with
\begin{align}
\nonumber
    \sum_{\alpha \in \mZ_+^n}
    \frac{\eps^{|\alpha|}}{\alpha !}
    \|\d_x^{\alpha}\mho(t,\cdot)\|_2^2
    & =
    (2\pi)^{-n}
    \int_{\mR^n}
    |\Phi(t,u)|^2
    \sum_{\alpha \in \mZ_+^n}
    \frac{\eps^{|\alpha|}}{\alpha !}
    u^{2\alpha}
    \rd u\\
\label{Sob}
    & =
    (2\pi)^{-n}
    \int_{\mR^n}
    \re^{\eps |u|^2}
    |\Phi(t,u)|^2\rd u
    \<
    (2\pi)^{-n}
    \sn \Phi(t,\cdot)\sn_P^2<+\infty
\end{align}
for any real $\eps$ not exceeding  the smallest eigenvalue of $P$.
The quantity on the left-hand side of (\ref{Sob}) coincides with the square of a generalized Sobolev  norm $\|\re^{-\frac{\eps}{2}\Delta}\mho(t,\cdot)\|_2$ of the QPDF, which is organised similarly to the norm in the Bessel potential space \cite[pp. 170--171]{S_2008}. Here, $\Delta$ is the Laplacian, and the operator $\re^{-\frac{\eps}{2}\Delta}$ (with $\eps>0$) describes the time-reversed flow
associated with the heat equation $\d_t \psi = \frac{1}{2}\Delta \psi$.
The property (\ref{Sob}) of the solutions of the PDEs (\ref{Phidotgauss}) and (\ref{mhodotgauss}) for the open quantum harmonic oscillator are qualitatively  related to the smoothing effect of the heat semigroup \cite{E_1998}.\footnote{a function $\varphi: \mR^n \to \mC$ satisfies $\sn \varphi\sn_{\eps I_n}^2<+\infty$  for some $\eps>0$ if and only if its Fourier transform is a solution of the heat equation at time $\eps$ with a square integrable initial condition}
However, in contrast to the linear case, the presence of a nonquadratic term in the system Hamiltonian (\ref{hh}) makes  the quantum state dynamics non-Gaussian, and the smoothness of fundamental solutions of the IDE (\ref{mhoIDE}) requires a separate investigation. Although this is beyond the scope of the present paper, we note that such analysis can employ a Duhamel type formula \cite{E_1998,S_2008}
\begin{equation}
\label{duhamel}
    \Phi(t,\cdot) = \re^{t(\fA+\fB)}(\Phi(0,\cdot)) = \re^{t\fA}(\Phi(0,\cdot)) + \int_0^t \re^{(t-s)\fA}(\fB(\Phi(s,\cdot))) \rd s,
\end{equation}
where $\fB$ denotes the integral operator on the right-hand side of (\ref{PhidotHlin}) which acts over the spatial variables of the QCF as
\begin{equation}
\label{fB}
    \fB(\varphi)(u)
     :=
     -
    2
    \int_{\mR^d}
    \sin(u^{\rT}\Theta Z^{\rT} v)
    \wt{H}_0(v)
    \varphi(u+Z^{\rT}v)\rd v,
\end{equation}
with the image $\fB(\Phi(t,\cdot))(u)$ also being a Hermitian function of $u \in \mR^n$ for any given time $t\>0$. We will only outline
a possible avenue for obtaining upper bounds for the norms  (\ref{Phinorm}) in the non-Gaussian case, based on  the dissipation relation
 \begin{equation}
 \label{diss}
    \d_t (\sn \Phi\sn_P^2)
        +
    \Bra
        AP+PA^{\rT}+BB^{\rT},
        \d_P(\sn \Phi\sn_P^2)
    \Ket_{\rF}
    =
    -\sn \Phi\sn_P^2 \Tr A
    +
    2\dbra \Phi, \fB(\Phi)\dket_P,
 \end{equation}
where $\d_P(\cdot)$ is the Frechet derivative on the Hilbert space $\mS_n$ endowed with the Frobenius inner product of matrices $\bra\cdot, \cdot\ket_{\rF}$. The relation (\ref{diss}) is established by computing the partial time derivative of the squared norm from (\ref{Phinorm})  as
\begin{align}
\nonumber
    \d_t (\sn \Phi\sn_P^2)
    = &
    2\dbra \Phi, \d_t \Phi\dket_P\\
\nonumber
     = &
     2
     \int_{\mR^n}
     \re^{\|u\|_P^2}\,
        \overline{\Phi(t,u)}
        \Big(u^{\rT} A \d_u \Phi(t,u)
        +
        \big( 2iu^{\rT} \Theta b
        -
        \frac{1}{2}
        |B^{\rT}u|^2
        \big)
        \Phi(t,u)
        +
        \fB(\Phi(t,\cdot))(u)\!
        \Big)
        \rd u\\
\nonumber
    = &
    -\int_{\mR^n}
    \re^{\|u\|_P^2}
    \big(\Tr A + u^{\rT}(AP + PA^{\rT} + BB^{\rT})u\big)
    |\Phi(t,u)|^2\rd u
    +
    2\dbra \Phi, \fB(\Phi)\dket_P\\
\label{diss_der}
    = &
    -\sn \Phi\sn_P^2 \Tr A
    -
    \Bra
        AP+PA^{\rT}+BB^{\rT},\,
        \d_P(\sn \Phi\sn_P^2)
    \Ket_{\rF} +
    2\dbra \Phi, \fB(\Phi)\dket_P,
\end{align}
provided the QCF $\Phi(t,\cdot)$  decays fast enough at infinity in $\mR^n$ in the sense that $|\Phi(t,u)|^2 = o(\re^{-\|u\|_P^2}|u|^{-n})$ as $u\to \infty$. This decay rate condition is combined in (\ref{diss_der}) with the divergence theorem and
the identities
\begin{align*}
    2\re^{\|u\|_P^2} u^{\rT}A
    \Re (\overline{\varphi} \d_u \varphi)
    & =
    \div\big(\re^{\|u\|_P^2}A^{\rT} u |\varphi|^2\big) - |\varphi|^2 \div\big(\re^{\|u\|_P^2}A^{\rT} u\big),\\
    \div\big(\re^{\|u\|_P^2}A^{\rT} u \big)
    & =
    \re^{\|u\|_P^2}(\Tr A + u^{\rT}(AP + PA^{\rT})u),\\
    \d_P\,  \re^{\|u\|_P^2} & = \re^{\|u\|_P^2} uu^{\rT}.
\end{align*}
The first-order differential operator on the left-hand side of (\ref{diss}) is the full time derivative of $\sn \Phi(t,\cdot)\sn_P^2$ (as a function of $(t,P)\in \mR_+\x \mS_n$) along the characteristics $\dot{P} = AP+PA^{\rT} + BB^{\rT}$.
This allows a multivariate partial differential version  of Gronwall's lemma to be applied to $\sn \Phi\sn_P^2$, provided the matrix $P$ is not too ``large'' in comparison with the controllability Gramian $\Sigma$ in (\ref{muSigma}).
To this end, in view of (\ref{fB}), the rightmost term in (\ref{diss}) admits the following upper bound
\begin{align}
\nonumber
    \dbra \varphi, \fB(\varphi)\dket_P
    & =
     -
    2
    \int_{\mR^n}
    \re^{\|u\|_P^2}
    \overline{\varphi(u)}
    \int_{\mR^d}
    \sin(u^{\rT}\Theta Z^{\rT} v)
    \wt{H}_0(v)
    \varphi(u+Z^{\rT}v)\rd v
    \rd u\\
\nonumber
    & \<
    2
    \int_{\mR^n\x \mR^d}
    \re^{-\|u\|_{S-P}^2-u^{\rT} S Z^{\rT} v - \frac{1}{2}\|v\|_{ZSZ^{\rT}}^2}
    |\wt{H}_0(v)|
    \varphi_S(u)
    \varphi_S(u+Z^{\rT}v)
    \rd u
    \rd v\\
\label{EEE}
    & \<
    2
        \sn\varphi\sn_S^2
    \int_{\mR^d}
    \re^{\frac{1}{4}v^{\rT}Z(P + P(S-P)^{-1}P - S)Z^{\rT}v}
    |\wt{H}_0(v)|
    \rd v,
\end{align}
where $S \in \mS_n$ is an arbitrary matrix satisfying $S\succ P$. Here, the  Cauchy-Bunyakovsky-Schwarz inequality is applied to the function $\varphi_S(u):=    \re^{\frac{1}{2}\|u\|_S^2}
    |\varphi(u)|$ and its translation $\varphi_S(u+Z^{\rT}v)$, so that $\int_{\mR^n}\varphi_S(u)\varphi_S(u+Z^{\rT}v)\rd u\< \|\varphi_S\|_2^2 = \sn \varphi\sn_S^2$ for all $v\in \mR^d$. 
    Also,  use is made of the relations
    $$\|u\|_{S-P}^2+u^{\rT} S Z^{\rT} v = \|u+\frac{1}{2} (S-P)^{-1}S Z^{\rT}v\|_{S-P}^2 - \frac{1}{4}\|SZ^{\rT}v\|_{(S-P)^{-1}}^2\> - \frac{1}{4}\|SZ^{\rT}v\|_{(S-P)^{-1}}^2
    $$
    (which are obtained by completing the square) and the matrix identity $S(S-P)^{-1}S-2S = 
    P + P(S-P)^{-1}P - S$. While $\sn\varphi\sn_S$ is an increasing  function  of the auxiliary matrix  $S$ (in the sense of the matrix ordering), the last integral in (\ref{EEE}) decreases with respect to  $S$. This integral is amenable to an explicit calculation when $|\wt{H}_0(v)|$ is a quadratic-exponential function of $v$, which is the case, for example,  if the non-quadratic part $\wt{h}_0$ of the Hamiltonian in (\ref{hh}) is the Weyl quantization of a Gaussian-shaped function on $\mR^d$. Since the  matrix $S\succ P$ is arbitrary, (\ref{EEE})  leads to
\begin{equation}
\label{EEE1}
    \dbra \varphi, \fB(\varphi)\dket_P
     \<
    2
    \inf_{S\in \mS_n:\ S \succ P}
    \Big(
        \sn\varphi\sn_S^2
    \int_{\mR^d}
    \re^{\frac{1}{4}v^{\rT}Z(P + P(S-P)^{-1}P - S)Z^{\rT}v}
    |\wt{H}_0(v)|
    \rd v
    \Big)
    \<
    2
        \sn\varphi\sn_{2P}^2
    \int_{\mR^d}
    |\wt{H}_0(v)|
    \rd v.
\end{equation}
However, the difficulty of using the upper bounds (\ref{EEE}) and (\ref{EEE1}) lies in the fact that they involve the norms $\sn \cdot \sn_S$ with different matrices $S\succ P$, which couples  the dissipation inequalities for  $\sn \Phi \sn_P$ resulting from a combination of the bounds with (\ref{diss}).

In conclusion of this section, we note that the IDE (\ref{mhoIDE}) suggests an analogy with the PDF dynamics of classical jump-diffusion processes, especially considering the fact that, in view of the antisymmetry (\ref{Piyv}), 
the integral operator term in (\ref{mhoIDE}) satisfies the identity
$$
    \int_{\mR^n}
    \int_{\mR^d}
    \Pi(x,v) \mho(t,x-\Theta Z^{\rT}v)
    \rd v
    \rd x
    =
    \int_{\mR^n}
        \mho(t,y)
    \int_{\mR^d}
    \Pi(y + \Theta Z^{\rT}v, v)
    \rd v
    \rd y
=    0,
$$
which is closely related to the preservation of the normalization (\ref{one}), similar to the property of classical PDFs. However,
even in the nondegenerate case when $d=n$ and $\det \Theta \ne 0$,  the integral operator $-2\int_{\mR^d} \Pi(x,v)\mho(t,x-\Theta Z^{\rT}v)\rd v$ in (\ref{mhoIDE}) is not necessarily representable  in the form $\int_{\mR^n} p(x\mid y)\lambda(y)\mho(t,y)\rd y - \lambda(x)\mho(t,x)$  which corresponds to a jump-diffusion process whose ``jump'' part is specified by a state-dependent rate $\lambda$ and an absolutely continuous Markov transition kernel with a conditional PDF $p$.   This discrepancy and its interplay with the FPKE part of the IDE (\ref{mhoIDE}) can lead to negative values of the QPDF $\mho$.

\section{Approximate computation of invariant states via operator splitting}\label{sec:inv}

For the class of quantum systems with linear system-field coupling, described in the previous section,   we will now consider the problem of finding invariant states in the form of steady-state solutions of the IDEs (\ref{PhidotHlin}) and (\ref{mhoIDE}) which can be written as
$$
    \d_t\Phi = (\fA+\fB)(\Phi),
    \qquad
    \d_t\mho = (\fF+\fG)(\mho),
$$
where the first of the equations has already been used in (\ref{duhamel}). The integral operators $\fB$ and $\fG$  on the right-hand sides of these IDEs act over the spatial variables of the QCF and QPDF according to (\ref{fB}) and
\begin{equation}
  \label{fG}
    \fG(\mho(t,\cdot))(x)
     :=  -
    2
    \int_{\mR^d}
    \Pi(x,v)\mho(t,x-\Theta Z^{\rT}v)
    \rd v.
 \end{equation}
 By using the general idea of the operator-splitting methods \cite{Mar_1988,S_1968}, the operators $\fB$ and $\fG$
can be regarded as perturbations to the differential operators $\fA$ and $\fF$ of the exactly solvable PDEs (\ref{Phidotgauss}) and (\ref{mhodotgauss}) which have Gaussian steady-state solutions. More precisely, if the matrix $A$ in (\ref{AB}) is Hurwitz, then the open quantum harmonic oscillator (\ref{dXlin}), which represents the linear part  of the system, has a Gaussian invariant state whose QCF $\Phi_0$ and QPDF $\mho_0$ are given by
\begin{equation}
\label{Phimho0}
    \Phi_0 := \Phi_{\mu_0,\Sigma_0},
    \qquad
    \mho_0 := \mho_{\mu_0,\Sigma_0}.
\end{equation}
Here, in view of (\ref{gausschar})--(\ref{muSigma}), the mean vector $\mu_0$ and the real part $\Sigma_0$ of the quantum covariance matrix are computed as
\begin{equation}
\label{muSigma0}
    \mu_0 := -2A^{-1} \Theta b,
    \qquad
    \Sigma_0 := \int_0^{+\infty}\re^{tA} BB^{\rT}\re^{tA^{\rT}}\rd t,
\end{equation}
with $\Sigma_0$ being the unique solution of the algebraic Lyapunov equation $A\Sigma_0 + \Sigma_0A^{\rT} + BB^{\rT} = 0$.
The functions $\Phi_0$ and $\mho_0$ in (\ref{Phimho0}) provide initial approximations to the invariant state of the nonlinear quantum system. The invariant QCF $\Phi_*$ and the invariant QPDF $\mho_*$ of the system can then be sought as the formal series
\begin{equation}
\label{Phimhoinf}
    \Phi_* = \sum_{k=0}^{+\infty} \Phi_k, 
    \qquad
    \mho_* = \sum_{k=0}^{+\infty} \mho_k. 
\end{equation}
The terms $\Phi_k: \mR^n \to \mC$ and their Fourier transforms  $\mho_k:\mR^n\to \mR$ are computed through the recurrence relations
\begin{equation}
\label{Phimhonext}
    \fA(\Phi_k) + \fB(\Phi_{k-1}) = 0,
    \qquad
    \fF(\mho_k) + \fG(\mho_{k-1}) = 0,
\end{equation}
which are organized as nonhomogeneous linear PDEs with respect to $\Phi_k$ and $\mho_k$ subject to the
normalization constraints
\begin{equation}
\label{Phimhonorm}
  \Phi_k(0)=0,
  \qquad
  \int_{\mR^n}
  \mho_k(x)\rd x
  =0
\end{equation}
for all $k=1,2,3,\ldots$, with the Gaussian initial conditions $\Phi_0$ and $\mho_0$ given by (\ref{Phimho0}) and (\ref{muSigma0}). Although the differential operators $\fA$ and $\fF$ themselves are not invertible, the solutions of the equations (\ref{Phimhonext}) and (\ref{Phimhonorm}) are formally representable as
$
    \Phi_k =
(-\fA^{-1} \fB)^k(\Phi_0)$
and
$
    \mho_k = (-\fF^{-1} \fG)^k(\mho_0)
$,
and hence, the convergence of the series in (\ref{Phimhoinf}) depends on the decay of iterates of the operators $\fA^{-1} \fB$ and $\fF^{-1} \fG$ on $\Phi_0$ and $\mho_0$ in an appropriate sense.

As an illustrative example concerning the first perturbation terms $\Phi_1$ and $\mho_1$, suppose (throughout the rest of this section) that the system variables consist of $d:=\frac{n}{2}$ Cartesian position variables $q_1, \ldots, q_d$ and the conjugate momentum operators $p_1, \ldots, p_d$:
\begin{equation}
\label{Xqp}
    X:= \begin{bmatrix}q\\ p\end{bmatrix},
    \qquad
    q:= \begin{bmatrix}q_1\\ \vdots\\ q_d\end{bmatrix} = ZX,
    \qquad
    Z = \begin{bmatrix}I_d & 0\end{bmatrix},
    \qquad
    p:= \begin{bmatrix}p_1\\ \vdots\\ p_d\end{bmatrix} = -i\d_q,
\end{equation}
with the CCR matrix $\Theta$ given by (\ref{ThetaJ}). Furthermore, let the system Hamiltonian $h_0$ be described by (\ref{hh}) with $b=0$ as
\begin{equation}
\label{Gamma}
    h_0 := \frac{1}{2}\big(q^{\rT}\Gamma q + p^{\rT}p\big) +  \phi(q) = \frac{1}{2}X^{\rT} R X + \wt{h}_0,
    \qquad
    R:= \begin{bmatrix} \Gamma & 0\\ 0 & I_d\end{bmatrix},
\end{equation}
where the quadratic part of the total energy is specified by the stiffness matrix $\Gamma \in \mS_n$ and the identity mass matrix $I_d$, while the nonquadratic part of the potential energy is described by a function $\phi:\mR^d\to \mR$ with the Fourier transform $\wt{H}_0$:
\begin{equation}
\label{phi}
    \wt{h}_0 := \phi(q) = \int_{\mR^d} \wt{H}_0(v)\re^{iv^{\rT}q}\rd v.
\end{equation}
In accordance with (\ref{hh}), the matrix $Z$ in (\ref{Xqp}) consists of the first $d$ rows of $I_n$,  and use is made of the mutual commutativity of the position variables in the vector $q$, whereby the corresponding Weyl operator $\cW_{Z^{\rT} v}$ reduces to the usual exponential function  $\re^{iv^{\rT} ZX} = \re^{iv^{\rT}q}$ for any $v \in \mR^d$. Now, consider a negative Gaussian-shaped potential (see, for example, \cite{FM_2004} and references therein):
\begin{equation}
\label{phiMorse}
    \phi(q) := -C\re^{-\frac{1}{2}\|q-\gamma\|_{\Lambda}^2},
\end{equation}
where $C>0$, $\gamma \in \mR^d$, and $\Lambda\in   \mS_d$ is a positive definite matrix. The parameter $\gamma$ of the potential $\phi$
specifies the location of an attracting centre in the position space $\mR^d$ with the stiffness matrix $\phi''(\gamma) = C\Lambda$, with the exponentially fast decay of the attraction at infinity resembling the Morse potential \cite{M_1929} (see Fig.~\ref{fig:morse} for a two-dimensional example of the potential energy in (\ref{Gamma}) with $\Gamma={\small\begin{bmatrix}0.2 & -0.1\\ -0.1 & 0.4\end{bmatrix}}$, $C=1.5$, $\Lambda={\small\begin{bmatrix}6 & 1\\ 1 & 4\end{bmatrix}}$ and $\gamma = {\small\begin{bmatrix}2 \\ 3\end{bmatrix}}$).
\begin{figure}[thpb]
      \centering
      \includegraphics[width=110mm]{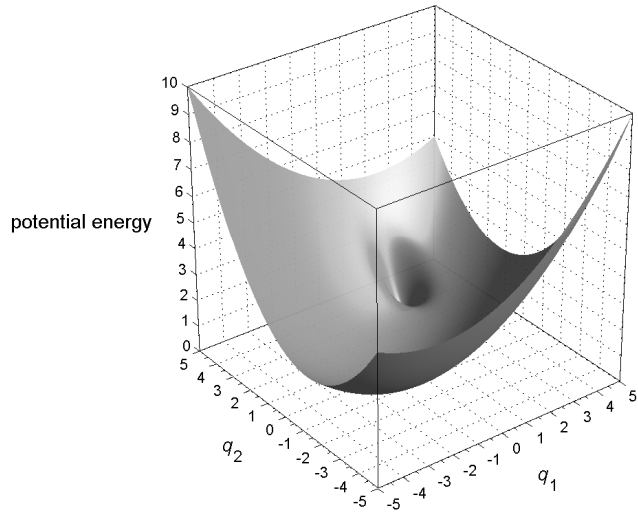}
      \caption{A two-dimensional illustration of the quadratic and negative Gaussian   potential energy  $\frac{1}{2}\|q\|_{\Gamma}^2 + \phi(q)$,   with $\Gamma\succ 0$ and $\phi$ given by (\ref{phiMorse}), as a function of the position vector $q\in \mR^2$. The second local minimum is contributed by $\phi$.}
      \label{fig:morse}
   \end{figure}
The corresponding function $\wt{H}_0$ in (\ref{phi}) is the Fourier transform of (\ref{phiMorse}):
$$
    \wt{H}_0(v)
    = (2\pi)^{-d}\int_{\mR^d}\phi(q)\re^{-iv^{\rT}q}\rd q
    = -C \frac{(2\pi)^{-d/2}}{\sqrt{\det \Lambda}}
    \re^{-iv^{\rT}\gamma-\frac{1}{2}\|v\|_{\Lambda^{-1}}^2},
$$
whose substitution into (\ref{Pi}) leads to the following kernel function of the integral operator $\fG$ in  (\ref{fG}):
\begin{equation}
\label{Piex}
    \Pi(x,v)
    =
    C \frac{(2\pi)^{-d/2}}{\sqrt{\det \Lambda}}
    \re^{-\frac{1}{2}\|v\|_{\Lambda^{-1}}^2}
    \sin(v^{\rT}(\gamma-q)).
\end{equation}
Assuming that the pair $(A,B)$ is controllable in addition to the matrix $A$ being Hurwitz, the parameters (\ref{muSigma0}) of the invariant Gaussian state of the linear part of the system satisfy $\mu_0 = 0$ and $\Sigma_0 \succ 0$. 
The image of the Gaussian QPDF $\mho_0 = \mho_{0,\Sigma_0}$ from (\ref{Phimho0}) under the integral operator $\fG$ in (\ref{fG}), associated with  (\ref{Piex}), can be  computed as
\begin{align}
\nonumber
    \fG(\mho_0)(x)
    & =
    2C \frac{(2\pi)^{-(n+d)/2}}{\sqrt{\det \Lambda\det\Sigma_0}}
    \int_{\mR^d}
    \re^{-\frac{1}{2}(\|v\|_{\Lambda^{-1}}^2 + \|x - \Theta Z^{\rT} v\|_{\Sigma_0^{-1}}^2)}
    \sin(v^{\rT}(q-\gamma))
    \rd v\\
\nonumber
& =
    2C \frac{(2\pi)^{-(n+d)/2}}{\sqrt{\det \Lambda\det\Sigma_0}}
    \re^{-\frac{1}{2}
    \|x\|_{(\Sigma_0 - \Theta Z^{\rT} \Lambda Z \Theta )^{-1}}^2}
    \Im
    \int_{\mR^d}
    \re^{iv^{\rT}(q-\gamma)-\frac{1}{2}\|v+ SZ\Theta \Sigma_0^{-1}x\|_{S^{-1}}^2
}
    \rd v\\
\nonumber
& =
    \frac{2C (2\pi)^{-d}}{\sqrt{\det\Sigma_0\det(I_d - \Lambda Z\Theta \Sigma_0^{-1} \Theta Z^{\rT})}}
    \re^{-\frac{1}{2}
    \|x\|_{(\Sigma_0 - \Theta Z^{\rT} \Lambda Z \Theta )^{-1}}^2}
    \Im
    \re^{i(\gamma-q)^{\rT}SZ\Theta \Sigma_0^{-1}x -\frac{1}{2}\|\gamma-q\|_S^2}
\\
\label{fGmho0}
& =
    E
    \re^{\sigma^{\rT}x -\frac{1}{2}\|x\|_{\alpha}^2}
    \sin
    \Big(\tau^{\rT} x - \frac{1}{2}x^{\rT} \beta x\Big)
\end{align}
for any $x\in \mR^n$, with $q=Zx$ in view of (\ref{Xqp}).
Here, $E>0$ is a constant factor and $S\in \mS_n$ is a positive definite matrix given by
$$
    E:=
    \frac{2C (2\pi)^{-d}\, \re^{-\frac{1}{2}\|\gamma\|_S^2}}{\sqrt{\det\Sigma_0\det(I_d - \Lambda Z\Theta \Sigma_0^{-1} \Theta Z^{\rT})}},
    \qquad
S:= (\Lambda^{-1} - Z\Theta \Sigma_0^{-1} \Theta Z^{\rT})^{-1},
$$
and use is made of
the relation
$
    \Sigma_0^{-1} + \Sigma_0^{-1} \Theta Z^{\rT}S Z\Theta \Sigma_0^{-1}
    =
    (\Sigma_0 - \Theta Z^{\rT} \Lambda Z \Theta )^{-1}
$
which follows
from the Sherman-Morrison-Woodbury
matrix identity  \cite{HJ_2007}. Also, $\alpha, \beta \in \mS_n$, $\sigma, \tau \in \mR^n$ are auxiliary matrices and vectors given by
\begin{align*}
    \alpha
     & :=
    (\Sigma_0 - \Theta Z^{\rT} \Lambda Z \Theta )^{-1}
    +
    Z^{\rT}SZ,
    \qquad
    \beta
     :=
    Z^{\rT}
    S
    Z\Theta \Sigma_0^{-1}
    -
    \Sigma_0^{-1}\Theta Z^{\rT}
    S
    Z,\\
    \sigma
     & :=
    Z^{\rT}S\gamma,
    \qquad\qquad\qquad\qquad\qquad\quad
    \tau
     :=
    -\Sigma_0^{-1}\Theta Z^{\rT}S \gamma.
\end{align*}
Since $\alpha \succ 0$,
the oscillatory quadratic-exponential function $\fG(\mho_0)(x)$ in (\ref{fGmho0}) has a Gaussian decay rate
as $x\to \infty$.  In view of (\ref{mhodotgauss}), (\ref{Phimhonext}) and (\ref{Phimhonorm}), the first correction $\mho_1$ of $\mho_0$ towards the invariant QPDF $\mho_*$ in (\ref{Phimhoinf}) is found by solving the problem
\begin{equation}
\label{PDE}
    \fF(\mho_1)(x)
    :=
    \div\Big(\frac{1}{2}BB^{\rT}\mho_1'(x) - \mho_1(x)Ax\Big)
    =
    -\fG(\mho_0)(x),
    \qquad
    \int_{\mR^n}
    \mho_1(x)\rd x = 0.
\end{equation}
The Green's function $\kappa: \mR^n\x \mR^n\to \mR$ for this problem 
is expressed in terms of the transitional PDF of the classical Markov diffusion process with the generator $\fF^{\dagger}$ as
\begin{equation}
\label{kappa}
    \kappa(x,y)
    :=
        \int_0^{+\infty}
    \big(
        \mho_{\re^{tA} y,\, \Sigma(t)}(x)-\mho_0(x)
    \big)
    \rd t,
\end{equation}
where use is made of the Gaussian PDF (\ref{gaussWig}) and the finite-horizon controllability Gramian $\Sigma(t)$ from (\ref{muSigma}). The convergence of the improper integral in (\ref{kappa}) at $+\infty$ is secured by the exponentially fast convergence of $\re^{tA}$ to $0$ and $\Sigma(t)$ to $\Sigma_0$ as $t\to +\infty$ due to the matrix $A$ being Hurwitz. The convergence of this integral at $0$ can be ensured by an additional assumption of ellipticity  $BB^{\rT}\succ 0$, which is equivalent to the matrix $B$ being of full row rank and is stronger than the controllability of $(A,B)$.
Note that $\int_{\mR^n}\kappa(x,y)\rd x = 0$ for all $y\in \mR^n$.
The solution of the problem (\ref{PDE}) takes the form
\begin{equation}
\label{Fres}
    \mho_1(x)
    =
    \int_{\mR^n}
    \kappa(x,y)
    \fG(\mho_0)(y)
    \rd y.
\end{equation}
Although, in view of (\ref{fGmho0}), the right-hand side of (\ref{Fres}) resembles the structure of Fresnel integrals \cite{V_2002} with a Gaussian damping, its calculation in closed form is complicated by the presence of the integration over time in (\ref{kappa}).

The above phase-space approach to the approximation of invariant quantum states as steady-state solutions of the IDEs for the QCF and QPDF dynamics
can be extended to the fundamental solutions of the IDEs (which are quantum counterparts of the classical Markov transition kernels and  specify the relaxation dynamics of the system towards the equilibrium).

\section{$\chi^2$-divergence from Gaussian states}\label{sec:chi2}

For the quantum systems with linear system-field coupling from Section~\ref{sec:Hlin},
we will now consider the deviation of the actual quantum state from Gaussian states which the system would have if its dynamics (\ref{dX}) were linear and the initial state were Gaussian. This deviation can be quantified by the $\chi^2$-divergence
\begin{equation}
\label{chi2}
    D_{\mu,\Sigma}(\mho)
    :=
    \int_{\mR^n}
    \frac{(\mho-\mho_{\mu,\Sigma})^2}{\mho_{\mu,\Sigma}}\rd x
    =
    \int_{\mR^n}
    \frac{\mho^2}{\mho_{\mu,\Sigma}}\rd x
    -
    1
    =
    \re^{\bR(\mho\| \mho_{\mu,\Sigma})}-1
\end{equation}
which, unlike the standard Kullback-Leibler relative entropy \cite{CT_1991,G_2008}, is well-defined despite possible negative values of the QPDF $\mho$.
We have omitted the arguments of the functions for brevity and used the normalization (\ref{one}) which holds for an arbitrary QPDF $\mho$, including its Gaussian case $\mho_{\mu,\Sigma}$ in (\ref{gaussWig}).
The $\chi^2$-divergence in (\ref{chi2}) is expressed in terms of the second-order Renyi relative entropy \cite{R_1961} of a PDF $p: \mR^n\to \mR_+$ with respect to a reference PDF $r: \mR^n\to \mR_+$  defined by
$$
    \bR(p\| r):= \ln \int_{\mR^n}\frac{p^2}{r}\rd x = \ln \int_{\mR^n}\left(\frac{p}{r}\right)^2 r\rd x,
$$
provided
$p$ is absolutely continuous with respect to $r$ (in the sense that $p=0$ whenever $r=0$). Although the QPDF  $\mho$ can take negative values, the Renyi relative entropy retains its usefulness as a measure of deviation in (\ref{chi2}) due to (\ref{one}) and the fact that $\mho_{\mu,\Sigma}$ is a legitimate PDF. This follows from the Cauchy-Bunyakovsky-Schwarz inequality
$$
    1
    =
    \Big(
    \int_{\mR^n}\frac{\mho}{\sqrt{\mho_{\mu,\Sigma}}}\sqrt{\mho_{\mu,\Sigma}}\ \rd x
    \Big)^2
    \<
    \int_{\mR^n}\frac{\mho^2}{\mho_{\mu,\Sigma}}\rd x
    \int_{\mR^n}\mho_{\mu,\Sigma}\rd x
    =
    D_{\mu,\Sigma}(\mho)+1
$$
which becomes an equality (that is, $D_{\mu,\Sigma}(\mho)=\bR(\mho\|\mho_{\mu,\Sigma})=0$) if and only if $\mho = \mho_{\mu,\Sigma}$. The $\chi^2$-divergence allows the deviation of the actual quantum state of the system from the class of Gaussian states to be described by
\begin{equation}
\label{inf}
    D(\mho)
    :=
    \inf_{\mu\in \mR^n,\ \Sigma\in \mS_n:\ \Sigma\succ 0\ {\rm and}\ \Sigma+i\Theta\succcurlyeq 0}
    D_{\mu,\Sigma}(\mho).
\end{equation}
If the infimum in (\ref{inf}) is achieved, then the appropriate values of $\mu$ and $\Sigma$ specify an ``optimal'' approximation (among Gaussian states) for the actual state of the system. The evolution of this optimal Gaussian state corresponds to an ``effective'' open quantum harmonic oscillator. The  optimal values $\mu_*$ and $\Sigma_*$ can be found as a unique critical point of the $\chi^2$-divergence in (\ref{chi2}) by equating to zero the derivatives
\begin{align}
\label{dchidmu}
    \d_{\mu}D_{\mu,\Sigma}(\mho) = \int_{\mR^n}\mho^2 \d_{\mu}\big(\mho_{\mu,\Sigma}^{-1}\big) \rd x
    & =
    \Sigma^{-1}
    \int_{\mR^n}
    \frac{\mho^2}{\mho_{\mu,\Sigma}} (\mu-x)\rd x,\\
\label{dchidSigma}
    \d_{\Sigma}D_{\mu,\Sigma}(\mho) = \int_{\mR^n}\mho^2 \d_{\Sigma}\big(\mho_{\mu,\Sigma}^{-1}\big) \rd x
    & =
    \frac{1}{2}
    \Sigma^{-1}
    \int_{\mR^n}
    \frac{\mho^2}{\mho_{\mu,\Sigma}} \big(\Sigma-(x-\mu)(x-\mu)^{\rT}\big)\rd x\,
    \Sigma^{-1},
\end{align}
provided $\Sigma_*+i\Theta \succcurlyeq  0$. 
Indeed, in view of the strict concavity  of $\ln\det(\cdot)$ on the set of positive definite matrices \cite{HJ_2007},  for any given $x\in \mR^n$,  the quantity
$$
    \frac{1}{\mho_{\mu,\Sigma}(x)} =
    (2\pi)^{n/2} \exp
    \Big(
        \frac{1}{2}
            |
                \wt{\mu}-\Sigma^{-1/2}x
            |^2 -
            \ln\det (\Sigma^{-1/2})
    \Big)
$$
is a strictly convex function of $(\wt{\mu}, \Sigma^{-1/2})$, where  $\wt{\mu}:= \Sigma^{-1/2}\mu$, and hence, so also is $D_{\mu,\Sigma}(\mho)$ in (\ref{chi2}). Therefore, since there is a smooth bijection between the pairs $(\wt{\mu}, \Sigma^{-1/2})$ and $(\mu,\Sigma)$, the minimization problem  (\ref{inf}) has at most one solution on an open set $\{(\mu,\Sigma)\in \mR^n\x \mS_n:\ \Sigma\succ 0\}$. This solution, when it exists, is necessarily a critical point of $D_{\mu,\Sigma}(\mho)$, and there are no other critical points. If the critical point of $D_{\mu,\Sigma}(\mho)$ satisfies the uncertainty principle condition $\Sigma_*+i\Theta \succcurlyeq  0$, this point also delivers a solution to the constrained problem (\ref{inf}).
Now, the relations (\ref{dchidmu}) and (\ref{dchidSigma}) lead to a fixed-point problem with respect to $\mu_*$ and $\Sigma_*$ described by the coupled nonlinear  vector-matrix equations
\begin{equation}
\label{muSigma*}
    \mu_* = \int_{\mR^n} p_{\mu_*,\Sigma_*}(x)x \rd x,
    \qquad
    \Sigma_* = \int_{\mR^n} p_{\mu_*,\Sigma_*}(x) xx^{\rT}\rd x - \mu_*\mu_*^{\rT},
\end{equation}
whose right-hand sides are the mean vector and the covariance matrix for an auxiliary PDF $p_{\mu_*,\Sigma_*}: \mR^n \to \mR_+$ associated with the QPDF $\mho$ as
\begin{equation}
\label{pmuSigma}
    p_{\mu,\Sigma} = \frac{\mho^2}{(1 + D_{\mu,\Sigma}(\mho))\mho_{\mu,\Sigma}}.
\end{equation}
A different approach to the linearization of nonlinear quantum dynamics has recently been proposed in \cite{VP_2012a} as a quantum Gaussian stochastic linearization method which employs quadratic approximation of Hamiltonians. The study of ``non-Gaussianity'' of quantum states  and their Gaussian approximations based on (\ref{inf}) can benefit from the following dissipation relation for
the $\chi^2$-divergence for fixed parameters $\mu$ and $\Sigma$.

\begin{thm}
\label{th:diss}
Suppose the system-field coupling operators $h_1, \ldots, h_m$ are linear functions of the system variables described by (\ref{N}) and (\ref{Hlin}), and the system Hamiltonian $h_0$ is decomposed according to (\ref{hh}). Also, let the QPDF $\mho$ be continuously differentiable with respect to time and twice continuously differentiable with respect to its spatial variables and satisfy the conditions
\begin{equation}
\label{mhodecay}
  \mho(t,x) = o\Big(\sqrt{\mho_{\mu,\Sigma}(x)} |x|^{-n/2}\Big),
  \qquad
  \d_x \mho(t,x) = o\Big(\sqrt{\mho_{\mu,\Sigma}(x)} |x|^{1-n/2}\Big),
  \qquad
  x \to \infty,
\end{equation}
uniformly over any bounded time interval.
Then the $\chi^2$-divergence  $D_{\mu,\Sigma}(\mho)$ of the actual QPDF $\mho$ from the Gaussian PDF (\ref{gaussWig}) in (\ref{chi2}) satisfies the dissipation relation
\begin{align}
\nonumber
    \d_t D_{\mu,\Sigma}(\mho)
    &
    +(A\mu+2\Theta b)^{\rT}\d_{\mu}D_{\mu,\Sigma}(\mho)
    +
        \Bra
            A\Sigma + \Sigma A^{\rT} + BB^{\rT},\,
            \d_{\Sigma} D_{\mu,\Sigma}(\mho)
        \Ket_{\rF}\\
\label{chidot}
        &
        +
        \underbrace{
        \int_{\mR^n} \frac{|B^{\rT}\d_x\mho|^2}{\mho_{\mu,\Sigma}}\rd x
        -
        \Bra
            BB^{\rT},\, \Sigma^{-1}
        \Ket_{\rF}
        \big(D_{\mu,\Sigma}(\mho) + 1\big)
        }_{\rm nonnegative}
        =
        2
        \Bra
            \frac{\mho}{\mho_{\mu,\Sigma}},\,
            \fG(\mho)
        \Ket.
\end{align}
Here, $A$ and $B$ are the matrices given by (\ref{AB}), and $\fG$ is the integral operator defined by (\ref{fG}).
\hfill $\square$
\end{thm}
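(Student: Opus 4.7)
The strategy is to differentiate the $\chi^2$-divergence in time using the IDE (\ref{mhoIDE}) from Theorem~\ref{th:Hlin}, which splits $\d_t\mho$ into the Fokker-Planck drift-diffusion part $\fF(\mho)$ and the integral term $\fG(\mho)$ of (\ref{fG}). With the ratio $\psi := \mho/\mho_{\mu,\Sigma}$, I have $\d_t D_{\mu,\Sigma}(\mho) = 2\int_{\mR^n}\psi\,\d_t\mho\,\rd x$, and the contribution from $\fG(\mho)$ immediately yields the right-hand side $2\bra\mho/\mho_{\mu,\Sigma},\fG(\mho)\ket$ of (\ref{chidot}). What remains is to convert the Fokker-Planck contribution $-2\int\psi\,\div(\mho(Ax+2\Theta b))\rd x + \int\psi\bra BB^{\rT},\d_x^2\mho\ket_{\rF}\rd x$ into the required combination of $\d_{\mu}D$, $\d_{\Sigma}D$ and a Fisher-information-like residual, with the decay conditions (\ref{mhodecay}) justifying the discarding of boundary terms in every integration by parts that follows.

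For the drift part, one IBP together with the identity $2\mho\,\d_x\psi = \d_x(\mho_{\mu,\Sigma}\psi^2) - \psi^2\d_x\mho_{\mu,\Sigma}$ and a second IBP (using $\d_x\mho_{\mu,\Sigma} = -\Sigma^{-1}(x-\mu)\mho_{\mu,\Sigma}$) converts everything into integrals against the possibly signed measure $\tfrac{\mho^2}{\mho_{\mu,\Sigma}}\rd x$. After splitting $Ax+2\Theta b = A(x-\mu) + (A\mu+2\Theta b)$ and introducing the pseudo-moments $m := \int\tfrac{\mho^2}{\mho_{\mu,\Sigma}}(x-\mu)\rd x$ and $M := \int\tfrac{\mho^2}{\mho_{\mu,\Sigma}}(x-\mu)(x-\mu)^{\rT}\rd x$, the formulas (\ref{dchidmu}) and (\ref{dchidSigma}) give $m = -\Sigma\,\d_{\mu}D$ and $\Sigma^{-1}M\Sigma^{-1} = (D+1)\Sigma^{-1} - 2\,\d_{\Sigma}D$; the $\Tr(A)$ contributions cancel, leaving precisely $-(A\mu+2\Theta b)^{\rT}\d_{\mu}D - \bra A\Sigma + \Sigma A^{\rT},\,\d_{\Sigma}D\ket_{\rF}$.

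For the diffusion part, one IBP gives $-\int(\d_x\psi)^{\rT}BB^{\rT}\d_x\mho\,\rd x$; then the key identity $\d_x\mho = \mho_{\mu,\Sigma}\d_x\psi - \psi\,\mho_{\mu,\Sigma}\Sigma^{-1}(x-\mu)$ decomposes this into $-\int\mho_{\mu,\Sigma}|B^{\rT}\d_x\psi|^2\rd x$ and a cross term which, after one more IBP and the same $M$-to-$\d_{\Sigma}D$ substitution, reduces to $-\bra BB^{\rT},\d_{\Sigma}D\ket_{\rF}$, completing the Lyapunov coefficient $A\Sigma + \Sigma A^{\rT} + BB^{\rT}$ of $\d_{\Sigma}D$ in (\ref{chidot}). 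Substituting the same decomposition of $\d_x\mho$ into $|B^{\rT}\d_x\mho|^2/\mho_{\mu,\Sigma}$ and applying $\Sigma^{-1}M\Sigma^{-1} = (D+1)\Sigma^{-1} - 2\,\d_{\Sigma}D$ once more shows that $\int\tfrac{|B^{\rT}\d_x\mho|^2}{\mho_{\mu,\Sigma}}\rd x - (D+1)\bra BB^{\rT},\Sigma^{-1}\ket_{\rF} = \int\mho_{\mu,\Sigma}|B^{\rT}\d_x\psi|^2\rd x$, which is manifestly nonnegative since $\mho_{\mu,\Sigma}>0$, establishing both the claimed identity and the nonnegativity annotation.

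The main obstacle is the careful algebraic bookkeeping that forces the $\Tr(A)$ terms and the various occurrences of $(D+1)\bra BB^{\rT},\Sigma^{-1}\ket_{\rF}$ to cancel in just the right way across the drift IBP, the diffusion IBP, and the rearrangement of the Fisher-information term, so that the full Lyapunov coefficient $A\Sigma + \Sigma A^{\rT} + BB^{\rT}$ emerges clean on $\d_{\Sigma}D$ while the residual dissipation turns out \emph{a priori} nonnegative as a weighted squared $L^2$-norm of $B^{\rT}\d_x\psi$ against the Gaussian reference density $\mho_{\mu,\Sigma}$.
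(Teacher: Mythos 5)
Your derivation of the main identity follows essentially the same route as the paper: differentiate $D_{\mu,\Sigma}(\mho)$ in time via the IDE (\ref{mhoIDE}), absorb the $\fG$-term into the right-hand side, and convert the Fokker--Planck contribution by integration by parts (justified by (\ref{mhodecay})) together with the identities $\d_x(\mho_{\mu,\Sigma}^{-1})=-\d_\mu(\mho_{\mu,\Sigma}^{-1})$ and $\d_\Sigma(\mho_{\mu,\Sigma}^{-1})=\frac{1}{2}\mho_{\mu,\Sigma}^{-1}(\Sigma^{-1}-\xi\xi^{\rT})$, which is exactly how the paper trades the pseudo-moments $m$ and $M$ for $\d_\mu D$ and $\d_\Sigma D$ and cancels the $\Tr A$ terms. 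I checked your algebra ($m=-\Sigma\,\d_\mu D$, $\Sigma^{-1}M\Sigma^{-1}=(D+1)\Sigma^{-1}-2\d_\Sigma D$, the cross-term reducing to $-\bra BB^{\rT},\d_\Sigma D\ket_{\rF}$) and it closes correctly. Where you genuinely depart from the paper is the nonnegativity of the bracketed term: the paper does not prove it inside the theorem at all, but defers it to a separate weighted Dirichlet-type inequality (Lemma~\ref{lem:Dir}), proved by a ground-state argument for an auxiliary quantum harmonic oscillator Hamiltonian. You instead obtain it directly by substituting $\d_x\mho=\mho_{\mu,\Sigma}\d_x\psi-\psi\mho_{\mu,\Sigma}\Sigma^{-1}(x-\mu)$ with $\psi:=\mho/\mho_{\mu,\Sigma}$, integrating the cross term by parts, and observing the exact cancellation that leaves
\begin{equation*}
  \int_{\mR^n}\frac{|B^{\rT}\d_x\mho|^2}{\mho_{\mu,\Sigma}}\rd x-(D_{\mu,\Sigma}(\mho)+1)\bra BB^{\rT},\Sigma^{-1}\ket_{\rF}
  =\int_{\mR^n}\mho_{\mu,\Sigma}\,|B^{\rT}\d_x\psi|^2\rd x\;\>\;0,
\end{equation*}
which is a correct and more elementary argument (it is the trace-against-$BB^{\rT}$ form of the matrix identity underlying Lemma~\ref{lem:Dir}, obtained by completing the square rather than by a spectral argument). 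Your version buys an explicit formula for the dissipation residual as a weighted Fisher-information-like quantity in $\psi$, at the price of losing the full matrix inequality and the sharp characterization of the equality case ($\varphi\propto\mho_{\mu,\Sigma}$) that the paper's lemma provides.
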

\begin{proof}
Since $\mu$ and $\Sigma$ are fixed, $D_{\mu,\Sigma}(\mho)$ depends on time only through the QPDF $\mho$. By differentiating (\ref{chi2}) with respect to time and using the IDE (\ref{mhoIDE}), it follows that
\begin{align*}
\nonumber
    \d_t D_{\mu,\Sigma}(\mho)
    = &
    2
    \int_{\mR^n}
        \frac{\mho\d_t\mho }{\mho_{\mu,\Sigma}}
        \rd x\\
\nonumber
    =&
    2
    \int_{\mR^n}
        \frac{\mho(t,x)}{\mho_{\mu,\Sigma}(x)}
        \Big(\!\!\!
            -\div(\mho(t,x)(Ax+2\Theta b)) + \frac{1}{2}\div^2(\mho(t,x) BB^{\rT})+ \fG(\mho(t,\cdot))(x)
        \!\Big)\rd x\\
\nonumber
    = &
    -(A\mu+2\Theta b)^{\rT}\d_{\mu}D_{\mu,\Sigma}(\mho)
    -
        \Bra
            A\Sigma + \Sigma A^{\rT} + BB^{\rT},\,
            \d_{\Sigma} D_{\mu,\Sigma}(\mho)
        \Ket_{\rF}\\
        &
        +
        \Bra
            BB^{\rT},\, \Sigma^{-1}
        \Ket_{\rF}
        \big(D_{\mu,\Sigma}(\mho) + 1\big)
        - \int_{\mR^n} \frac{|B^{\rT}\d_x\mho(t,x)|^2}{\mho_{\mu,\Sigma}(x)}\rd x
        +
        2
        \Bra
            \frac{\mho}{\mho_{\mu,\Sigma}},
            \fG(\mho)
        \Ket,
\end{align*}
which leads to (\ref{chidot}).
Here, use is made of the divergence theorem in combination with the decay rate conditions (\ref{mhodecay}) and the relations
\begin{align*}
\frac{2\mho}{\mho_{\mu,\Sigma}}
    \div(\mho(Ax+2\Theta b))
    & =
    \div\Big(\frac{\mho^2}{\mho_{\mu,\Sigma}}(Ax+2\Theta b)\Big)
    +
    \frac{\mho^2}{\mho_{\mu,\Sigma}}
    \Tr A
    -
    \mho^2    (Ax+2\Theta b)^{\rT}
    \d_x\big(\mho_{\mu,\Sigma}^{-1}\big)\\
    & =
    \div\Big(\frac{\mho^2}{\mho_{\mu,\Sigma}}(Ax+2\Theta b)\Big)
     +
    (A\mu+2\Theta b)^{\rT}\d_{\mu}\Big(\frac{\mho^2}{\mho_{\mu,\Sigma}}\Big)
    +
        \Bra
            A\Sigma + \Sigma A^{\rT},\,
            \d_{\Sigma} \Big(\frac{\mho^2}{\mho_{\mu,\Sigma}}\Big)
        \Ket_{\rF},
    \\
\frac{\mho}{\mho_{\mu,\Sigma}}
    \div^2(\mho BB^{\rT})
    & =
    \div
    \Big(
        \frac{\mho}{\mho_{\mu,\Sigma}} BB^{\rT} \d_x \mho
    \Big)
    -
    \frac{1}{2}
    \div \big(\mho^2 BB^{\rT} \d_x \big(\mho_{\mu,\Sigma}^{-1}\big)\big)
     +
    \frac{1}{2}
    \mho^2
    \Bra
        BB^{\rT},\,
        \d_x^2 \big(\mho_{\mu,\Sigma}^{-1}\big)
    \Ket_{\rF}
    -
    \frac{|B^{\rT}\d_x \mho|^2}{\mho_{\mu,\Sigma}}\\
    & =
    \frac{1}{2}
    \div
    \big(
        \mho_{\mu,\Sigma}^{-2}
        BB^{\rT}
        \d_x
        \big(
            \mho_{\mu,\Sigma}\mho^2
        \big)
    \big)
    -
    \Bra
        BB^{\rT},\,
        \d_{\Sigma}
        \Big(
            \frac{\mho^2}{\mho_{\mu,\Sigma}}
        \Big)
    \Ket_{\rF}
    +
    \Bra
        BB^{\rT},
        \Sigma^{-1}
    \Ket_{\rF}
    \frac{\mho^2}{\mho_{\mu,\Sigma}}
    -
    \frac{|B^{\rT}\d_x \mho|^2}{\mho_{\mu,\Sigma}}.
\end{align*}
In turn, these relations are obtained from the following identities for the Gaussian PDF $\mho_{\mu,\Sigma}$ in (\ref{gaussWig}):
\begin{align}
\nonumber
(Ax+2\Theta b)^{\rT}
    \d_x\big(\mho_{\mu,\Sigma}^{-1}\big)
    & =
    \mho_{\mu,\Sigma}^{-1}
    \Big(
        (A\mu + 2\Theta b)^{\rT}\xi
        +
        \frac{1}{2}
        \xi^{\rT}
        (A\Sigma + \Sigma A^{\rT})
        \xi
    \Big)\\
\nonumber
    & =
        -(A\mu + 2\Theta b)^{\rT}   \d_{\mu}\big(\mho_{\mu,\Sigma}^{-1}\big)
        +
        \frac{1}{2}
        \Bra
            A\Sigma + \Sigma A^{\rT},\,
            \mho_{\mu,\Sigma}^{-1}\Sigma^{-1}
            -
            2\d_{\Sigma} \big(\mho_{\mu,\Sigma}^{-1}\big)
        \Ket_{\rF}
        \\
\nonumber
    & =
        \frac{\Tr A}{\mho_{\mu,\Sigma}}
        -(A\mu + 2\Theta b)^{\rT}   \d_{\mu}\big(\mho_{\mu,\Sigma}^{-1}\big)
        -
        \Bra
            A\Sigma + \Sigma A^{\rT},\,
            \d_{\Sigma} \big(\mho_{\mu,\Sigma}^{-1}\big)
        \Ket_{\rF},
        \\
\label{dmu}
\d_x
    \big(
        \mho_{\mu,\Sigma}^{-1}
    \big)
    & =
    \mho_{\mu,\Sigma}^{-1}\xi =
    -\d_{\mu}
    \big(
        \mho_{\mu,\Sigma}^{-1}
    \big),\\
\nonumber
\d_x^2
    \big(
        \mho_{\mu,\Sigma}^{-1}
    \big)
    & =
    \mho_{\mu,\Sigma}^{-1}(\Sigma^{-1} + \xi\xi^{\rT})
    =
    2
    \big(
        \mho_{\mu,\Sigma}^{-1} \Sigma^{-1} - \d_{\Sigma}\big(\mho_{\mu,\Sigma}^{-1}\big)
    \big),
\label{dSigma}
    \\
    \d_{\Sigma}
    \big(
        \mho_{\mu,\Sigma}^{-1}
    \big)
    & =
    \frac{1}{2}
    \mho_{\mu,\Sigma}^{-1}
    \big(
        \Sigma^{-1}-\xi\xi^{\rT}
    \big),
\end{align}
 which employ an auxiliary variable $\xi:= \Sigma^{-1}(x-\mu)$,
 where (\ref{dmu}) and (\ref{dSigma}) have already been used in (\ref{dchidmu}) and (\ref{dchidSigma}).
\end{proof}

Although it is not discussed in the proof of Theorem~\ref{th:diss},  the nonnegativeness of the indicated term in (\ref{chidot}) is a corollary of the representation
$$
        \int_{\mR^n} \frac{|B^{\rT}\d_x\mho|^2}{\mho_{\mu,\Sigma}(x)}\rd x
          -
        \Bra
            BB^{\rT},\, \Sigma^{-1}
        \Ket_{\rF}
        \big(D_{\mu,\Sigma}(\mho) + 1\big)
          =
        \Bra
            BB^{\rT},
            \int_{\mR^n}
            \frac{\d_x\mho\d_x\mho^{\rT}}{\mho_{\mu,\Sigma}}\rd x
            -
            \int_{\mR^n}
            \frac{\mho^2}{\mho_{\mu,\Sigma}}\rd x\,
            \Sigma^{-1}
        \Ket_{\rF}
$$
in view of the following lemma which can be regarded as a weighted matrix-valued version of the Dirichlet variational principle \cite{E_1998}.

\begin{lem}
\label{lem:Dir}
Suppose $\varphi: \mR^n \to \mR$ is a twice continuously differentiable function, which is square integrable together with its gradient $\varphi'$ with the weight $\mho_{\mu,\Sigma}^{-1}$ (the reciprocal of the Gaussian PDF from (\ref{gaussWig}) with mean vector $\mu \in \mR^n$ and covariance matrix $\Sigma \succ 0$)  and satisfies
\begin{equation}
\label{phiphi}
    \varphi(x) \varphi'(x) = o\big(\mho_{\mu,\Sigma}(x)|x|^{1-n}\big),
    \qquad
    x\to \infty.
\end{equation}
Then
\begin{equation}
\label{Dir}
        \int_{\mR^n}
        \frac{\varphi'\varphi'^{\rT}}{\mho_{\mu,\Sigma}}\rd x
        \succcurlyeq
        \int_{\mR^n} \frac{\varphi^2}{\mho_{\mu,\Sigma}}
        \rd x \,
        \Sigma^{-1}.
\end{equation}
Moreover, this inequality becomes an equality if and only if $\varphi$ coincides with $\mho_{\mu,\Sigma}$ up to a constant factor. \hfill$\square$
\end{lem}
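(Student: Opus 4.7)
The plan is to prove the matrix inequality~(\ref{Dir}) by introducing the substitution $\eta := \varphi/\mho_{\mu,\Sigma}$, which recasts both integrals on the two sides of (\ref{Dir}) as integrals against the genuine Gaussian density $\mho_{\mu,\Sigma}$ rather than against its reciprocal. Using the elementary log-derivative identities $\d_x \mho_{\mu,\Sigma} = -\mho_{\mu,\Sigma}\xi$ and $\d_x \xi^{\rT} = \Sigma^{-1}$ (with $\xi := \Sigma^{-1}(x-\mu)$), which already appeared in the proof of Theorem~\ref{th:diss}, the chain rule yields $\varphi' = \mho_{\mu,\Sigma}(\eta'- \xi\eta)$ and hence the pointwise expansion
\begin{equation*}
    \frac{\varphi'\varphi'^{\rT}}{\mho_{\mu,\Sigma}}
    =
    \mho_{\mu,\Sigma}\eta'\eta'^{\rT}
    +
    \mho_{\mu,\Sigma}\eta^2\xi\xi^{\rT}
    -
    \mho_{\mu,\Sigma}(\xi\eta\eta'^{\rT}+\eta'\eta\xi^{\rT}).
\end{equation*}

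The main step will be integration by parts on the cross term. Writing $2\eta\eta' = \d_x(\eta^2)$ and applying IBP entrywise, together with the two identities above (which give $\d_k(\mho_{\mu,\Sigma}\xi_j) = \mho_{\mu,\Sigma}([\Sigma^{-1}]_{jk} - \xi_j\xi_k)$), produces the matrix relation
\begin{equation*}
    \int_{\mR^n}\mho_{\mu,\Sigma}(\xi\eta\eta'^{\rT}+\eta'\eta\xi^{\rT})\rd x
    =
    \int_{\mR^n}\mho_{\mu,\Sigma}\eta^2\xi\xi^{\rT}\rd x
    -
    \Sigma^{-1}\int_{\mR^n}\mho_{\mu,\Sigma}\eta^2\rd x.
\end{equation*}
Substituting this back into the pointwise expansion, the $\xi\xi^{\rT}$-contributions cancel pairwise and one is left with the clean identity
\begin{equation*}
    \int_{\mR^n}\frac{\varphi'\varphi'^{\rT}}{\mho_{\mu,\Sigma}}\rd x
    -
    \int_{\mR^n}\frac{\varphi^2}{\mho_{\mu,\Sigma}}\rd x\,\Sigma^{-1}
    =
    \int_{\mR^n}\mho_{\mu,\Sigma}\eta'\eta'^{\rT}\rd x,
\end{equation*}
where we have also used $\int \varphi^2/\mho_{\mu,\Sigma}\rd x = \int \mho_{\mu,\Sigma}\eta^2 \rd x$ and the fact that a scalar commutes with $\Sigma^{-1}$. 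The right-hand side is manifestly positive semi-definite, being the integral, against the strictly positive density $\mho_{\mu,\Sigma}$, of the outer product $\eta'(x)\eta'(x)^{\rT}$ of a real vector with itself, which establishes~(\ref{Dir}).

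The equality case reads off immediately from this identity: vanishing of the right-hand side (as a positive semi-definite matrix) forces $\eta' \equiv 0$ $\mho_{\mu,\Sigma}$-almost everywhere; continuous differentiability of $\eta = \varphi/\mho_{\mu,\Sigma}$ together with the strict positivity of $\mho_{\mu,\Sigma}$ then implies that $\eta$ is identically constant, i.e., $\varphi$ is a constant multiple of $\mho_{\mu,\Sigma}$. The main anticipated obstacle is the careful justification of the IBP step, which requires the boundary surface integrals on spheres of radius $R\to\infty$ to vanish. These integrals involve quantities of the form $\mho_{\mu,\Sigma}\xi\eta^2 = \xi\varphi^2/\mho_{\mu,\Sigma}$; their radial decay is controlled through the hypothesis~(\ref{phiphi}) on $\varphi\varphi'$ together with the assumed square integrability of $\varphi$ and $\varphi'$ against the weight $\mho_{\mu,\Sigma}^{-1}$, via a Cauchy--Bunyakovsky--Schwarz estimate on dyadic radial annuli that extracts a sequence $R_k\to\infty$ along which the surface integrals tend to zero. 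Apart from this integrability bookkeeping, the proof is a purely algebraic manipulation based on the single substitution $\eta = \varphi/\mho_{\mu,\Sigma}$.
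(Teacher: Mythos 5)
Your proof is correct, but it follows a genuinely different route from the paper's. The paper reduces to the standard Gaussian by an affine change of variables, tests the matrix inequality against an arbitrary $T\succ 0$, substitutes $\psi:=\varphi/\sqrt{\mho_{0,I_n}}$, and identifies the resulting quadratic form with $\bra \psi,\fH(\psi)\ket+\frac{1}{2}\Tr T\|\psi\|_2^2$ for a harmonic-oscillator Hamiltonian $\fH$, so that (\ref{Dir}) and its equality case follow from the ground-state energy $\frac{1}{2}\Tr T$ and the uniqueness of the ground state $\sqrt{\mho_{0,I_n}}$. Your substitution $\eta:=\varphi/\mho_{\mu,\Sigma}$ instead produces, after one integration by parts, the \emph{exact} identity $\int\varphi'\varphi'^{\rT}\mho_{\mu,\Sigma}^{-1}\rd x-\int\varphi^2\mho_{\mu,\Sigma}^{-1}\rd x\,\Sigma^{-1}=\int\mho_{\mu,\Sigma}\eta'\eta'^{\rT}\rd x$, whose right-hand side is manifestly positive semi-definite and vanishes only for constant $\eta$. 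This is more elementary (no spectral input, no reduction to $\mu=0$, $\Sigma=I_n$, no auxiliary matrix $T$) and delivers the equality case for free; the paper's version buys the physical interpretation via the oscillator ground state. Two bookkeeping points deserve attention in your write-up: (i) the hypothesis (\ref{phiphi}) is tailored to the paper's boundary term $\varphi T\varphi'/\mho_{\mu,\Sigma}$, whereas your surface integrals involve $\xi\varphi^2/\mho_{\mu,\Sigma}$ with $\xi:=\Sigma^{-1}(x-\mu)$, which (\ref{phiphi}) does not control directly --- you instead need the subsequence argument you sketch, which follows already from $\int\varphi^2\mho_{\mu,\Sigma}^{-1}\rd x<\infty$ (if $R\mapsto\int_{S_R}\varphi^2\mho_{\mu,\Sigma}^{-1}\rd S$ had $\liminf R\,h(R)>0$ the volume integral would diverge); and (ii) the individual terms $\int\mho_{\mu,\Sigma}\eta^2\xi\xi^{\rT}\rd x$ appearing in your pointwise expansion need not be finite under the stated hypotheses, so the cancellation should be carried out on balls $B_R$ before letting $R\to\infty$ along the good subsequence. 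Both are fixable and you flag the first; with that care the argument is complete.
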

\begin{proof}
The affine transformation $x\mapsto \mu + \sqrt{\Sigma} x$  of the integration variable  reduces (\ref{Dir}), without loss of generality,  to the case of the standard normal PDF in $\mR^n$ with  $\mu = 0$ and $\Sigma = I_n$.
 In this case,  (\ref{Dir}) is equivalent to the fulfillment of the ``scalar''
 inequality
\begin{equation}
\label{Dir1}
    \Bra
        T,\,
        \int_{\mR^n}
        \frac{\varphi' \varphi'^{\rT}}{\mho_{0,I_n}}\rd x
    \Ket_{\rF}
        \>
        \Tr T
        \int_{\mR^n} \frac{\varphi^2}{\mho_{0,I_n}}
        \rd x
\end{equation}
for any positive definite matrix $T:= (T_{jk})_{1\< j,k\< n}\in \mS_n$. Now, by introducing an auxiliary function
\begin{equation}
\label{psi}
    \psi:= \frac{\varphi}{\sqrt{\mho_{0,I_n}}}
\end{equation}
(which is square integrable together with its gradient $\psi'$) and using the identity $(\sqrt{\mho_{0,I_n}})' = -\frac{1}{2}x \sqrt{\mho_{0,I_n}}$, it follows that
$$
    \varphi' = (\sqrt{\mho_{0,I_n}}\psi)' = \sqrt{\mho_{0,I_n}}\Big(\psi' - \frac{1}{2}x \psi \Big).
$$
Substitution of this equation into (\ref{Dir1}) and integration by parts allows the  left-hand side of the inequality to be represented
as
\begin{equation}
\label{Dir2}
    \Bra
        T,\,
        \int_{\mR^n}
        \frac{\varphi' \varphi'^{\rT}}{\mho_{0,I_n}}\rd x
    \Ket_{\rF}
     =
    \int_{\mR^n}
    \Big\|
        \psi' - \frac{1}{2}x\psi
    \Big\|_T^2 \rd x
    =
    \Bra
        \psi,\,
        \fH(\psi)
    \Ket
    +
    \frac{1}{2}\Tr T \|\psi\|_2^2,
\end{equation}
where $\fH$ is the Hamiltonian of an auxiliary  quantum harmonic oscillator in the position space $\mR^n$ with the stiffness matrix $\frac{1}{2}T$ and mass matrix $\frac{1}{2}T^{-1}$:
\begin{equation}
\label{fH}
    \fH(\psi)
    :=
    \frac{1}{4}
    \|x\|_T^2
    \psi
    -\bra T, \psi''\ket_{\rF},
\end{equation}
with $\psi$ playing the role of a wave function. In (\ref{Dir2}), the divergence theorem has been combined with
the identity
$$
    \Big\|
        \psi' - \frac{1}{2}x\psi
    \Big\|_T^2
     =
    \frac{1}{2}
    \div \big(T((\psi^2)' - \psi^2 x)\big)
    +
    \psi \fH(\psi)
    +
    \frac{1}{2}
    \Tr T
    \psi^2
     =
    \div
    \left(
        \frac{\varphi T\varphi'}{\mho_{0, I_n}}
    \right)
    +
    \psi \fH(\psi)
    +
    \frac{1}{2}
    \Tr T
    \psi^2,
$$
and
use is made of (\ref{psi}) and the decay rate condition (\ref{phiphi}) in the case $\mu=0$ and $\Sigma = I_n$ being considered.
The ground state of the auxiliary oscillator is given by $\psi = \sqrt{\mho_{0,I_n}}$ and does not depend on the matrix $T$, with the ground energy being $\frac{1}{2}\Tr T$ in view of the eigenvalue property $\fH(\sqrt{\mho_{0,I_n}}) = \frac{1}{2}\Tr T\sqrt{\mho_{0,I_n}}$ for the corresponding stationary Schr\"{o}dinger equation. Hence, $\bra \psi, \fH(\psi)\ket \> \frac{1}{2}\Tr T \|\psi\|_2^2 $ for any function $\psi$, which, in combination with (\ref{Dir2}) and (\ref{psi}), leads to
$$
    \Bra
        T,\,
        \int_{\mR^n}
        \frac{\varphi' \varphi'^{\rT}}{\mho_{0,I_n}}\rd x
    \Ket_{\rF}
    \>
    \Tr T \|\psi\|_2^2
    =
    \Tr T
    \int_{\mR^n}
    \frac{\varphi^2}{\mho_{0,I_n}}
    \rd x,
$$
thus establishing (\ref{Dir1}) and (\ref{Dir}) due to arbitrariness of the matrix $T\succ 0$. The second assertion of the lemma follows from the fact that $\psi = \sqrt{\mho_{0,I_n}}$, as a ground state wave function of the Hamiltonian $\fH$ in (\ref{fH}), is unique  up to a constant factor, with the corresponding function $\varphi=\sqrt{\mho_{0,I_n}}\psi = \mho_{0,I_n}$ being the Gaussian PDF in view of (\ref{psi}). \end{proof}

We will now apply Theorem \ref{th:diss} and Lemma~\ref{lem:Dir} to the setting where $\mu$ and $\Sigma$ are evolved so as to remain the unique solution of the optimization problem (\ref{inf}) at every moment of time. It is assumed that $\mu_*\in \mR^n$ and $\Sigma_* \succcurlyeq  -i\Theta$, which deliver   the minimum, are continuously differentiable functions of time described by (\ref{muSigma*}) and (\ref{pmuSigma}). In this case, both $\d_{\mu}D_{\mu,\Sigma}(\mho)$ and $\d_{\Sigma}D_{\mu,\Sigma}(\mho)$ vanish at $\mu=\mu_*$ and $\Sigma= \Sigma_*$, and  the total  time derivative of the corresponding minimum $\chi^2$-divergence $D(\mho) = D_{\mu_*,\Sigma_*}(\mho)$  coincides with the partial time derivative in (\ref{chidot}) which reduces to
\begin{align}
\nonumber
    D(\mho)^{^\centerdot}
    & =
    \left(\d_t D_{\mu,\Sigma}(\mho)
    +
    \dot{\mu}_*^{\rT}
    \d_{\mu}D_{\mu,\Sigma}(\mho)
    +
    \Bra
        \dot{\Sigma}_*,
        \d_{\Sigma}D_{\mu,\Sigma}(\mho)
    \Ket_{\rF}
    \right)
    \Big|_{\mu= \mu_*, \Sigma = \Sigma_*}\\
\label{minchidot}
        &
        =
                \Bra
            BB^{\rT},\, \Sigma_*^{-1}
        \Ket_{\rF}
        (D(\mho) + 1)
        -
        \int_{\mR^n} \frac{|B^{\rT}\d_x\mho|^2}{\mho_{\mu_*,\Sigma_*}}\rd x
        +
        2
        \Bra
            \frac{\mho}{\mho_{\mu_*,\Sigma_*}},\,
            \fG(\mho)
        \Ket
     \<
            2
        \Bra
            \frac{\mho}{\mho_{\mu_*,\Sigma_*}},\,
            \fG(\mho)
        \Ket.
\end{align}
Some remarks are in order in regard to the inner product on the right-hand sides of the dissipation relations (\ref{chidot}) and (\ref{minchidot}). From (\ref{fG}), it follows that
\begin{align}
\nonumber
        \Bra
            \frac{\mho}{\mho_{\mu,\Sigma}},\,
            \fG(\mho)
        \Ket
        & =
        -2
        \int_{\mR^n}
        \frac{\mho(x)}{\mho_{\mu,\Sigma}(x)}
    \int_{\mR^d}
    \Pi(x,v)\mho(x-\Theta Z^{\rT}v)\rd v
    \rd x\\
\label{ttt}
    & = -2
        \int_{\mR^n\x \mR^d}
        \wt{\Pi}(x,v)
        \wt{\mho}(x) \wt{\mho}(x-\Theta Z^{\rT}v)
    \rd x
    \rd v ,
\end{align}
where the time argument of the QPDF $\mho$ is omitted for brevity. Here, 
$\wt{\mho}:= \frac{\mho}{\sqrt{\mho_{\mu,\Sigma}}}$ is an auxiliary function  whose $L^2$-norm is related to the $\chi^2$-divergence in (\ref{chi2}) as
\begin{equation}
\label{DDD}
    \|\wt{\mho}\|_2 = \sqrt{D_{\mu,\Sigma}(\mho) + 1},
\end{equation}
and
\begin{equation}
\label{Pit}
    \wt{\Pi}(x,v)
    :=
    \Pi(x,v)\sqrt{\frac{\mho_{\mu,\Sigma}(x-\Theta Z^{\rT}v)}{\mho_{\mu,\Sigma}(x)}}
    =
    \Pi(x,v)
    \re^{\frac{1}{2}(x-\mu)^{\rT} \Sigma^{-1} \Theta Z^{\rT} v  - \frac{1}{4} \|\Theta Z^{\rT} v\|_{\Sigma^{-1}}^2}.
\end{equation}
However, the absence of decay in the kernel function $\Pi(x,v)$ as $x \to \infty$ in (\ref{Pi}) makes the following upper bound (which employs only the Cauchy-Bunyakovsky-Schwarz inequality and (\ref{DDD})) for the  right-hand side of (\ref{ttt}) ineffective:
\begin{align*}
    \Big|
        \int_{\mR^n\x \mR^d}
        \wt{\Pi}(x,v)
        \wt{\mho}(x) \wt{\mho}(x-\Theta Z^{\rT}v)
    \rd x
    \rd v
    \Big|
    & \<
        \int_{\mR^d}
        \|
        \wt{\Pi}(\cdot,v)
        \|_{\infty}
        \int_{\mR^n}
        \big|
        \wt{\mho}(x)
        \wt{\mho}(x-\Theta Z^{\rT}v)
        \big|
        \rd x
        \rd v\\
        & \<
        \int_{\mR^d}
        \|\wt{\Pi}(\cdot,v)\|_{\infty}
        \rd v\,
        \big(
            D_{\mu,\Sigma}(\mho) + 1
        \big),
\end{align*}
because, in view of (\ref{Pit}), $\|\wt{\Pi}(\cdot,v)\|_{\infty} := \sup_{x\in \mR^n} |\wt{\Pi}(x,v)|=+\infty$ for any $v\in \mR^d\setminus \{0\}$  such that $\wt{H}_0(v)\ne 0$. Therefore, nontrivial estimates for (\ref{ttt}) should be based on a more subtle analysis using the information on smoothness of the QPDF $\mho$ as mentioned in Section \ref{sec:Hlin}.

\section{Conclusion}\label{sec:conc}

We have considered a class of open quantum stochastic systems, whose dynamic variables satisfy CCRs and are governed by Markovian Hudson-Parthasa\-rathy QSDEs,  with the Hamiltonian and coupling operators represented in the Weyl quantization form.  In extending the Wigner-Moyal approach from isolated systems to open quantum stochastic systems, we have obtained an IDE for the evolution of the QCF which encodes the moment dynamics of the system variables. A related IDE, which governs the QPDF dynamics,  coincides with the classical FPKE in the case of open quantum harmonic oscillators and becomes the Moyal equation for isolated quantum systems. For a class of open quantum systems with linear system-field coupling and a nonquadratic Hamitonian, the IDE for the QPDF consists of an FPKE part and a Moyal term, which leads to non-Gaussian dynamics and negative values of the QPDF. The smoothness of fundamental solutions of this IDE needs a separate research into an appropriate counterpart of H\"{o}rmander conditions.
We have discussed an approximate computation of invariant QPDFs in the presence of Gaussian-shaped potentials
and the deviation of the system from Gaussian quantum states in terms of the $\chi^2$-divergence 
of the QPDF.
The results of the paper may find applications to different aspects of relaxation dynamics in open quantum stochastic systems, such as  the existence and phase-space representation of invariant states and the rates of convergence to them.


\begin{thebibliography}{99}
\bibitem{B_1983}
V.P.Belavkin, On the theory of controlling observable quantum
systems, \emph{Autom. Rem. Contr.}, vol. 44, no. 2, 1983, pp. 178--188.
\bibitem{B_2010}
V.P.Belavkin, Noncommutative dynamics and generalized master equations, \emph{Math. Notes}, vol. 87, no. 5, 2010, pp. 636--653.

\bibitem{CT_1991}
T.M.Cover, and J.A.Thomas, \emph{Elements of Information Theory},
Wiley, New York, 1991.
\bibitem{CH_1971}
C.D.Cushen, and R.L.Hudson, A quantum-mechanical central limit theorem,
\emph{J. Appl. Prob.}, vol. 8, no. 3, 1971, pp. 454--469.
\bibitem{DD_1970}
C.Doleans-Dade, Quelques applications de la formule de changement
de variables pour les semimartingales, \emph{Z. Wahrscheinlichkeitstheorie
verw.}, vol. 16, 1970, pp. 181--194.
\bibitem{DDJW_2006}
C.D'Helon, A.C.Doherty, M.R.James, and S.D.Wilson,
Quantum risk-sensitive control,
Proc. 45th IEEE CDC,
San Diego, CA, USA, December 13--15, 2006, pp. 3132--3137.
\bibitem{FM_2004}
P.A.Frantsuzov, and V.A.Mandelshtam,
Quantum statistical mechanics with Gaussians:
equilibrium properties of van der Waals clusters,
\emph{J. Chem. Phys.}, vol. 121, no. 19, 2004, pp. 9247--9256.
\bibitem{EB_2005}
S.C.Edwards, and V.P.Belavkin,
Optimal quantum filtering and
quantum feedback control,
arXiv:quant-ph/0506018v2, August 1,  2005.
\bibitem{E_1998}
L.C.Evans,
\emph{Partial Differential Equations},
American Mathematical Society, Providence, 1998.
\bibitem{F_1989}
G.B.Folland, \emph{Harmonic Analysis in Phase Space}, Princeton University Press, Princeton, 1989.

\bibitem{GZ_2004}
C.W.Gardiner, and P.Zoller,
\emph{Quantum Noise}.
Springer, Berlin, 2004.
\bibitem{GKS_1976}
V.Gorini, A.Kossakowski, E.C.G.Sudarshan, Completely positive dynamical semigroups of N-level systems,
\emph{J. Math. Phys.}, vol. 17, no. 5, 1976, pp. 821--825.
\bibitem{G_1999}
J.E.Gough, Dissipative canonical flows in classical and quantum
mechanics, \emph{J. Math. Phys.}, vol. 40, no. 6, 1999, pp. 2805--2815.
\bibitem{GBS_2005}
J.Gough, V.P.Belavkin, and O.G.Smolyanov,
Hamilton-Jacobi-Bellman equations for
quantum optimal feedback control,
\emph{J. Opt. B: Quantum Semiclass. Opt.}, vol. 7, 2005, pp. S237--S244.
\bibitem{GJ_2009}
J.Gough, and M.R.James,
Quantum feedback networks: Hamiltonian
formulation,
\emph{Commun. Math. Phys.},  vol. 287, 2009, pp. 1109--1132.
\bibitem{GRS_2014}
J.Gough, T.S.Ratiu, and O.G.Smolyanov,
Feynman, Wigner, and Hamiltonian structures describing the dynamics of open quantum systems,
\emph{Doklady Maths.}, vol. 89, no. 1, 2014, pp. 68--71.
\bibitem{GRS_2015}
J.Gough, T.S.Ratiu, and O.G.Smolyanov,
Wigner measures and quantum control,
\emph{Doklady Maths.}, vol. 91, no. 2, 2015, pp. 199--203.
\bibitem{G_2015}
J.Gough, Symplectic noise and the classical analog of the Lindblad generator.
Does the regression hypothesis also fail in classical physics? \emph{J. Stat. Phys.},
vol. 160, no. 6, 2015, pp. 1709--1720.
\bibitem{G_2008}
R.M.Gray, \emph{Entropy and Information Theory}, Springer, New York,
2008.
\bibitem{H_2010}
B.J.Hiley,
On the relationship between the Wigner-Moyal and Bohm approaches to quantum mechanics: a step to a more general theory?,
\emph{Foundat. Phys.}, vol. 40, no. 4, 2010, pp. 356--367.
\bibitem{H_1991}
A.S.Holevo, Quantum stochastic calculus,
\emph{J. Math. Sci.}, vol. 56, no. 5, 1991, pp. 2609--2624.

\bibitem{H_1996}
A.S.Holevo, Exponential formulae in quantum stochastic calculus,
\emph{Proc. Roy. Soc. Edinburgh}, vol. 126A, 1994, pp. 375--389.
\bibitem{H_2001}
A.S.Holevo, \emph{Statistical Structure of Quantum Theory}, Springer, Berlin, 2001.
\bibitem{HJ_2007}
R.A.Horn, and C.R.Johnson,
\emph{Matrix Analysis},
Cambridge
University Press, New York, 2007.
\bibitem{H_1967}
L.H\"{o}rmander,  Hypoelliptic second order differential equations, \emph{ Acta Math.},
vol. 119, no. 1, 1967, pp. 147--171

\bibitem{H_1974}
R.L.Hudson, When is the Wigner quasi-probability density non-negative?
\emph{Rep. Math. Phys.}, vol. 6, no. 2, 1974, pp. 249--252.
\bibitem{HP_1984}
R.L.Hudson, and K.R.Parthasarathy,
Quantum Ito's formula and stochastic evolutions.
\emph{Commun. Math. Phys.}, vol. 93,  1984, pp. 301--323.
\bibitem{JK_1998}
K.Jacobs, and P.L.Knight,
Linear quantum trajectories: applications to continuous projection measurements,
\emph{Phys. Rev. A}, vol.  57, no. 4, 1998, pp. 2301--2310.

\bibitem{J_2005}
M.R.James, A quantum Langevin formulation of risk-sensitive optimal control,
\emph{J. Opt. B}, vol. 7, 2005, pp. S198--S207.
\bibitem{JG_2010}
M.R.James, and J.E.Gough,
Quantum dissipative systems and feedback control design by interconnection,
 \emph{IEEE Trans. Autom. Contr.},  vol. 55, no. 8, pp. 1806--1821.





\bibitem{JNP_2008}
M.R.James, H.I.Nurdin, and I.R.Petersen,
$H^{\infty}$ control of
linear quantum stochastic systems,
\emph{IEEE Trans.
Automat. Contr.}, vol. 53, no. 8, 2008, pp. 1787--1803.
%
\bibitem{KS_1991}
I.Karatzas, and S.E.Shreve,
\emph{Brownian Motion and Stochastic Calculus}, 2nd Ed.,
Springer, New York, 1991.
\bibitem{KS_2008}
J.Kupsch, and O.G.Smolyanov,
Exact master equations describing reduced dynamics of the Wigner function,
\emph{J. Math. Sci.}, vol. 150, no. 6, 2008, pp.  2598--2608.
\bibitem{L_1976}
G.Lindblad, On the generators of quantum dynamical semigroups,
\emph{Comm. Math. Phys.}, vol. 48, 1976, pp. 119--130.
\bibitem{LS_2001}
R.S.Liptser, and A.N.Shiryaev,
\emph{Statistics of Random Processes: Applications}, Springer, Berlin, 2001.
\bibitem{MWPY_2014}
S.Ma, M.J.Woolley, I.R.Petersen, and N.Yamamoto,
Preparation of pure Gaussian states via cascaded quantum systems,
arXiv:1408.2290 [quant-ph], 11 Aug 2014.
\bibitem{M_1997}
P.Malliavin,
\emph{Stochastic Analysis},
Springer, Berlin, 1997.
\bibitem{Mar_1988}
G.I.Marchuk, \emph{Splitting Methods}, Nauka, Moscow, 1988.
%
\bibitem{MD_2015}
K.-P.Marzlin, and S.Deering,
The Moyal equation for open quantum
systems, \emph{J. Phys. A: Math. Theor.}, vol.  48, 2015, pp. 205301(13).
\bibitem{M_1998}
E.Merzbacher,
\emph{Quantum Mechanics}, 3rd Ed.,
Wiley, New York, 1998.
\bibitem{MJ_2012}
Z.Miao, and M.R.James,
Quantum observer for linear quantum
stochastic systems, Proc. 51st IEEE Conf. Decision Control, Maui,
Hawaii, USA, December 10-13, 2012, pp. 1680--1684.
\bibitem{M_1929}
P.M.Morse,
Diatomic molecules according to the wave mechanics. II. Vibrational levels,
\emph{Phys. Rev.}, vol. 34, 1929, pp. 57--64.

\bibitem{M_1949}
J. E. Moyal, Quantum mechanics as a statistical theory, \emph{Proc. Cam.
Phil. Soc.}, vol. 45, 1949, pp. 99--124.
%
\bibitem{NJP_2009}
H.I.Nurdin, M.R.James, and I.R.Petersen,
Coherent quantum LQG
control,
\emph{Automatica}, vol.  45, 2009, pp. 1837--1846.
\bibitem{PAMGUJ_2014}
Y.Pan, H.Amini, Z.Miao, J.Gough, V.Ugrinovskii, and M.R.James,
Heisenberg picture approach to the stability of quantum
Markov systems,
\emph{J. Math. Phys.}, vol. 55, 2014, pp. 062701--1--16.

\bibitem{P_1992}
K.R.Parthasarathy,
\emph{An Introduction to Quantum Stochastic Calculus},
Birk\-h\"{a}user, Basel, 1992.
\bibitem{P_2010}
K.R.Parthasarathy,
What is a Gaussian state?
\emph{Commun. Stoch. Anal.}, vol. 4, no. 2, 2010, pp. 143--160.

\bibitem{PS_1972}
K.R.Parthasarathy, and K.Schmidt,
\emph{Positive Definite Kernels, Continuous Tensor Products, and Central Limit Theorems of Probability Theory},
Springer-Verlag, Berlin, 1972.

\bibitem{PUJ_2012}
 I.R.Petersen, V.Ugrinovskii, and M.R.James, Robust stability of uncertain linear quantum systems, \emph{Phil. Trans. Royal Soc. A}, vol. 370, no. 1979,  2012,  pp. 5354--5363.
\bibitem{R_1961}
A.Renyi, On measures of entropy and information, Proc. 4th Berkeley
Sympos. Math. Statist. Prob., I, 1961, pp. 547--561.
\bibitem{SP_2012}
A.J.Shaiju, and I.R.Petersen,
A frequency domain condition for the physical
realizability of linear quantum systems,
\emph{IEEE Trans. Automat. Contr.}, vol. 57, no. 8, 2012, pp. 2033--2044.
\bibitem{SVP_2014}
A.Kh.Sichani, I.G.Vladimirov, and I.R.Petersen,
Robust mean square stability of open quantum stochastic systems
with Hamiltonian perturbations in a Weyl quantization form,
Proc. Australian Control Conference, 2014, Canberra, Australia, 17-18 November 2014, pp. 83--88.
\bibitem{S_1968}
G.Strang, On the construction and comparison of difference schemes, \emph{SIAM J. Numer. Anal.}, vol. 5, no. 3,
1968, pp. 506--517.
\bibitem{S_2008}
D.W.Stroock,
Partial differential equations for probabilists,
Cambridge University Press, Cambridge, 2008.
\bibitem{VFGE_2012}
V.Veitch, C.Ferrie, D.Gross, and J.Emerson,
Negative quasi-probability as a resource for quantum computation,
\emph{New J. Phys.}, vol. 14, 2012, pp. 113011(1)--113011(21).
%
\bibitem{V_1971}
V.S.Vladimirov,
\emph{Equations of Mathematical Physics},
M.Dekker,
New York, 1971.
\bibitem{V_2002}
V.S.Vladimirov,
\emph{Methods of the Theory of Generalized Functions},
Taylor \& Francis, London, 2002.
\bibitem{VP_2012a}
I.G.Vladimirov, and I.R.Petersen,
Gaussian stochastic  linearization for open quantum systems
using quadratic approximation of Hamiltonians, 	Proc. MTNS 2012, Melbourne, Victoria, 9-13 July 2012, {\tt https://fwn06.housing.rug.nl/mtns/?page\_id=13},  (preprint:  	arXiv:1202.0946v1 [quant-ph], 5 February 2012).
\bibitem{VP_2012b}
I.G.Vladimirov, and I.R.Petersen,
Risk-sensitive dissipativity of linear quantum stochastic systems
under Lur'e type perturbations of Hamiltonians, 	Proc. Australian Control Conference, Sydney, Australia, 15-16 November 2012, pp. 247--252. 
\bibitem{VP_2012c}
I.G.Vladimirov, and I.R.Petersen,
Characterization and moment stability analysis of quasilinear quantum stochastic systems with quadratic coupling to external fields, Proc. 51st Conference on Decision and Control, IEEE, Maui, Hawaii, USA, 10-13 December  2012, pp. 1691--1696.
\bibitem{VP_2013a}
I.G.Vladimirov, and I.R.Petersen,
A quasi-separation principle and Newton-like scheme for coherent quantum LQG control,
\emph{Syst. Contr. Lett.}, vol. 62, no. 7, 2013, pp. 550--559.
\bibitem{VP_2013b}
I.G.Vladimirov, and I.R.Petersen,
Coherent quantum filtering for physically realizable linear quantum plants,
Proc. European Control Conference, IEEE, Zurich, Switzerland, 17-19 July 2013,  pp. 2717--2723.
\bibitem{W_1967}
R.M.Wilcox,
Exponential operators and parameter differentiation in quantum physics,
\emph{J. Math. Phys.}, vol.  8, no. 4, 1967, pp. 962--982.
\bibitem{W_1972}
J.C.Willems, Dissipative dynamical systems. Part I: general
theory, Part II: linear
systems with quadratic supply rates, \emph{Arch. Rational Mech. Anal.}, vol. 45, no. 5, 1972, pp.
321--351, 352--393.
\bibitem{YB_2009}
N.Yamamoto, and L.Bouten,
Quantum risk-sensitive estimation and robustness,
\emph{IEEE Trans. Automat. Contr.}, vol. 54, no. 1, 2009, pp. 92--107.
\bibitem{Y_2009}
M.Yanagisawa, Non-Gaussian state generation from linear elements
via feedback, \emph{Phys. Rev. Lett.}, vol. 103, no. 20, 2009, pp. 203601--1--4.

\bibitem{Y_1980}
K.Yosida, \emph{Functional Analysis}, 6th Ed., Springer, Berlin, 1980.
\bibitem{ZJ_2011}
G.Zhang, and M.R.James, On the response of linear quantum stochastic
systems to single-photon inputs and pulse shaping of photon
wave packets, Proc. Australian Control Conference, Melbourne, 10-11
November 2011, pp. 62--67.
\end{thebibliography}
\end{document}